\newcommand{\ind}{\perp\!\!\!\!\perp}
\def\ind{\perp\!\!\!\perp}
\newcommand{\var}{\text{var}}
\newcommand{\Pb}{\mathbb{P}}
\newcommand{\E}{\mathbb{E}}
\DeclareSymbolFont{bbold}{U}{bbold}{m}{n}
\DeclareSymbolFontAlphabet{\mathbbold}{bbold}
\newtheorem{theorem}{Theorem}
\newtheorem{lemma}{Lemma}
\newtheorem{corollary}{Corollary}
\newtheorem{proposition}{Proposition}
\newtheorem{algorithm}{Algorithm}
\theoremstyle{definition}
\theoremstyle{remark}
\newtheorem{remark}{Remark}
\let\hat\widehat
\definecolor{dkgreen}{rgb}{0,0.6,0}
\definecolor{gray}{rgb}{0.5,0.5,0.5}
\definecolor{mauve}{rgb}{0.58,0,0.82}
\definecolor{brightblue}{HTML}{00BFC4}
\newcommand\blfootnote[1]{%
  \begingroup
  \renewcommand\thefootnote{}%
  \footnote{#1}%
  \addtocounter{footnote}{-1}%
  \endgroup
}
\pgfplotsset{compat=1.18}
\begin{document}

\def\spacingset#1{\renewcommand{\baselinestretch}%
{#1}\small\normalsize} \spacingset{1}

\raggedbottom
\allowdisplaybreaks[1]

%%%%%%%%%%%%%%%%%%%%%%%%%%%%%%%%%%%%%%%%%%%%%%%%%%%%%%%%%%%%%%%%%%%%%%%%%%%%%%

  \title{\vspace*{-.4in} {Incremental effects for continuous exposures}}
   \author{\\ $\text{Kyle Schindl}^{\dagger}$, $\text{Shuying Shen}^{\ddag}$, $\text{Edward H. Kennedy}^{\ddag}$ \\ \\
    $^{\dag}$Department of Statistics \\
    Iowa State University \\
    \texttt{kschindl@iastate.edu} \\ \\ 
    $^\ddag$Department of Statistics \& Data Science \\
    Carnegie Mellon University \\
    \texttt{shuyings@andrew.cmu.edu} \\
    \texttt{edward@stat.cmu.edu}
\date{}
    }
    
  \maketitle
  \thispagestyle{empty}

   \blfootnote{Accompanying \texttt{R} code is available via \href{https://github.com/kyleschindl/incremental-effects-continuous-exposures}{github.com/kyleschindl/incremental-effects-continuous-exposures}}
  
\begin{abstract}
Causal inference problems often involve continuous treatments, such as dose, duration, or frequency. However, identifying and estimating standard dose-response estimands requires that everyone has some chance of receiving any level of the exposure (i.e., positivity). To avoid this assumption, we consider stochastic interventions based on exponentially tilting the treatment distribution by some parameter $\delta$ (an incremental effect); this increases or decreases the likelihood a unit receives a given treatment level. We derive the efficient influence function and semiparametric efficiency bound for these incremental effects under continuous exposures. We then show estimation depends on the size of the tilt, as measured by $\delta$. In particular, we derive new minimax lower bounds illustrating how the best possible root mean squared error scales with an effective sample size of $n / \delta$, instead of $n$. Further, we establish new convergence rates and bounds on the bias of double machine learning-style estimators. Our novel analysis gives a better dependence on $\delta$ compared to standard analyses by using mixed supremum and $L_2$ norms. Finally, we define a ``reflected'' exponential tilt around any interior point and show that taking $\delta \to \infty$ yields a new estimator of the dose-response curve across the treatment support.
\end{abstract}

\noindent%
{\it Keywords: dose-response, minimax, nonparametrics, positivity, stochastic intervention.} 
\bigskip 

\section{Introduction}

Continuous treatments or exposures are common in practice, such as medication dosages, duration of therapy, or frequency of a medical procedure. A standard approach to understanding the effects of these continuous exposures is through dose-response curves, which characterize the mean counterfactual outcomes if all individuals had received a given treatment level \citep{diaz2013targeted, kennedy2017cte, semenova2021debiased}. However, identification of the dose-response curve requires strong positivity assumptions, meaning that all subjects must have a positive treatment density at each possible treatment level. This assumption is often unrealistic in practice. For example, \cite{haneuserotnitzky2013} consider surgical operating time as a continuous treatment. They point out how interventions that simultaneously prolong surgery for some but shorten it for others may be impractical or unrealistic; namely, the counterfactual world where all surgeries last the same amount of time may not provide useful information for practitioners. Consequently, there has been a recent push to develop causal inference methods for evaluating dynamic or stochastic interventions, which can be more realistic and capture more feasible changes to exposure distributions. Recent work in this area includes \cite{munoz2012population, haneuserotnitzky2013, dudik2014, young2014identification,  kennedy2019nonparametric, diaz2020causal}.

Some particularly relevant works are \cite{munoz2012population},  \cite{haneuserotnitzky2013}, and \cite{young2014identification}, who considered dynamic interventions with continuous treatments. However, the methods proposed in these papers still require some nontrivial positivity assumptions. For example, \cite{munoz2012population} consider stochastic shift interventions, where the probability of receiving some treatment level $a$ depending on the covariates, $X$, is shifted by some chosen function $\delta$; i.e., $\pi(a - \delta(X) \mid X)$ where $\pi(\cdot)$ is the probability density function. \cite{haneuserotnitzky2013} consider modified treatment policies where the observed treatment is shifted from $A$ to $A+\delta$. These proposals still lead to positivity violations if the support of $A$ is bounded, or has any holes, where the density is zero. Instead, in this paper we consider a generalization of the incremental effects defined for binary variables in \cite{kennedy2019nonparametric}, where we exponentially tilt the treatment density.

Notably, exponentially tilted intervention distributions have also been considered in the context of mediation analysis by \cite{diaz2020causal}; we make several complementary extensions to their work. First, throughout their analysis they implicitly make the assumption that $\delta$ is finite and bounded above by a constant. In contrast in this paper we allow the size of the exponential tilt $\delta$ to be unbounded, and illustrate how error bounds depend quite severely on $\delta$.  For instance, we show that the variance of the efficient influence function (i.e., the classical nonparametric efficiency bound) is bounded below by $\delta$, illustrating how $\sqrt{n}$ estimation is not feasible for large $\delta$. Indeed, we go on to derive a new minimax lower bound, showing that our incremental effects cannot be estimated at rates faster than $1/\sqrt{n/\delta}$, yielding an effective sample size of $n/\delta$ rather than $n$. Further, we provide a new analysis of the convergence rate and asymptotic distribution of a one-step double machine learning estimator, which explicitly depends on $\delta$. Our analysis opens up the possibility of estimating incremental effects under arbitrary choices of $\delta$, even possibly infinite $\delta$ values. In fact, we show that as $\delta \to \infty$ the incremental effects estimator yields a new estimator of the dose-response curve at the edge of the support, which we study in detail.

The remainder of the paper is organized as follows. In \cref{notation_section} we define all relevant notation and definitions. In \cref{incremental_effects_section} we provide a brief review of incremental effects and establish several useful properties of exponential tilts. In \cref{efficiency_theory_section} we establish the efficient influence function for the incremental effect, and derive the nonparametric efficiency bound, along with lower/upper bounds thereof, explicitly in terms of $\delta$. Finally we derive a new minimax lower bound for estimation of the incremental effect, for arbitrary $\delta$. In \cref{estimation_inference_section} we discuss estimation and inference, and derive new convergence rates that depend on the tilting parameter $\delta$. In \cref{dose_response_section} we consider the setting where $\delta \to \infty$ and show that this yields a novel estimator of the dose-response curve at the edge of the support. Finally, in \cref{experiments_and_simulations} we apply our methodology to estimate the incremental effect of political advertisements on campaign contributions. In \cref{discussion_section} we discuss our results and directions for future work.

\section{Setup \& Notation} \label{notation_section}

We suppose we observe a sample $(Z_1, \ldots, Z_n)$ of independent and identically distributed observations from some distribution $\mathbb{P}$, where $Z_i = (X_i, A_i, Y_i)$ for $X_i \in \mathbb{R}^d$ a vector of covariates, $A_i \in \mathbb{R}$ the treatment of interest, and $Y_i$ the observed outcome. We often use $\mathcal{A}$ for the set of all possible treatment values and $\mathcal{X}$ for all possible covariate values. For simplicity, throughout the paper we assume the support of $A$ is $[0, 1]$, although this could be replaced by any bounded interval, with a simple rescaling. We use $\pi(a \mid x)$ to denote the conditional density of $A$ given $X = x$ and $\mu(x, a) = \mathbb{E}(Y \mid X=x, A=a)$ to denote the regression function. Throughout the paper we use potential outcomes so that $Y^a$ denotes the outcome that would have been observed under $A = a$. We define $||f||^2_2 = \int f(z)^2 d \mathbb{P}(z)$ to be the squared $L_2(\mathbb{P})$ norm, $\mathbb{P}_n(f) = \frac{1}{n} \sum^n_{i=1} f(Z_i)$ to be the empirical measure, and $ ||f||^2_{L^2_x, L^\infty_a} = \int  (\text{sup}_a  \, | f(x, a) | )^2 d \mathbb{P}(x)$ to be a mixed $L_2(\mathbb{P})$-sup norm. Finally, to avoid breaking discussion of results into tedious sub-cases, throughout the paper we assume $\delta > 0$, although all of our results generalize to $\delta < 0$ as well. Often this means we use $\delta$ instead of $|\delta|$, or only consider $\delta \to \infty$ in the statement of results instead of $\delta \to -\infty$ as well. However, in the proofs of each result, when relevant, we include a discussion of how the result holds in the case of $\delta < 0$.

\section{Incremental Effects} \label{incremental_effects_section}

For binary treatments, incremental effects characterize the outcome distribution if every subject's odds of receiving treatment were multiplied by some factor, instead of deterministically setting the treatment level to a specific value. Proposed by \cite{kennedy2019nonparametric} for binary (and possibly time-varying) treatments, the interventional propensity score is
\begin{align*}
     q_\delta(1 \mid x) = \frac{\exp(\delta) \pi(1 \mid x)}{\exp(\delta) \pi(1 \mid x) + 1-\pi(1 \mid x)}
\end{align*}
where $\delta \in (-\infty, \infty)$. Interpretation of this intervention under binary treatments is simple and straightforward, as $\exp(\delta)$ can be written as an odds ratio:
\begin{align*}
    \exp(\delta) = \frac{q_\delta(1 \mid x)/(1 - q_\delta(1 \mid x))}{\pi(1 \mid x)/(1 - \pi(1 \mid x))} = \frac{\text{odds}_q(A = 1 \mid X = x)}{\text{odds}_\pi(A = 1 \mid X = x)}.
\end{align*}
or alternatively as a difference in log-likelihood ratios after taking the natural logarithm. Consequently, the intervention corresponds to multiplying every subject's odds of treatment by $\exp(\delta)$. For example, setting $\exp(\delta) = 3/2$ corresponds to increasing the odds by 50\%. 

Recently, \cite{diaz2020causal} cleverly pointed out that the incremental propensity score for binary treatments is equivalent to an exponential tilt of the observational treatment distribution. This leads to a  natural version of incremental effects for continuous treatments. Namely, for $\delta \in \mathbb{R}$ the exponential tilt of the observational treatment density $\pi(a \mid x)$ is given by
\begin{align} \label{q_definition}
    q_\delta(a \mid x) = \frac{\exp(\delta a) \pi(a \mid x)}{\int \exp(\delta t) \pi(t \mid x) \, dt}. 
\end{align}
 \cite{diaz2020causal} utilize this in the context of mediation analysis, where direct and indirect effects relative to a mediator are of interest. In this paper, we leverage this generalized definition for mean outcomes under exponentially tilted intervention distributions, which we term incremental effects for continuous exposures. Notably, in our analysis \textit{we do not treat $\delta$ as fixed}, thereby differentiating our results from \cite{diaz2020causal}; many fundamental results (including asymptotic normality and rates of convergence) will be impacted when allowing $\delta$ to be unbounded. \cref{Exp_tilted_density} illustrates numerous exponentially tilted versions of a particular observational treatment density (shown in black). Note that negative tilts push toward a point mass at the left end of the support, while positive tilts push toward a point mass at the right end of the support.

\begin{figure}[h]
    \centering
\begin{tikzpicture}
  \begin{groupplot}[
    group style={
      group size=2 by 1,
      horizontal sep=0.75cm,
    },
    width=0.53\textwidth,
    height=0.465\textwidth,
    grid=both,
    grid style={line width=0.1pt, draw=gray!20, opacity=0.5},
    ymax=10,
    xlabel={$a$},
    ylabel={$q_\delta(a \mid x)$},
        ylabel style={yshift=-0.3cm},
    legend cell align=left,
        xtick pos=bottom,
    ytick pos=left,
    legend image post style={sharp plot,-, very thick},
  ]
   \definecolor{ggred}{HTML}{fb2424}
  % \definecolor{ggred}{HTML}{ff9e9e}
  \definecolor{brightblue}{HTML}{00BFC4}

  \nextgroupplot[title={$\delta < 0$},   legend style={
  at={(0.975,0.975)},
        anchor=north east
    }]
  \pgfplotstableread[col sep=comma]{files/delta_neg.csv}\datatableneg

  \addplot[color=lightgray, opacity=0.75,forget plot] table[x="a", y="delta_3", col sep=comma] {\datatableneg};
  \addplot[color=lightgray, opacity=0.75,forget plot] table[x="a", y="delta_4", col sep=comma] {\datatableneg};
  \addplot[color=lightgray, opacity=0.75,forget plot] table[x="a", y="delta_5", col sep=comma] {\datatableneg};
  \addplot[color=lightgray, opacity=0.75,forget plot] table[x="a", y="delta_8", col sep=comma] {\datatableneg};
  \addplot[color=lightgray, opacity=0.75,forget plot] table[x="a", y="delta_10", col sep=comma] {\datatableneg};
  \addplot[color=lightgray, opacity=0.75,forget plot] table[x="a", y="delta_13", col sep=comma] {\datatableneg};
  \addplot[color=lightgray, opacity=0.75,forget plot] table[x="a", y="delta_16", col sep=comma] {\datatableneg};
  \addplot[color=lightgray, opacity=0.75,forget plot] table[x="a", y="delta_20", col sep=comma] {\datatableneg};
  \addplot[color=lightgray, opacity=0.75,forget plot] table[x="a", y="delta_30", col sep=comma] {\datatableneg};

    \addplot[color=black, thick] table[x="a", y="density", col sep=comma] {\datatableneg};
  \addlegendentry{$\delta = 0$}
    \addplot[color=lightgray, opacity=0.75] table[x="a", y="delta_2", col sep=comma] {\datatableneg};
  \addlegendentry{$\delta \in (0, -40)$}
  \addplot[color=brightblue, ultra thick] table[x="a", y="delta_40", col sep=comma] {\datatableneg};
  \addlegendentry{$\delta = -40$}

  \nextgroupplot[title={$\delta > 0$}, ylabel=\empty, 
  legend style={at={(0.025,0.975)},
        anchor=north west
    }]
  \pgfplotstableread[col sep=comma]{files/delta_pos.csv}\datatablepos

  \addplot[color=lightgray, opacity=0.75,forget plot] table[x="a", y="delta_3", col sep=comma] {\datatablepos};
  \addplot[color=lightgray, opacity=0.75,forget plot] table[x="a", y="delta_4", col sep=comma] {\datatablepos};
  \addplot[color=lightgray, opacity=0.75,forget plot] table[x="a", y="delta_5", col sep=comma] {\datatablepos};
  \addplot[color=lightgray, opacity=0.75,forget plot] table[x="a", y="delta_8", col sep=comma] {\datatablepos};
  \addplot[color=lightgray, opacity=0.75,forget plot] table[x="a", y="delta_10", col sep=comma] {\datatablepos};
  \addplot[color=lightgray, opacity=0.75,forget plot] table[x="a", y="delta_13", col sep=comma] {\datatablepos};
  \addplot[color=lightgray, opacity=0.75,forget plot] table[x="a", y="delta_16", col sep=comma] {\datatablepos};
  \addplot[color=lightgray, opacity=0.75,forget plot] table[x="a", y="delta_20", col sep=comma] {\datatablepos};
  \addplot[color=lightgray, opacity=0.75,forget plot] table[x="a", y="delta_30", col sep=comma] {\datatablepos};
\addplot[color=black, thick] table[x="a", y="density", col sep=comma] {\datatablepos};
    \addlegendentry{$\delta = 0$}
  \addplot[color=lightgray, opacity=0.75] table[x="a", y="delta_2", col sep=comma] {\datatablepos};
     \addlegendentry{$\delta \in (0, 40)$}
  \addplot[color=ggred, ultra thick] table[x="a", y="delta_40", col sep=comma] {\datatablepos};
   \addlegendentry{$\delta = 40$}

    \end{groupplot}
\end{tikzpicture}
    \caption{Exponentially tilted densities, where $\delta = 0$ represents no tilt.}
    \label{Exp_tilted_density}
\end{figure}

\subsection{Some History}

Exponential tilts have a long history. They appear to have originated in financial mathematics and the actuarial sciences \citep{escher1932probability}, where they are referred to as Esscher transforms (for examples see \citet{gerber1993option, elliott2005option} and references therein). They are also central in importance sampling \citep{siegmund1976importance} and large deviations theory \citep{ellis2007entropy}, and are employed in many varied contexts in statistics, from the bootstrap to asymptotic approximation to density estimation, for example \citep{efron1981nonparametric, field1982small, vernon_johns_1988, efron1996using, bayes_emp_2005, qin1999empirical, rare_event}. Also see \citet{maity2022understanding} for connections in machine learning, e.g., distribution shift. In causal inference, exponential tilts have been used in sensitivity analyses by \citet{robins2000sensitivity, franks2020flexible, scharfstein2021semiparametric}, but in a very different context from what we consider here (there they parameterize the extent of unmeasured confounding, rather than an intervention distribution). Notably, \cite{rakshit2024localeffectscontinuousinstruments}  recently proposed using exponential tilts to estimate local average treatment effects with continuous instruments while avoiding the positivity assumption, and \cite{schindl2025causalgeodesycounterfactualestimation} consider exponential tilts as a path of stochastic interventions between distributions on a statistical manifold.

\begin{remark}
As far as we know, \cite{diaz2020causal} were the first to propose causal effects based on exponential tilts of a generic (possibly continuous) observational treatment distribution, but our work complements and adds to theirs in substantial ways. Crucially, we characterize errors as a function of the shift parameter $\delta$, rather than treating $\delta$ as a hidden constant, which substantially changes the problem, e.g., allowing dose-response estimation and showing how convergence rates are altered quite severely. A detailed comparison of our work and theirs can be found in \cref{estimation_inference_section}.
\end{remark}

\subsection{Interpretation}

For binary treatments, the tilt parameter $\delta$ is simply a log odds ratio or, equivalently, a difference in log-likelihood ratios. When $A$ is continuous, there is an analogous albeit less simple interpretation, generalizing the binary case. Namely, here $\delta$ is a rate of change in log-likelihood ratios, rather than a difference, i.e.,
\begin{align*}
    \delta = \frac{\partial}{\partial a} \left\{ \text{log}\left(\frac{q_\delta(a \mid x)}{\pi(a \mid x)} \right) \right\}.
\end{align*}
Thus, when moving from binary to continuous treatments, the interpretation of $\delta$ coincides with how derivatives are continuous analogs of discrete finite differences. %Although the interpretation of $\delta$ under continuous exposures is more complicated than in the binary setting, we can leverage our understanding of derivatives to build intuition about the behavior of exponential tilts. 
Here $\delta$ measures the local sensitivity to the log-likelihood ratio for a given value of $a$; for an infinitesimal increase in $a$, the log-likelihood ratio increases by $\delta \, da$. Consequently, we can see that setting $\delta > 0$ corresponds to an increase in the likelihood of receiving a higher treatment level, and $\delta < 0$ a corresponding decrease. Since interpretation of $\delta$ is more complex with continuous treatments, an important area for future work is to look at functionals of the intervention distribution $q_\delta(a \mid x)$ (e.g., means, quantiles). In the next section, we explore some of the key properties of exponentially tilted intervention distributions. 

\subsection{Properties}

Incremental effects for binary treatments have at least three important properties. First, they do not require positivity for identification. Second, the intervention distribution is a smooth function of the observational distribution; this naturally bypasses the positivity assumption without the challenging statistical task of identifying subjects for whom positivity is violated. Third, when positivity does hold, incremental effects interpolate between standard static interventions where either everyone versus no one is treated. As described below, it turns out that these properties also hold true for continuous incremental effects under the exponentially tilted intervention distribution defined in \cref{q_definition}.

\paragraph{Positivity:} Identifying effects under the tilted distribution does not require any positivity assumption, in stark contrast to more typical dose-response-style effects which require $\pi(a \mid x)$ to be bounded away from zero, for all $a$ of interest and $x$ in the support. Positivity is not needed because no tilting occurs in zero-density portions of $\pi(a \mid x)$, i.e., when $\pi(a \mid x) = 0$ then $q_\delta(a \mid x) = 0$, by definition. Note this also differs substantially from other stochastic intervention effects, such as the shift interventions of  \cite{munoz2012population}, which characterize what would happen if observed treatments $A$ were shifted to $A + \delta$ (i.e., the treatment density is set to $\pi(a - \delta \mid x)$); these interventions are allowed to place mass on portions of the treatment density outside the support. 

\paragraph{Smoothness:} Another advantage of exponential tilts is that the intervention distribution is a smooth function of the treatment density. One could consider alternative intervention distributions that avoid  positivity assumptions by essentially not intervening on units with $\pi(a \mid x) < \varepsilon$ for some $\varepsilon > 0$ \citep{van2007causal, branson2023causal}; however, this requires estimating which subjects have $\pi(a \mid x)<\epsilon$, a nonsmooth estimation problem that can yield slower convergence rates.

\paragraph{Static Interventions:} A third useful property of exponential tilts is that they interpolate between standard static interventions, when positivity does hold. In binary settings, as $\delta \to \infty$ then $q_{\delta}(a \mid x) \to 1$, corresponding to the intervention in which all units deterministically receive treatment level $A = 1$. Similarly, taking $\delta \to - \infty$ is equivalent to setting $A = 0$. This property of binary incremental effects is convenient as this recovers and generalizes the classical definition of the average treatment effect of giving all or none treatment. When $A$ is continuous, we again have an analogous interpretation. Here it follows that taking $\delta \to \infty$ pushes the mass of the distribution to the edge of its support. In \cref{Exp_tilted_density} we can see that for $A \in [0, 1]$, taking $\delta \to \infty$ creates a point mass at $A = 1$. Similarly, taking $\delta \to - \infty$ creates a point mass at $A = 0$. Indeed, in the setting of continuous exposures, the extreme values of $\delta$ represent the dose-response curve evaluated at the edges of the support of the treatment distribution. We explore this setting further in \cref{dose_response_section}.

\subsection{Identification} \label{identification_section}

In the previous section, we discussed exponentially tilted intervention distributions, which shift the log-likelihood ratio of receiving some treatment level $a$. In this section, we establish the assumptions required for identifying the mean potential outcomes $\mathbb{E}[Y^{Q(\delta)}]$ where $Q(\delta)$ is the post-intervention joint distribution of $(X, A)$ obtained by sampling $X \sim \mathbb{P}_X$ and then $A \mid X \sim q_\delta( \cdot \mid X)$. As previously discussed, we do not require the positivity assumption for identification -- only consistency and exchangeability. Specifically we assume:
\begin{enumerate}
    \item[(i)] \textit{Consistency:} $Y = Y^a$ if $A = a$.
    \item[(ii)] \textit{Exchangeability:} $A \ind Y^a \mid X$ for $a \in \mathbb{R}$
\end{enumerate}
Under assumptions $(i)$ and $(ii)$, the incremental effect $\psi(\delta) = \mathbb{E}[Y^{Q(\delta)}]$ is equal to
\begin{align} \label{identified_eq}
    \psi(\delta) =  \int_\mathcal{X} \int_\mathcal{A} \mu(x, a) q_\delta(a \mid x) \, da \, d\mathbb{P}(x).
\end{align}

The consistency assumption requires that the observed outcome for a given unit corresponds to the potential outcome under the observed treatment; i.e., there is no interference or spillover effects between one unit and the outcomes of another. This assumption is often violated under network structures; for example, if receiving treatment that lowers the risk of a viral infection will also reduce the likelihood that neighbors become infected. The exchangeability assumption requires that, conditional on the collected covariates, the treatment assignment is as good as random. This assumption is satisfied in randomized experiments, but can be violated in observational settings; there, the assumption is untestable, and is justified based on subject matter expertise.

Removing the positivity assumption is of great practical importance in the estimation of causal effects. Positivity is often unrealistic, particularly when treatment is continuous; in many settings it is just not possible for every unit to have some non-zero chance of receiving each treatment level. For example, in studying the effects of political advertisements on campaign contributions \citep{urban2014dollars, fong2018covariate}, the number of political advertisements in a given ZIP code is essentially a continuous random variable. In this setting,  many ZIP codes fail to satisfy the positivity assumption, since politicians are unlikely to ignore politically important ZIP codes, and are unlikely to heavily advertise in unimportant ones. Incremental effects via exponential tilts of the density of political advertisements allow for estimation of the effect of increasing the likelihood that a given ZIP code receives a certain number of advertisements, without requiring positivity. We explore this application further in \cref{experiments_and_simulations}.

\section{Fundamental Limits} \label{efficiency_theory_section}

In this section we explore efficiency theory and fundamental limits of estimation of incremental effects under exponential tilts. For more details on nonparametric efficiency theory, we refer to \cite{bickel1993efficient, laan2003unified,  van2002semiparametric, tsiatis2006semiparametric, kosorok2008introduction, kennedy2023semiparametric}.  In particular, we show how error rates depend on not just sample size $n$, but also the increment parameter $\delta$. To do so, we first derive the efficient influence function and nonparametric efficiency bounds for $\psi(\delta)$ when $\delta$ is finite. This finite $\delta$ condition is necessary because for unbounded $\delta$ the variance of the efficient influence function is no longer finite, leading to non-pathwise differentiability and the non-existence of usual influence functions. However, later, we go on to derive a minimax lower bound on the estimation error that allows for $\delta$ to be unbounded. Further, in later sections we show that estimation and inference are still possible for unbounded $\delta$, albeit at slower convergence rates.

\begin{remark}
In what follows we define $\frac{q_\delta(a \mid x)}{\pi(a \mid x)} = \frac{q_\delta(a \mid x)}{\pi(a \mid x)}$ if $\pi(a \mid x) > 0$ and $0$ otherwise to resolve ambiguities that arise in zero-density sections of $\pi(a \mid x)$ (i.e., define $0/0=0$). 
\end{remark}

\subsection{Efficiency Bound}

The following proposition establishes the efficient influence function for $\psi(\delta) = \mathbb{E}[Y^{Q(\delta)}]$. Although this result is also given by \cite{diaz2020causal}, our statement and proofs are different since their analysis includes mediators, which must be removed in order to draw an exact connection between the two settings.

\begin{proposition} \label{eif_theorem}
Suppose $\delta \in [-M,M]$ for some $M<\infty$. Then, the efficient influence function of $\psi(\delta)$ under a nonparametric model is given by $\varphi(Z;\delta) = D_Y + D_{q, \mu} + D_\psi$ for
\begin{align*}
    D_Y &= \frac{q_\delta(A \mid X)}{\pi(A \mid X)} \Big(Y - \mu(X, A) \Big) \\
    D_{q, \mu} &= \frac{q_\delta(A \mid X)}{\pi(A \mid X)}\Big( \mu(X, A) - \mathbb{E}_Q(\mu(X, A) \mid X) \Big) \\
    D_\psi &= \mathbb{E}_Q(\mu(X, A) \mid X)  - \psi(\delta)
\end{align*}
where $\mathbb{E}_Q(\mu(X, A) \mid X) = \int_a \mu(X, a) q_\delta(a \mid X) \, da$ is the conditional mean of $\mu(X, A)$ under the exponentially tilted distribution.
\end{proposition}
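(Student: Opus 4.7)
I would derive the efficient influence function by the standard recipe of computing the pathwise derivative of $\psi(\delta)$ along one-dimensional regular parametric submodels. Pick a submodel $\{P_\epsilon\}$ with $P_0 = P$, factor $dP(z) = dP(y\mid x,a)\,d\pi(a\mid x)\,dP(x)$, and write the score at $\epsilon=0$ as $s(Z) = s(Y\mid X,A) + s(A\mid X) + s(X)$. Under a nonparametric model the tangent space is all of $L_2^0(P)$, so if I can express $\frac{d}{d\epsilon}\psi(\delta;P_\epsilon)\big|_{\epsilon=0} = \mathbb{E}[\varphi(Z;\delta)\,s(Z)]$ for a single mean-zero, square-integrable $\varphi$, then $\varphi$ is the EIF. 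Starting from the identified form $\psi(\delta) = \iint \mu(x,a)\,q_\delta(a\mid x)\,da\,dP(x)$, I would differentiate the three pieces that vary with $\epsilon$: the outcome regression $\mu_\epsilon(x,a)$, the conditional density $\pi_\epsilon(a\mid x)$ that enters $q_{\delta,\epsilon}$, and the marginal $dP_\epsilon(x)$.

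The $dP_\epsilon(x)$ piece gives $\int \mathbb{E}_Q(\mu(X,A)\mid X=x)\,s(x)\,dP(x)$, which after centering produces $D_\psi$. The $\mu_\epsilon$ piece uses $\frac{d}{d\epsilon}\mu_\epsilon(x,a)\big|_0 = \mathbb{E}[(Y-\mu(X,A))\,s(Y\mid X,A)\mid X=x,A=a]$, and rewriting the resulting integral against $q_\delta(a\mid x)\,da\,dP(x)$ as an expectation under $P$ via the reweighting $q_\delta/\pi$ yields $D_Y$. The genuine work lies in differentiating $q_{\delta,\epsilon}$ because its normalizer $c_\delta(x) = \int e^{\delta t}\pi(t\mid x)\,dt$ depends on $\pi$; the quotient rule together with the recognition that $\int e^{\delta t}\pi(t\mid x) s(t\mid x)\,dt \big/ c_\delta(x) = \mathbb{E}_Q(s(A\mid X)\mid X=x)$ gives the clean identity
\[
\frac{\partial}{\partial \epsilon} q_{\delta,\epsilon}(a\mid x)\bigg|_{\epsilon=0} = q_\delta(a\mid x)\bigl[s(a\mid x) - \mathbb{E}_Q(s(A\mid X)\mid X=x)\bigr].
\]
Substituting this back into $\iint \mu(x,a)\,\partial_\epsilon q_{\delta,\epsilon}\,da\,dP(x)$ and again pushing through the likelihood ratio $q_\delta/\pi$ yields the $D_{q,\mu}$ term. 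This step is where I expect the main obstacle to sit, since one has to be careful that the normalizer contribution is exactly the conditional centering by $\mathbb{E}_Q(\mu\mid X)$ rather than something else.

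To finish, I would verify that each summand of $\varphi = D_Y + D_{q,\mu} + D_\psi$ has mean zero under $P$ (by iterated expectations, using $\mu(X,A)=\mathbb{E}[Y\mid X,A]$ and the definition of $\mathbb{E}_Q(\mu\mid X)$), and that each term came in paired against a single score component, which means replacing $s(A\mid X)$ or $s(Y\mid X,A)$ by the full score $s(Z)$ leaves the inner product unchanged. The hypothesis $|\delta|\le M$ enters precisely at the step of checking $\varphi\in L_2(P)$: the likelihood ratio $q_\delta(A\mid X)/\pi(A\mid X) = e^{\delta A}/c_\delta(X)$ is only bounded for bounded $\delta$, which is exactly the point the paper emphasizes. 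Because the tangent space is the whole mean-zero $L_2$ space, $\varphi$ is unique, hence the efficient influence function.
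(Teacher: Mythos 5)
Your proposal is correct, and it takes a genuinely different route from the paper. You compute the pathwise derivative of $\psi(\delta)$ along regular parametric submodels directly in the continuous model, decompose the score as $s(Z)=s(Y\mid X,A)+s(A\mid X)+s(X)$, and match each piece of the derivative to $D_Y$, $D_{q,\mu}$, and $D_\psi$; the key identity you isolate, $\partial_\epsilon q_{\delta,\epsilon}(a\mid x)\big|_0 = q_\delta(a\mid x)\bigl[s(a\mid x)-\mathbb{E}_Q(s(A\mid X)\mid X=x)\bigr]$, is exactly right and is indeed where the conditional centering by $\mathbb{E}_Q(\mu\mid X)$ comes from. The paper instead follows the ``derivative rules'' device of Kennedy (2023): it pretends the data are discrete, builds the candidate influence function from point-mass influence functions of $\mu(x,a)$, $\pi(a\mid x)$, and $p(x)$ via product and quotient rules --- doing so for a \emph{general} tilt $f_\delta(\pi(a\mid x))/\int f_\delta(\pi(t\mid x))\,dt$ before specializing to $f_\delta(\pi_a)=e^{\delta a}\pi_a$ --- and then certifies validity in the continuous case by verifying that the von Mises remainder is a second-order product of errors (deferred to the proof of the remainder proposition). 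Your route buys a self-contained argument that directly establishes pathwise differentiability and identifies $\varphi$ as the unique gradient in the saturated tangent space, without needing the remainder check as part of the derivation; the paper's route buys lighter algebra, a formula for arbitrary smooth tilts as a byproduct, and a remainder expression it reuses later for the estimation theory. Both correctly locate the role of $|\delta|\le M$ in guaranteeing that the likelihood ratio $q_\delta/\pi$, and hence $\varphi$, is square-integrable. One small point to make explicit when you write it up: the cross-pairings (e.g.\ $\mathbb{E}[D_{q,\mu}\,s(Y\mid X,A)]=0$ and $\mathbb{E}[D_{q,\mu}\,s(X)]=0$) should be verified by iterated expectations so that replacing each partial score by the full score $s(Z)$ leaves the inner product unchanged; you flag this step, and it goes through because conditional scores have conditional mean zero and $\mathbb{E}[D_{q,\mu}\mid X]=0$.
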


 \cref{eif_theorem} shows how the influence function for the incremental effect parameter is split into three parts. The first part, $D_Y$ represents the difference between the outcomes and the regression function, adjusted for the tilted likelihood of receiving treatment. The second term $D_{q, \mu}$ represents the difference between the regression function and its conditional mean under the tilted density $q_\delta(a \mid x)$, again adjusted for the tilted likelihood. Finally, the third term $D_\psi$ represents a total difference between the regression function averaged over $q_\delta(a \mid x)$ and the stochastic intervention effect, $\psi$.

Next, we further develop the efficiency theory for exponentially tilted stochastic interventions by deriving an explicit expression for the nonparametric efficiency bound. The nonparametric efficiency bound, i.e., the variance of the efficient influence function, acts as a nonparametric analogue of the Cramer-Rao lower bound, since no other estimator can have a smaller mean squared error, in a local asymptotic minimax sense. More formally, if a functional $\psi : \mathbb{P} \to \mathbb{R}$ is pathwise differentiable with efficient influence function $\varphi$, then
\begin{align*}
    \underset{\varepsilon > 0}{\text{inf}} \: \underset{n \to \infty}{\text{lim}} \: \underset{TV(\mathbb{P}, \overline{\mathbb{P}}) < \varepsilon}{\text{sup}} \left( n \mathbb{E}_{\overline{\mathbb{P}}} \left[ \left\{ \widehat{\psi} - \psi(\overline{\mathbb{P}}) \right\}^2\right] \right) \geq \var\{\varphi(Z; \mathbb{P}) \}
\end{align*}
for any estimator sequence $\widehat{\psi} = \widehat{\psi}_n$ \citep{van2000asymptotic}. Deriving a closed form expression for the nonparametric efficiency bound can be useful for characterizing precisely when estimation is more or less difficult, in a nonparametric model. The following theorem establishes the nonparametric efficiency bound in this problem.

\begin{theorem} \label{nonpar_efficiency_bound}
The variance of the influence function $\varphi(Z;\delta)$, which is the nonparametric efficiency bound when $\delta \in [-M,M]$ for some $M<\infty$, is given by
 \begin{align*}
    \sigma^2_\delta = \mathbb{E}\left[\left(\frac{q_\delta(A \mid X)}{\pi(A \mid X)} \right)^2 (Y - \mathbb{E}_Q\left[\mu(X, A) \mid X \right])^2 \right] + \var\Big( \mathbb{E}_Q\left[\mu(X, A) \mid X \right] \Big)
\end{align*}
where the subscript $Q$ denotes expectations with respect to $q_\delta(A \mid X)$. 
\end{theorem}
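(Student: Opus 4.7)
The plan is to exploit the fact that the efficient influence function has mean zero, so $\var(\varphi) = \mathbb{E}[\varphi^2]$, and then expand the square of $\varphi = D_Y + D_{q,\mu} + D_\psi$ into three squared terms and three cross terms. The strategy is to show, via iterated expectations, that all three cross terms vanish, leaving three squared terms which together recover the claimed expression.

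First I would check the three cross terms. For $\mathbb{E}[D_Y D_{q,\mu}]$ and $\mathbb{E}[D_Y D_\psi]$, condition on $(X,A)$: in both cases the second factor is $(X,A)$-measurable, while $\mathbb{E}[D_Y \mid X,A] = \tfrac{q_\delta(A\mid X)}{\pi(A\mid X)}\mathbb{E}[Y-\mu(X,A)\mid X,A] = 0$, so each cross term vanishes. For $\mathbb{E}[D_{q,\mu} D_\psi]$, condition on $X$: since $D_\psi$ is $X$-measurable, it suffices to show $\mathbb{E}[D_{q,\mu}\mid X]=0$. Integrating out $A$ against the observational density $\pi(a\mid X)$ cancels with the denominator in the importance weight, leaving $\int q_\delta(a\mid X)\{\mu(X,a)-\mathbb{E}_Q(\mu(X,A)\mid X)\}\,da$, which is zero by the definition of $\mathbb{E}_Q(\mu(X,A)\mid X)$. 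This is the one algebraic manipulation that is not purely mechanical, and I expect it to be the main (mild) obstacle: the cancellation of $\pi$ in the importance weight with the density in the expectation is what makes $q_\delta$ appear, and one must be careful with the $0/0:=0$ convention from the remark to ensure the expression is well-defined on the support.

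Next I would compute each squared term. Conditioning on $(X,A)$ gives
\begin{align*}
\mathbb{E}[D_Y^2] &= \mathbb{E}\!\left[\frac{q_\delta(A\mid X)^2}{\pi(A\mid X)^2}\var(Y\mid X,A)\right], \\
\mathbb{E}[D_{q,\mu}^2] &= \mathbb{E}\!\left[\frac{q_\delta(A\mid X)^2}{\pi(A\mid X)^2}\bigl(\mu(X,A)-\mathbb{E}_Q(\mu(X,A)\mid X)\bigr)^2\right].
\end{align*}
For the last term, I would note that $\mathbb{E}[\mathbb{E}_Q(\mu(X,A)\mid X)]=\psi$ by the identification formula stated earlier, so $D_\psi$ is already centered and $\mathbb{E}[D_\psi^2] = \var(\mathbb{E}_Q[\mu(X,A)\mid X])$.

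Finally I would combine: the first two squared terms merge inside a single expectation with the common weight $q_\delta(A\mid X)^2/\pi(A\mid X)^2$ multiplying $\var(Y\mid X,A) + (\mu(X,A)-\mathbb{E}_Q(\mu(X,A)\mid X))^2$, and adding $\var(\mathbb{E}_Q[\mu(X,A)\mid X])$ yields exactly the expression in the theorem. The assumption $\delta\in[-M,M]$ is used only implicitly to ensure the importance weight $q_\delta/\pi$ is bounded on the support and hence all the second moments above are finite, justifying the use of Fubini and the interpretation of the resulting quantity as the nonparametric efficiency bound.
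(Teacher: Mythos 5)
Your proposal is correct and follows essentially the same route as the paper's proof: expand $\var(\varphi)=\mathbb{E}[(D_Y+D_{q,\mu}+D_\psi)^2]$, kill the three cross terms by iterated expectations (conditioning on $(X,A)$ for the two involving $D_Y$, and on $X$ for $\mathbb{E}[D_{q,\mu}D_\psi]$ via the cancellation of $\pi(a\mid X)$ against the importance-weight denominator), and then evaluate the three squared terms exactly as you describe. The only additions beyond the paper's argument are your remarks on the $0/0$ convention and integrability, which are harmless refinements rather than a different approach.
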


An important consequence of \cref{nonpar_efficiency_bound} is that the variance of the efficient influence function is largely dependent on the likelihood ratio ${q_\delta(a \mid x)}/{\pi(a \mid x)}$, which is a function of the parameter $\delta$. One might hope that modest choices of $\delta$ would not materially impact this likelihood ratio, but in fact it 
can scale with $\delta$. For example, even if $\pi(a \mid x)=1$ is uniform, then 
\begin{align*}
    \frac{q_\delta(a \mid x)}{\pi(a \mid x)} = \frac{\delta \exp(\delta a)}{\exp(\delta) - 1}
\end{align*}
is no smaller than $\delta$ for $a=1$. Contrast this with standard causal inference problems with binary treatments, where one often assumes the analogous inverse propensity scores is bounded above by some constant. In fact, in the next section we show that the nonparametric efficiency bound is not just upper bounded by $\delta$, but lower bounded as well. Intuitively,  the larger the $\delta$, the farther the intervention distribution is from the original treatment density. Thus, the best possible variance should be increasing with the size of the tilt, and with the farther we deviate from the observed density. We formalize this point by considering the Kullback-Leibler divergence between $q_\delta(a \mid X)$ and $\pi(a \mid X)$. Suppose $A \sim q_\delta( \cdot \mid X)$ is some random variable drawn from the intervention distribution. Then, it follows that $ D_{\text{KL}}(q_\delta(a \mid X) \: || \: \pi(a \mid X)) = \delta \mathbb{E}_Q[A \mid X] - \kappa(\delta)$ where $\kappa(\delta) = \text{log}\left(\int_a \exp(\delta a) \pi(a \mid x) \, da \right)$ is the cumulant generating function. The following result establishes that the divergence between $q_\delta(a \mid X)$ and $\pi(a \mid X)$ is increasing with $\delta$, and the rate at which the divergence increases depends on the skewness of $q_\delta(a \mid X)$.

\begin{proposition}
     \label{kl_divergence_lemma}
    Suppose $\delta > 0$ and that $\pi(a \mid X)$ is continuous. Then
    \begin{align*}
        \frac{\partial}{\partial \delta} \left\{ D_{\text{KL}}\Big(q_\delta(a \mid X) \: || \: \pi(a \mid X) \Big) \right\} &= \delta \var_Q(A \mid X) \\
        \frac{\partial^2}{\partial \delta^2}  \left\{ D_{\text{KL}}\Big(q_\delta(a \mid X) \: || \: \pi(a \mid X) \Big) \right\} 
        &= \var_Q(A \mid X) + \delta \mathbb{E}_Q\left[(A - \mathbb{E}_Q[A \mid X])^3 \mid X \right]
    \end{align*} 
\end{proposition}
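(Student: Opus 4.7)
The plan is to recognize that $q_\delta(a \mid X)$ is a one-parameter exponential family in $\delta$ with natural parameter $\delta$, sufficient statistic $A$, carrier measure $\pi(a \mid X)$, and log-normalizer $\kappa(\delta)$. This observation reduces everything to standard facts about cumulant generating functions: differentiating $\kappa(\delta)$ successively in $\delta$ yields the cumulants of $A$ under $q_\delta$, namely
\begin{align*}
\kappa'(\delta) &= \mathbb{E}_Q[A \mid X], \\
\kappa''(\delta) &= \var_Q(A \mid X), \\
\kappa'''(\delta) &= \mathbb{E}_Q\!\left[ (A - \mathbb{E}_Q[A \mid X])^3 \mid X \right].
\end{align*}
Since $A$ is bounded (support in $[0,1]$) and $\delta$ is finite, the integrand $e^{\delta a}\pi(a \mid x)$ is dominated by an integrable function uniformly in a neighborhood of any given $\delta$, so differentiation under the integral sign (dominated convergence) is routinely justified, and each differentiation simply brings down a factor of $a$.

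With these identities in hand, the first derivative is a direct calculation starting from the closed form $D_{\text{KL}}(q_\delta \Vert \pi) = \delta \, \mathbb{E}_Q[A \mid X] - \kappa(\delta) = \delta \kappa'(\delta) - \kappa(\delta)$ given in the preamble. Differentiating in $\delta$ using the product rule gives $\kappa'(\delta) + \delta \kappa''(\delta) - \kappa'(\delta) = \delta \var_Q(A \mid X)$, which is the stated first identity. Differentiating once more and substituting $\kappa'''(\delta)$ for the third central moment gives $\var_Q(A \mid X) + \delta \, \mathbb{E}_Q[(A - \mathbb{E}_Q[A\mid X])^3 \mid X]$, matching the second identity.

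The only mildly delicate step is justifying the differentiation under the integral sign, i.e., verifying that $\kappa$ is smooth in $\delta$ and that its derivatives are the moments/cumulants of $A$ under $q_\delta$. Boundedness of the support of $A$ together with continuity of $\pi(a \mid X)$ makes this immediate via dominated convergence: for $\delta$ in a bounded neighborhood, $|a^k e^{\delta a}|\pi(a \mid x)$ is uniformly bounded by an integrable envelope. Beyond this verification, the rest of the argument is pure algebra via the exponential-family identification. I would present the proof in two short steps: (i) state and verify the cumulant identities for $\kappa$, and (ii) compute the two derivatives of $\delta \kappa'(\delta) - \kappa(\delta)$ and collect terms.
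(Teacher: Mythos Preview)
Your proposal is correct and is essentially the same computation the paper carries out, but packaged more cleanly. The paper also starts from $D_{\text{KL}}(q_\delta \Vert \pi)=\delta\,\mathbb{E}_Q[A\mid X]-\kappa(\delta)$ and differentiates in $\delta$, but it re-derives each of the identities $\kappa'(\delta)=\mathbb{E}_Q[A\mid X]$, $\partial_\delta\,\mathbb{E}_Q[A\mid X]=\var_Q(A\mid X)$, and $\partial_\delta\,\var_Q(A\mid X)=\mathbb{E}_Q[(A-\mathbb{E}_Q[A\mid X])^3\mid X]$ by explicitly differentiating the integrals via the quotient rule and simplifying. Your recognition that $q_\delta$ is a natural exponential family with log-normalizer $\kappa$ lets you invoke the standard cumulant identities $\kappa^{(k)}(\delta)=k$th cumulant of $A$ under $q_\delta$ and reduces the whole argument to two lines of algebra on $\delta\kappa'(\delta)-\kappa(\delta)$. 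The mathematical content is identical; your version is shorter and more conceptual, while the paper's direct computation is self-contained and does not assume familiarity with exponential-family cumulant formulas.
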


One takeaway from \cref{kl_divergence_lemma} is that the rate at which $q_\delta(a \mid X)$ grows apart from $\pi(a \mid X)$ depends on the variance of the treatment exposure under the intervention distribution. This is intuitive, as tilting a high-variance distribution is only going to spread mass farther away from its original position. Importantly, this discussion of the nonparametric efficiency bound and the KL-divergence exemplifies a theme we  find across our results, that estimation is more complicated as $\delta$ grows larger. In the next section we explore this further, via minimax lower bounds. 

\subsection{Minimax Lower Bound} \label{minimax_section}

In this section, we establish a minimax lower bound across a flexible nonparametric model with bounded treatment density. Minimax lower bounds are an important way to benchmark estimation error across a statistical model $\mathcal{P}$; they demonstrate that no estimator can achieve lower estimation error without adding more assumptions. In order to prove our result, we first prove a lemma showing that the nonparametric efficiency bound is both lower and upper bounded by $\delta$, which is interesting in its own right. In this result, and several subsequent ones, in order to avoid a superlinear (or even exponential) penalty in $\delta$ we make use of what we refer to as a ``weak positivity" assumption. That is, we assume there exists some $\eta \in [0, 1)$ such that $\pi(a \mid x) \geq \pi_{\min} > 0$ for all $a \in [\eta, 1]$. Although positivity is not required for identification, it turns out that weak positivity is useful for non-trivial estimation rates. One intuitive way to understand this assumption is to consider the limiting behavior of $q_\delta(a \mid x)$ as $\delta \to \infty$. As $q_\delta(a \mid x)$ begins to concentrate near the edge of the support, if there is not sufficient concentration of $\pi(a \mid x)$ in this neighborhood then $\mathbb{E}\left[q_\delta(A \mid X)^2 / \pi(A \mid X)^2\right]$ can scale arbitrarily with $\delta$. Note that this distinction is made for a \textit{positive} tilt; conversely, if we were to take $\delta \to -\infty$ then we would need to assume positivity holds in a neighborhood of $[0, \eta]$ for $\eta \in (0, 1]$. To avoid cumbersome notation, in all subsequent results we assume $\delta > 0$, although all results hold for $\delta < 0$ as well. Furthermore, in \cref{weak_positivity_section}, we provide an extended discussion of weak positivity, and how removing a uniform lower bound at the edge of the treatment support can lead to arbitrarily bad penalties in $\delta$. Fortunately, weak positivity is quite weak overall, and allows for arbitrary holes in the treatment density outside of the edges of support. The following theorem establishes bounds on the variance of the efficient influence function.

\begin{lemma} \label{sigma_bounds}
Assume:
\begin{enumerate}
\item[(i)] There exists some $\eta \in [0, 1)$ such that $\pi(a \mid x) \geq \pi_{\min} > 0$ for all $a \in [\eta, 1]$.
\item[(ii)] $\pi(a \mid x) \leq \pi_{\max}$ for all $a$, $x$.
\item[(iii)] $\sigma^2_{\min} \leq \var(Y \mid X=x,A=a)$ for all $a,x$.
\item[(iv)] $|Y|\leq B$ with probability one.
\end{enumerate}
Then the variance of the influence function $\varphi(Z; \delta)$ from \cref{eif_theorem} satisfies
\begin{align*}
     \delta \left( \frac{\pi_{\min} \sigma^2_{\min}}{\pi^2_{\max}} \cdot \frac{2}{5}(1 - \eta) \right) \leq \var \left\{ \varphi(Z;\delta) \right\}  \leq B^2 \left[1 + \frac{5}{2} \cdot \frac{\pi_{\max}}{\pi^2_{\min}} \left(\delta + \frac{2}{1 - \eta} \right) \right].
\end{align*}
\end{lemma}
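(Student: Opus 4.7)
My overall strategy is to first reduce $\var\{\varphi(Z;\delta)\}$ to an integral involving only the likelihood ratio $W(X,A) := q_\delta(A\mid X)/\pi(A\mid X) = \exp(\delta A)/Z(X)$ with $Z(x) = \int_0^1 \exp(\delta t)\pi(t\mid x)\,dt$, and then bound that integral using the stated assumptions. As a preliminary step I would check that the three EIF components are mutually uncorrelated: $\E[D_Y\mid X,A]=0$ by the definition of $\mu$; $\E[D_{q,\mu}\mid X]=0$ by the change of measure from $\pi$ to $q_\delta$; and $D_\psi$ depends only on $X$ and has mean zero. This yields the clean decomposition
\begin{equation*}
\var\{\varphi(Z;\delta)\} = \E[D_Y^2] + \E[D_{q,\mu}^2] + \var(D_\psi),
\end{equation*}
so that each summand can be controlled pointwise by a multiple of $W^2$ or of $\mu^2$.

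\emph{Lower bound.} I would drop $\E[D_{q,\mu}^2]$ and $\var(D_\psi)$ and use $\var(Y\mid X,A)\geq \sigma_{\min}^2$ to get $\var(\varphi)\geq \sigma_{\min}^2\,\E[W^2]$. Writing $\E[W^2\mid X]=\int_0^1 e^{2\delta a}\pi(a\mid x)/Z(x)^2\,da$, the numerator is lower bounded by $(\pi_{\min}/(2\delta))\sum_k(e^{2\delta b_k}-e^{2\delta a_k})$ via weak positivity, and the denominator by $Z(x)\leq (\pi_{\max}/\delta)(e^\delta-1)$ via $\pi\leq \pi_{\max}$ on $[0,1]$. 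Combining,
\begin{equation*}
\E[W^2\mid X]\;\geq\;\frac{\pi_{\min}\,\delta}{2\,\pi_{\max}^2}\cdot\frac{\sum_k(e^{2\delta b_k}-e^{2\delta a_k})}{(e^\delta-1)^2},
\end{equation*}
and the remaining task is to check that the right-hand ratio is at least $1/K$; intuitively, $(e^\delta-1)^2$ acts as a worst-case normalization that the $K$ positivity intervals can at best share equally.

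\emph{Upper bound.} Since $|Y|\leq B$ forces $|\mu|,|\E_Q[\mu\mid X]|\leq B$, I would obtain $\E[D_Y^2]\leq B^2\,\E[W^2]$, $\E[D_{q,\mu}^2]\leq 4B^2\,\E[W^2]$, and $\var(D_\psi)\leq B^2$, hence $\var(\varphi)\leq B^2(1+5\,\E[W^2])$. For $\E[W^2\mid X]$ I would now bound the numerator by $(\pi_{\max}/(2\delta))(e^{2\delta}-1)$ and use weak positivity in reverse to get $Z(x)\geq (\pi_{\min}/\delta)\sum_k(e^{\delta b_k}-e^{\delta a_k})$. Using the factorization $(e^{2\delta}-1)=(e^\delta-1)(e^\delta+1)$ together with the elementary inequality $1-e^{-\delta L}\geq \delta L/(1+\delta L)$ to handle the regime of small $\delta L$ yields a bound of the form $\E[W^2]\leq \pi_{\max}(2/L+\delta)/(2\pi_{\min}^2)$, and combining with the factor $5$ produces the stated upper bound.

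\emph{Main obstacle.} The mechanics above are routine; the delicate part will be the final algebraic inequalities that yield the $1/K$ in the lower bound and the $2/L+\delta$ in the upper bound. Cleanly comparing $\sum_k(e^{2\delta b_k}-e^{2\delta a_k})$ with $(e^\delta-1)^2$ requires a careful argument that uses both the number of positivity intervals $K$ and the minimum length $L$, and the bound must remain meaningful across the full range of $\delta\geq 0$ and all admissible interval placements, including cases where intervals cluster far from the boundaries of $[0,1]$.
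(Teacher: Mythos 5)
Your setup — the orthogonal decomposition $\var(\varphi)=\E[D_Y^2]+\E[D_{q,\mu}^2]+\var(D_\psi)$, dropping the nonnegative terms for the lower bound, and the crude $B$-bounds giving $\var(\varphi)\le B^2(1+5\,\E[W^2])$ for the upper bound — matches the paper exactly. The gap is in the one step you flagged as "delicate": your bounds on $\E[W^2\mid X]$ decouple the numerator and denominator in a way that makes the required final inequalities false. For the lower bound you replace $Z(x)$ by $(\pi_{\max}/\delta)(e^\delta-1)$, integrating $\pi\le\pi_{\max}$ over all of $[0,1]$, and then need $\sum_k(e^{2\delta b_k}-e^{2\delta a_k})\ge (e^\delta-1)^2/K$. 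Take $K=1$ and $[a_1,b_1]=[0,1/2]$: the left side is $e^{\delta}-1$ and the right side is $(e^\delta-1)^2$, so the ratio tends to $0$ as $\delta\to\infty$. Symmetrically, for the upper bound you bound the numerator by $\int_0^1 e^{2\delta a}\,da$ while keeping the denominator restricted to the positivity set; with $[a_1,b_1]=[0,L]$ for small $L$ the resulting ratio grows like $\delta e^{2\delta(1-L)}$, which no bound of the form $2/L+\delta$ can absorb. So neither direction can be completed along the route you describe.

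The paper's proof avoids this by exploiting that $\pi(\cdot\mid x)$ is supported on $\cup_k[a_k,b_k]$ (it explicitly assumes $\pi(a\mid x)=0$ almost surely off the positivity set), so that \emph{both} the numerator $\int e^{2\delta a}\pi_a\,da$ and the denominator $\bigl(\int e^{\delta a}\pi_a\,da\bigr)^2$ reduce to sums of integrals over the same intervals. Writing $u_k=e^{\delta b_k}$, $v_k=e^{\delta a_k}$, the ratio becomes $\tfrac{\delta}{2}\sum_k(u_k^2-v_k^2)/\bigl(\sum_k(u_k-v_k)\bigr)^2$; the lower bound $1/K$ then follows from Cauchy--Schwarz, $\bigl(\sum_k(u_k-v_k)\bigr)^2\le K\sum_k(u_k-v_k)^2$, together with $u_k+v_k\ge u_k-v_k$, and the upper bound follows from $\bigl(\sum_k(u_k-v_k)\bigr)^2\ge\sum_k(u_k-v_k)^2$ (the cross terms are positive) together with $(u_k+v_k)/(u_k-v_k)=\coth(\delta(b_k-a_k)/2)\le\coth(\delta L/2)$ and $\tfrac{\delta}{2}\coth(\delta L/2)\le 1/L+\delta/2$. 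The key point you are missing is that the exponential growth of $e^{2\delta a}$ must be cancelled against $(e^{\delta a})^2$ interval by interval; once you bound one of the two over all of $[0,1]$ and the other over the positivity set, that cancellation is lost and the claim fails whenever the intervals do not reach $a=1$.
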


Note that in the case of a negative tilt, the bounds would be in terms of $|\delta|$. \cref{sigma_bounds} highlights the importance of explicitly studying the dependence on $\delta$ in analyzing incremental effects. Since $\delta$ lower bounds the variance of the efficient influence function, as $\delta$ increases, the variance can become unbounded, increasing to infinity. Ignoring dependencies on $\delta$ misses this complication, and can blind practitioners as to how larger choices of $\delta$ lead to substantially larger errors; consequently, in what follows we focus on deriving results that consider both the sample size and tilt size $\delta$ in tandem. We show that the best possible root mean squared error is lower bounded by $1/\sqrt{n/\delta}$ in a nonparametric model with bounded treatment density.

\begin{theorem} \label{minimax_lb}
Let $\mathcal{P}$ denote a model where assumptions $(ii)$, $(iii)$, and $(iv)$ of \cref{sigma_bounds} hold. Then for any $\delta \geq 2$ and universal constant $C$, the minimax rate is lower bounded as
\begin{align*}
    \underset{\widehat{\psi}}{\text{inf}} \, \underset{P \in \mathcal{P}}{\text{sup}} \, \mathbb{E}_P  \left|\widehat{\psi} - \psi_P(\delta) \right|  \geq \sqrt{\frac{C}{n/\delta}}.
\end{align*}
\end{theorem}

We note that the assumption $\delta \geq 2$ is not necessary; it only ensures that the constant $C$ does not depend on $\delta$, even in a trivial way.  The proof of \cref{minimax_lb} relies on an application of Le Cam's two-point method \citep{tsbakov2009}. Specifically, we rely on a reduction to testing between two distributions: a null density and fluctuated alternative, which are different enough to exhibit some separation in $\psi(\delta)$, but similar enough that they cannot be reliably distinguished from each other. If we define these two densities to be
\begin{align*}
    p_0(z) &= p(y \mid x, a) p(a \mid x) p(x) \\
    p_1(z) &= \Big[p(y \mid x, a)(1 + \varepsilon \phi_y(z; p)) \Big]p(a \mid x) p(x)
\end{align*}
for some $\varepsilon$, where $ \phi_y(z; p) = \frac{q_\delta(a \mid x)}{p(a \mid x)}(y - \mu(x, a))$. Then it can be shown that the functional separation evaluates to
\begin{align*}
    |\psi(p_0) - \psi(p_1)|  = \varepsilon \left| \int  \phi_y(z; p)^2 p(z) dz \right|.
\end{align*}
Importantly, this functional separation recovers a key piece of the nonparametric efficiency bound, given by the term containing the squared likelihood ratio and conditional variance of the outcomes. This allows us to directly leverage \cref{sigma_bounds} to show that the functional separation scales with $\varepsilon \delta$, thereby establishing the minimax lower bound, after noting that the distance between the two distributions remains bounded.

Thus, \cref{minimax_lb} shows that the best possible root mean squared error increases with the size of the exponential tilt, and in particular that the effective sample size in these problems is $n/\delta$ instead of $n$. This highlights how estimation error must necessarily diminish with larger tilts $\delta$. These findings reinforce the importance of explicitly considering the dependence on $\delta$ in the estimation of incremental effects; treating $\delta$ as a fixed value misses this, and suggests that convergence rates are faster than they actually are. In the next section, we explore how $\delta$ impacts the bias and convergence rate during estimation. 

\section{Estimation \& Inference} \label{estimation_inference_section}

In this section we study an estimator for the incremental effect $\psi(\delta)$ and establish its asymptotic distribution and convergence rate. We take special care to make explicit the dependence of error bounds on $\delta$ throughout, and do not constrain $\delta$ to be bounded, even allowing it to grow with sample size $n$, or be infinite. Importantly, this also allows practitioners to make more informed choices about the range of $\delta$ to consider in analyses. 

\subsection{Proposed Estimator} \label{proposed_estimator_section}

Here, we propose using the ``one-step'' double machine learning estimator to estimate the incremental effect $\psi(\delta)$. The one-step estimator is obtained by subtracting the plug-in estimator's estimated first-order bias, i.e. $-\int \varphi(z; \widehat{\mathbb{P}}) d\mathbb{P}(z)$. Since $\varphi(\cdot) $ is exactly the efficient influence function we derived in \cref{eif_theorem}, it follows that the bias-corrected estimator $\psi(\widehat{\mathbb{P}}) + \mathbb{P}_n\{ \varphi(Z; \widehat{\mathbb{P}}) \}$ is given by
\begin{align*}
    \widehat{\psi}(\delta) = \mathbb{P}_n \left[\frac{\widehat{q}_\delta(A \mid X)}{\widehat{\pi}(A \mid X)} \left(Y - \int_a \widehat{\mu}(X, a) \widehat{q}_\delta(a \mid X) da \right) + \int_a \widehat{\mu}(X, a) \widehat{q}_\delta(a \mid X) da  \right] , 
\end{align*}
which also coincides with that of \cite{diaz2020causal} when all mediators are removed. 
In practice, estimation of $\widehat{\psi}(\delta)$ is relatively straightforward. The regression function $\mu$ can be flexibly estimated using standard nonparametric regression methods. If we assume that $\pi(a \mid x)$ is $\alpha$-smooth, then estimation of $\pi$ can similarly be reduced to a regression problem. Indeed,  it can be shown that for a standard kernel $K(\cdot)$, bandwidth $h$, and constant $C$,
\begin{align*}
    \left| \mathbb{E}\left[\frac{1}{h} K \left( \frac{A - a}{h} \right) \big| \ X = x\right] - \pi(a \mid x) \right| \leq h^\alpha \left(\frac{C}{\lfloor \alpha \rfloor!  } \right) \int |u|^\alpha |K(u)| du, 
\end{align*}
so the conditional expectation is within $h^\alpha$ of the true conditional density. Alternatively, one could use a semiparametric approach for estimating $\pi(a \mid x)$ by assuming a model $A = \lambda(X) + \gamma(X) \varepsilon$, where $\varepsilon$ follows some unspecified density that has zero mean and unit variance given the covariates, $\lambda(x) = \mathbb{E}[A \mid X = x]$, and $\gamma(x) = \var(A \mid X = x)$ \citep{hansen2004nonparametric}; for an example of how this method can be implemented in practice, see \cite{kennedy2017cte}. Regardless of which method is used to estimate $\pi$, one can then  numerically integrate the integrands defined in $\widehat{\psi}(\delta)$. When estimating  nuisance functions, we recommend sample-splitting as in \cite{robins2008higher, zheng2010asymptotic, Chernozhukov2018} for finite-sample robustness and to remove the need for Donsker-type restrictions on $(\widehat{\mu}, \widehat{\pi})$. We propose the following algorithm for calculating $\widehat{\psi}(\delta)$.

\begin{algorithm} \label{cross_validated_algo}
    Split the data into $K$ folds such that fold $k \in \{1, \ldots, K\}$ has $n_k$ observations. For each $k \in K$:
    \begin{enumerate}
        \item[(i)] Estimate $\widehat{\mu}_{-k}(X, A)$ on the observations not contained in fold $k$.
        \item[(ii)] Generate $D$ design points drawn from the support of $a$. For each $a_1, \ldots, a_D$, estimate $\widehat{\pi}_{-k}(a_d \mid X_i)$ either via kernel transformed outcomes 
        \begin{align*}
           \widehat{\pi}_{-k}(a_d \mid X_i) = \widehat{\mathbb{E}}\left[\frac{1}{h} K \left( \frac{A - a_d}{h} \right) \big| \ X = X_i\right]
        \end{align*}
        or some alternate method of estimation.
        \item[(iii)] Numerically integrate the estimates of $\exp(\delta a) \pi(a \mid x)$ and $\mu(x, a) q_\delta(a \mid x)$ by averaging over the design points to obtain
        \begin{align*}
            \widehat{\nu}_{-k}(X_i) &= \frac{1}{D} \sum^D_{d=1} \exp(\delta a_d) \widehat{\pi}_{-k}(a_d \mid X_i) \quad \text{and} \\
            \widehat{\xi}_{-k}(X_i) &=  \frac{1}{D} \sum^D_{d=1} \widehat{\mu}(X_i, a_d) \left(\frac{\exp(\delta a_d) \widehat{\pi}_{-k}(a_d \mid X_i)}{\widehat{\nu}_{-k}(X_i)}\right).
        \end{align*}
        Use $\widehat{\nu}_{-k}(X_i)$ to obtain an estimate of the likelihood ratio,
        \begin{align*}
            \frac{\widehat{q}_{\delta, - k}(a \mid X_i)}{\widehat{\pi}_{-k}(a \mid X_i) } = \frac{\exp(\delta a)}{\widehat{\nu}_{-k}(X_i)}.
        \end{align*}
        \item[(iv)] Calculate the estimate for $\psi(\delta)$ in fold $k$ by taking the average
        \begin{align*}
            \widehat{\psi}_k(\delta) = \frac{1}{n_k} \sum_{j \in k} \left\{ \frac{\widehat{q}_{\delta, - k}(a_j \mid X_j)}{\widehat{\pi}_{-k}(a_j \mid X_j) }\left(Y_j - \widehat{\xi}_{-k}(X_j) \right) + \widehat{\xi}_{-k}(X_j) \right\}
        \end{align*}
    \end{enumerate}
    Then, to obtain the final estimate for $\psi(\delta)$ average across the $K$ folds, $\widehat{\psi}(\delta) = \frac{1}{K} \sum^K_{k=1} \widehat{\psi}_k(\delta)$.
\end{algorithm}

\begin{remark}
    Alternatively, to avoid directly estimating the conditional density,  note that each nuisance function can be framed as a regression problem. Specifically
\begin{align*}
    \frac{q_\delta(a \mid X)}{\pi(a \mid X)} = \frac{\exp(\delta a)}{\int_t \exp(\delta t) \pi(t \mid X) dt} = \frac{\exp(\delta a)}{\mathbb{E}\left[\exp(\delta A) \mid X \right]} =: \frac{\exp(\delta a)}{\nu(X)}
\end{align*}
where $\nu(X)$ is estimated by regressing $\exp(\delta A)$ on $X$ and
\begin{align*}
    \int_a \mu(X, a) q_\delta(a \mid X) da = \frac{\int_a \exp(\delta a) \mu(X, a) \pi(a \mid X) da}{\int_a \exp(\delta a) \pi(a \mid X) da} = \frac{\mathbb{E}\left[\exp(\delta a) \mu(X, A) \mid X \right]}{\mathbb{E}\left[ \exp(\delta a) \mid X\right]} =: \frac{\eta(X)}{\nu(X)}
\end{align*}
where $\eta(X)$ is estimated by regressing $\exp(\delta A) Y$ on $X$ or by first estimating $\mu(X, A)$ by regressing $Y$ on $(X, A)$. Thus, estimation of $\widehat{\psi}(\delta)$ can be reduced to two (or three) regressions, all of which can be flexibly estimated using nonparametric methods: $\widehat{\mu}, \widehat{\nu}$, and $\widehat{\eta}$. However, in practice we found that this procedure of estimating the ratio of $\widehat{\eta}$ to $\widehat{\nu}$ can lead to instability. Additionally, since this is a different parameterization of the problem, it does not have the same double-robustness properties established in the next section. Despite these caveats, we still include it  due to its simplicity and convenience; it may be useful in future work. Next, we derive the asymptotic distribution and convergence rate for $\widehat{\psi}(\delta)$.
\end{remark}

\subsection{Asymptotic Distribution \& Convergence Rate}

In this section we establish the asymptotic distribution and convergence rate of our proposed estimator of the incremental effect for continuous exposures. We illustrate how traditional methods for establishing asymptotic properties, like those utilized in \cite{diaz2020causal}, may over or understate errors by not considering the size of the exponential tilt  $\delta$. Therefore, our goal in this section is to derive new results that remove and/or make explicit the dependence of the tilt $\delta$. It is important to make this dependence on $\delta$ 
 explicit for a number of reasons. First, our analysis is more accurate since it shows how the convergence rate actually depends on the tilt $\delta$. Furthermore, letting $\delta \to \infty$ or $\delta \to -\infty$ is interesting in its own right, since it yields novel estimators of the dose-response curve. Finally,  practitioners typically estimate $\psi(\delta)$ across a range of values; establishing an explicit dependence on $\delta$ allows them to choose a range motivated by the sample size. 

It is instructive to consider a standard decomposition of the form
\begin{align*}
    \widehat{\psi}(\delta) - \psi(\delta) &= (\mathbb{P}_n - \mathbb{P})\{\varphi(Z; \mathbb{P})\} + (\mathbb{P}_n - \mathbb{P})\{\varphi(Z; \widehat{\mathbb{P}}) - \varphi(Z; \mathbb{P})\} + R_2(\widehat{\mathbb{P}}, \mathbb{P}) 
\end{align*}
where $R_2(\widehat{\mathbb{P}}, \mathbb{P}) = \psi(\widehat{\mathbb{P}}) - \psi(\mathbb{P}) + \int \phi(z; \widehat{\mathbb{P}}) d\mathbb{P}(z)$, as discussed in \cite{kennedy2023semiparametric}. The first term is a centered sample average of a fixed function, so after scaling by $\sqrt{n}$ it converges to a normally distributed random variable by the Central Limit Theorem. The second term, often referred to as the empirical process term, is typically of smallest order, since it is a centered sample average of a random variable with shrinking variance. The third term, $R_2(\widehat{\mathbb{P}}, \mathbb{P})$, is arguably most important as it represents bias; it is ideally second-order (i.e., involving products/squares of differences between $\widehat\Pb$ and $\Pb$) in order for it not to dominate the other terms. For small finite values of $\delta$,  asymptotic normality can be established using standard methods; however, these methods otherwise break down, as each of the terms depends on $\delta$ in complicated ways. To show this, we first consider the remainder $R_2(\widehat{\mathbb{P}}, \mathbb{P})$.

\begin{proposition} \label{remainder_term}
    The remainder of the von Mises expansion  $\psi(\widehat{\mathbb{P}}) - \psi(\mathbb{P}) + \int \varphi(z; \widehat{\mathbb{P}}) \ d\mathbb{P}(z)$ is given by $R_2(\widehat{\mathbb{P}}, \mathbb{P}) = R_1 - R_2$ where
    \begin{align*}
        R_1 &=  \mathbb{E} \left[\int_a \left(\frac{\widehat{q}_\delta}{\widehat{\pi}_a} - \frac{q_\delta}{\pi_a} \right)\Big((\pi_a - \widehat{\pi}_a) \widehat{\mu}_a + (\mu_a - \widehat{\mu}_a) \pi_a \Big) da\right] \\
        R_2 &= \mathbb{E} \left[\left(\int_a \frac{q_\delta}{\pi_a} \widehat{\mu}_a\widehat{\pi}_a da \right)\left(\int_a \frac{\widehat{q}_\delta}{\widehat{\pi}_a} (\pi_a - \widehat{\pi}_a) da \right)^2\right]
    \end{align*}
    for $\pi_a = \pi( a \mid X)$, $q_\delta = q_\delta(a \mid X)$, and $\mu_a = \mu(X,  a)$.
\end{proposition}

Importantly, \cref{remainder_term} demonstrates that the remainder exhibits many of the same issues addressed in \cref{sigma_bounds}; namely, $R_1$ depends on the squared likelihood ratio ${q_\delta(a \mid x)}/{\pi(a \mid x)}$ which is lower bounded by $\delta$. Consequently,  traditional methods of bounding the remainder (like the Cauchy-Schwarz inequality) lead to an unsatisfactory dependence on $\delta$; this is a key point where our  analysis differs from that of \cite{diaz2020causal}. We elaborate on these differences below.
\begin{remark} \label{rem:csbd}
    The usual way to bound $R_2(\widehat{\mathbb{P}}, \mathbb{P})$ is via the Cauchy-Schwarz inequality, to obtain a doubly-robust product of $L_2(\Pb)$ errors. This is the approach utilized in \cite{diaz2020causal}, where the authors argue that
\begin{align} \label{l2_upper_bound}
    R_2(\widehat{\mathbb{P}}, \mathbb{P}) &\lesssim \left| \left| \frac{\widehat{q}_\delta}{\widehat{\pi}} - \frac{q_\delta}{\pi} \right| \right|_2 \Big( \left| \left| \widehat{\pi} - \pi \right| \right|_2 + \left| \left| \widehat{\mu} - \mu \right| \right|_2\Big) + \left| \left| \widehat{\pi} - \pi \right| \right|^2_2
\end{align}
for $||f||^2_2 = \int f^2(z) d \mathbb{P}(z)$. However, \cref{l2_upper_bound} is suboptimal in several ways. First, strong positivity is required in the proof of this expression. Positivity is not required for identification, and so we would ideally also avoid it in estimation. For example, in our analysis we only use a weak version of positivity, allowing holes in the support, as described in \cref{minimax_section}. Second, the $L_2(\mathbb{P})$-norm of the difference in likelihood ratios has an irreducible dependence on $\delta$; specifically, it is lower bounded by $\delta$, so the norm becomes unbounded as $\delta$ grows large. In fact, if we let $\gamma(\delta) = \int_a \exp(\delta a)\pi_a da$, it follows that
\begin{align} \label{l2_lower_bound}
    \delta  \left| \left| \frac{\gamma(\delta) }{\widehat{\gamma}(\delta)} - 1 \right| \right|^2_2 \lesssim \left| \left| \frac{\widehat{q}_\delta}{\widehat{\pi}} - \frac{q_\delta}{\pi} \right| \right|^2_2,
\end{align}
so as long as $|\widehat{\pi}_a - \pi_a|$ is not exactly zero across the support of $A$, then the $L_2(\mathbb{P})$ norm is lower bounded by $\delta$. Consequently, for larger tilts, the convergence rates established in \cite{diaz2020causal} can be inaccurate due to the omission of $\delta$.
\end{remark}

Surprisingly, it is possible to remove any dependence on $\delta$ and still obtain a doubly-robust style upper bound on $R_2(\widehat{\mathbb{P}}, \mathbb{P})$, via a new analysis. Indeed, it is also possible to do so without requiring positivity across the entire support; we do so with the same weak positivity from \cref{sigma_bounds}, which we establish in the following theorem. 

\begin{theorem} \label{remainder_bound}
    Suppose $\text{min}\left\{ \pi_a, \widehat{\pi}_a \right\} \geq \pi_{\min} >0$ for all $a \in  [\eta, 1]$ for some $\eta \in [0, 1)$ and that $\pi_a, \widehat{\pi}_a$, and $\widehat{\mu}$ are uniformly bounded from above. Then,
    \begin{align*}
        R_2(\widehat{\mathbb{P}}, \mathbb{P}) \lesssim  || \widehat{\pi} - \pi||_{L^2_x, L^\infty_a} \Big(  || \widehat{\mu} - \mu ||_{L^2_x, L^\infty_a}  +  || \widehat{\pi} - \pi||_{L^2_x, L^\infty_a} \Big)
    \end{align*}
    where $||f||^2_{L^2_x, L^\infty_a} =  \int \big( \underset{a}{\text{sup}} \, |f(x, a) | \big)^2 d \mathbb{P}(x)$ is a mixed $L_2(\mathbb{P})$-sup norm.
\end{theorem}

The key to removing the dependence on $\delta$ in the remainder bound lies in using a different characterization of the  errors of $\widehat{\pi}$ and $\widehat{\mu}$ across the support of $a$, via a mixed $L_2(\Pb)$-sup norm. This is necessary since,  when using standard $L_2$ norms, the tilt size $\delta$ essentially acts as a penalty on the nuisance estimation error; if either is not well estimated, then larger $\delta$ values can exacerbate the bias, as described in \cref{rem:csbd}. By instead requiring that the $L_2$ error over $X$ is uniformly controlled over the support of $A$, the bias can be bounded completely independently of $\delta$. Importantly, when it comes to mixed $L_2(\Pb)$-sup norm control over the estimation error, we may estimate $\pi$ as described in \cref{proposed_estimator_section}. However, uniform control for nonparametric estimation of $\mu$ requires some structure over low-density regions; for example, a lower bound on the joint density of $(X, A)$ on the region of uniformity or a model allowing for extrapolation \citep{hansen_uniform}. Having established this new upper bound for the remainder, we now use it to give weak conditions for asymptotic normality. Importantly, we explicitly show how the convergence rate is a function of $\delta$.

\begin{theorem} \label{asymptotic_normality}
Assume $\text{min}\left\{ \pi_a, \widehat{\pi}_a \right\} \geq \pi_{\min} >0$ for all $a \in  [\eta, 1]$ for some $\eta \in [0, 1)$,  that $\pi_a, \widehat{\pi}_a$, and $\widehat{\mu}$ are uniformly bounded from above, $\mathbb{P}(|Y| < C) = 1$ for some $C < \infty$, and
\begin{enumerate}
    \item[(i)] $\sqrt{n / \delta} \to \infty$
    \item[(ii)] $ || \widehat{\pi} - \pi||_{L^2_x, L^\infty_a} = o_{\mathbb{P}}(1)$ and  $ || \widehat{\mu} - \mu ||_{L^2_x, L^\infty_a} =  o_{\mathbb{P}}(1)$
    \item[(iii)] $ || \widehat{\pi} - \pi||_{L^2_x, L^\infty_a} \cdot   || \widehat{\mu} - \mu ||_{L^2_x, L^\infty_a}  +  || \widehat{\pi} - \pi||^2_{L^2_x, L^\infty_a} = o_{\mathbb{P}}\left(1 / \sqrt{n / \delta}\right)$
\end{enumerate}
Then 
\begin{align*}
    \frac{\sqrt{n}}{\sigma_\delta}\left(\widehat{\psi}(\delta) - \psi(\delta)\right)\overset{d}{\longrightarrow}N(0,1)
\end{align*}
where $\sigma^2_\delta = \var\{\varphi(Z;\delta)\}$ is the nonparametric efficiency bound defined in \cref{nonpar_efficiency_bound}.
\end{theorem}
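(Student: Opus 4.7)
The approach is to apply the von Mises/debiased-estimator expansion
\begin{align*}
    \widehat{\psi}(\delta) - \psi(\delta) = (\Pn - \Pb)\{\varphi(Z; \Pb)\} + (\Pn - \Pb)\{\varphi(Z; \widehat{\Pb}) - \varphi(Z; \Pb)\} + R_2(\widehat{\Pb}, \Pb),
\end{align*}
and, after multiplying through by $\sqrt{n}/\sigma_\delta$, show that the first term converges in distribution to $N(0,1)$ while the other two are $o_\Pb(1)$. The key quantitative input is \cref{sigma_bounds}, which under the assumed weak positivity and uniform upper bound on $\pi_a$ gives $\sigma_\delta^2 \asymp \delta$; this pins down the target rate $\sigma_\delta/\sqrt{n} \asymp \sqrt{\delta/n}$ that the empirical process and bias terms must beat.

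\textbf{First term via Lindeberg-Feller.} Because $\delta$ (and hence $\sigma_\delta$) may grow with $n$, the standardized sample average $\sqrt{n}(\Pn - \Pb)\varphi/\sigma_\delta$ is a triangular-array object, so I would invoke the Lindeberg-Feller CLT. Under weak positivity and $|Y|\leq C$, the influence function is uniformly bounded as $|\varphi(Z;\delta)| \lesssim \delta$, since the likelihood ratio $q_\delta/\pi$ is $O(\delta)$ on the positivity region and $Y$, $\mu$, and $\E_Q(\mu\mid X)$ are bounded. Combined with $\sigma_\delta \gtrsim \sqrt{\delta}$, the ratio $|\varphi|/\sigma_\delta \lesssim \sqrt{\delta}$, so condition (i) $\sqrt{n/\delta}\to\infty$ forces the Lindeberg truncation level $\varepsilon \sqrt{n}\,\sigma_\delta$ to eventually exceed the almost-sure bound on $|\varphi|$; the Lindeberg condition then holds trivially.

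\textbf{Empirical process term.} The sample splitting in \cref{cross_validated_algo} lets me condition on the auxiliary folds used to fit the nuisance estimators; on the held-out fold, $\varphi(\cdot;\widehat{\Pb})$ is a fixed function, so conditional Chebyshev gives
\begin{align*}
    (\Pn - \Pb)\{\varphi(Z; \widehat{\Pb}) - \varphi(Z; \Pb)\} = O_\Pb\!\left( \frac{\|\varphi(\cdot; \widehat{\Pb}) - \varphi(\cdot; \Pb)\|_2}{\sqrt{n}} \right).
\end{align*}
The task then reduces to showing $\|\varphi(\cdot;\widehat{\Pb}) - \varphi(\cdot;\Pb)\|_2 = o_\Pb(\sqrt{\delta})$. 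Expanding the three pieces of $\varphi$, applying the triangle inequality, and using $\sup_a\|\widehat\pi_a - \pi_a\|=o_\Pb(1)$ and $\sup_a\|\widehat\mu_a - \mu_a\|=o_\Pb(1)$ from condition (ii), I would route every factor of $q_\delta/\pi$ through the same $\sqrt{\delta}$-scale estimate that drives the upper bound on $\sigma_\delta$ in \cref{sigma_bounds}, yielding $o_\Pb(\sqrt{\delta}) = o_\Pb(\sigma_\delta)$; dividing by $\sqrt{n}$ gives the required $o_\Pb(\sigma_\delta/\sqrt{n})$.

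\textbf{Remainder and assembly.} For the bias term, \cref{remainder_bound} applies directly under the stated weak positivity and boundedness, giving
\begin{align*}
    |R_2(\widehat{\Pb}, \Pb)| \lesssim \sup_a\|\widehat\pi_a - \pi_a\|\cdot\sup_a\|\widehat\mu_a - \mu_a\| + \sup_a\|\widehat\pi_a - \pi_a\|^2.
\end{align*}
Condition (iii) makes this $o_\Pb(1/\sqrt{n/\delta}) = o_\Pb(\sqrt{\delta/n})$, and since $\sigma_\delta/\sqrt{n} \gtrsim \sqrt{\delta/n}$ we obtain $\sqrt{n}\,R_2/\sigma_\delta = o_\Pb(1)$. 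Combining the three pieces with Slutsky's theorem then delivers the claimed limit. The main obstacle, and what breaks standard proofs when $\delta$ is allowed to grow, is the empirical process step: a naive Cauchy-Schwarz-style bound $\|\varphi(\cdot;\widehat\Pb)-\varphi(\cdot;\Pb)\|_2 \lesssim \delta\cdot o_\Pb(1)$ loses an extra $\sqrt{\delta}$ and would destroy the limit whenever $\delta\to\infty$, so the careful argument must absorb one factor of $\sqrt{\delta}$ into the intrinsic scale of the influence function, very much in the spirit of the mixed-norm trick that powered \cref{remainder_bound}.
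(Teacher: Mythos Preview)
Your proposal is correct and follows essentially the same route as the paper: the same three-term decomposition, Lindeberg--Feller for the leading term via the almost-sure bound $|\varphi|\lesssim\delta$ together with $\sigma_\delta\gtrsim\sqrt{\delta}$ from \cref{sigma_bounds}, the target $\|\varphi(\cdot;\widehat\Pb)-\varphi(\cdot;\Pb)\|_2=o_\Pb(\sqrt{\delta})$ for the empirical process piece (obtained by writing the likelihood-ratio difference as $(q_\delta/\pi)\cdot o_\Pb(1)$ and then passing the $L_2$ norm through the same squared-ratio integral that gives the upper bound in \cref{sigma_bounds}), and \cref{remainder_bound} plus condition (iii) for the bias. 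Your closing diagnosis of why the naive $L_2$ bound fails and how the mixed-norm control recovers the missing $\sqrt{\delta}$ is exactly the mechanism the paper exploits.
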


\cref{asymptotic_normality} shows that the estimator is still asymptotically  normal after proper scaling, albeit with a non-standard $\sqrt{n/\delta}$ rate of convergence instead of the more usual $\sqrt{n}$ rate. Establishing a $\delta$-dependent convergence rate clarifies the interplay between $n$ and $\delta$ and the estimation error; specifically, this illustrates how the effective sample size for continuous incremental effect estimation is $n/\delta$ instead of $n$. In the next section, we consider the setting where $\delta \to \infty$, and show that this can be used to recover a novel estimator of the dose-response curve at the edge of the support.

\section{Dose-Response Estimation} \label{dose_response_section}

In this section we discuss how incremental effect estimation at the infinite $\delta$ extremes can yield new nonparametric estimators of the dose-response curve. Recall that, as illustrated in \cref{Exp_tilted_density}, when $\delta \to -\infty$ or $\delta \to \infty$ then the tilted density converges to a point mass at the boundary of the treatment support. In fact, we can extend this framework to estimate the dose-response at any interior $a^\prime \in [0, 1]$ by applying a positive tilt $(\delta \to \infty)$ for all $a < a^\prime$ and a negative tilt $(\delta \to -\infty)$ for all $a > a^\prime$, which yields the ``reflected'' exponentially tilted intervention distribution,
\begin{align*}
    r_\delta(a \mid x) =  \omega_{a^\prime}\frac{\text{exp}(\delta a) \pi(a \mid x) \mathbbm{1}(a \leq a^\prime) }{\int^{a^\prime}_0 \text{exp}(\delta t) \pi(t \mid x) dt} +  (1-\omega_{a^\prime})\frac{\text{exp}(-\delta a) \pi(a \mid x) \mathbbm{1}(a > a^\prime)}{\int^{1}_{a^\prime} \text{exp}(-\delta t) \pi(t \mid x) dt}
\end{align*}
where $\omega_{a^\prime} = \int^{a^{\prime}}_0 \pi(t \mid x) dt$ applies a weight based on the probability mass of $\pi(a \mid x)$ above and below $a^\prime$. Consequently, we can estimate the incremental effect under the reflected exponential tilt,
\begin{align*}
    \psi_R(\delta, a^\prime) =  \int_\mathcal{X} \int_\mathcal{A} \mu(x, a) r_\delta(a \mid x) \, da \, d\mathbb{P}(x)
\end{align*}
to obtain an estimate of $\mathbb{E}[Y^{a^\prime}]$ by taking $\delta \to \infty$. The following theorem establishes the efficient influence function of the incremental effect under $r_\delta(a \mid x)$.
\begin{theorem} \label{reflected_eif}
    Suppose $\delta \in [-M,M]$ for $M<\infty$. Then, the efficient influence function of $\psi_R(\delta, a^\prime)$ under a nonparametric model is given $\varphi_R(Z;\delta, a^\prime) = D_Y + D_{r, \mu} + D_\psi$ where
\begin{align*}
    D_Y &= \frac{r_\delta(A \mid X)}{\pi(A \mid X)} \Big(Y - \mu(X, A) \Big) \\
    D_{r, \mu} &= \frac{r_\delta(A \mid X)}{\pi(A \mid X)} \mu(X, A) + \Big(\mathbbm{1}(A \leq a^\prime)L(X, A) + \mathbbm{1}(A > a^\prime) U(X, A) \Big) - \mathbb{E}_R(\mu(X, A) \mid X) \\
    D_\psi &= \mathbb{E}_R(\mu(X, A) \mid X)  - \psi_R(\delta; a^\prime)
\end{align*}
where $\mathbb{E}_R(\mu(X, A) \mid X) = \int_a \mu(X, a) r_\delta(a \mid X) da$ is the conditional mean of $\mu(X, A)$ under the reflected exponentially tilted distribution and
\begin{align*}
    L(X, A) &= \mathbb{E}_R(\mu(X, A) \mid X, A \leq a^\prime) - \frac{r_\delta(A \mid X)}{\pi(A \mid X)} \mathbb{E}_R(\mu(X, A) \mid X, A \leq a^\prime) \\
    U(X, A) &= \mathbb{E}_R(\mu(X, A) \mid X, A > a^\prime) - \frac{r_\delta(A \mid X)}{\pi(A \mid X)} \mathbb{E}_R(\mu(X, A) \mid X, A > a^\prime).
\end{align*}
\end{theorem}

By \cref{reflected_eif}, we can see that the efficient influence function of the reflected incremental effect is quite similar in composition to the influence function described in \cref{eif_theorem}; in fact, taking $a^\prime \to 1$ or $a^\prime \to 0$ recovers $\varphi(Z; \delta)$ exactly. Now, the behavior of $\varphi_R(Z; \delta, a^\prime)$ is governed by the ratio $r_\delta(a \mid x) / \pi(a \mid x)$ instead of $q_\delta(a \mid x) / \pi(a \mid x)$. Furthermore, the adjustment term $D_{r, \mu}$ is more intricate; rather than a single term involving $\mu(X, A)$, the reflection introduces $L(X, A)$ and $U(X, A)$ to account for differing behavior of $\mu(X, A)$ under $r_\delta(a \mid x)$ above and below $a^\prime$. However, $\varphi_R(Z; \delta, a^\prime)$ still scales with $\delta$ at the same rate described in \cref{sigma_bounds}. Furthermore, asymptotic normality of the one-step estimator for $\psi_R(\delta, a^\prime)$ can be established under identical conditions to those described in \cref{asymptotic_normality}, although now we require weak positivity in a neighborhood of $a^\prime$ rather than the edges of the treatment support. Since these results are largely the same as before, we provide a longer, more technical discussion of reflected incremental effect in the supplementary material. The following theorem establishes the rate of convergence of $\widehat{\psi}_R(\delta, a^\prime)$, as an estimator of the dose-response curve.

\begin{theorem} \label{dose_response_theorem}
    Suppose that $|\mu(x, a) - \mu(x, a^\prime)| \leq L|a - a^\prime|$ for all $x$ and that the assumptions of \cref{asymptotic_normality} hold, where now weak positivity holds in a neighborhood of $a^\prime$, i.e. that $\pi_{\min} \leq \pi(a \mid x)$ for all $a \in  [a^{\prime} - \eta, a^\prime + \eta]$ for some $\eta \in (0, \text{min}\{a^\prime, 1 - a^\prime\}]$. Then,
    \begin{align*}
        \left|\widehat{\psi}_R(\delta, a^\prime) - \E(Y^{a^\prime}) \right| = O_\mathbb{P}\left(\frac{1}{\delta} + \sqrt{\frac{1}{n/\delta}} \right)
    \end{align*}
\end{theorem}

In this result, the $1/\delta$ term is the bias and the $1/(n/\delta)$ term is the variance, illustrating a bias-variance trade-off in $\delta$: larger $\delta$ yields smaller bias for estimating the dose-response curve $\E(Y^{a^\prime})$, but larger variance, and vice versa. Thus, for the purposes of dose-response estimation $\delta$ plays the role of a bandwidth parameter. Balancing the  bias and standard deviation by taking $\delta \sim n^{1/3}$ yields the rate $ |\widehat{\psi}_R(n^{1/3}, a^\prime) - \E(Y^{a^\prime}) | = O_\mathbb{P}(n^{-1/3} )$ which is the minimax optimal rate for estimating Lipschitz regression functions. An interesting distinction between the incremental effect estimator of the dose-response and a more standard kernel estimator (e.g., from \cite{kennedy2017cte}) is that the latter is localized, only using treatment values near the target point, where the former uses all treatment values, albeit heavily weighted towards the target point $A=a^\prime$.

In the next result, we show that after scaling by $\delta$, the asymptotic variance of the incremental effect estimator (i.e., the variance of $\varphi$) is identical to that of the kernel estimator used in \cite{kennedy2017cte} when $\delta \rightarrow \infty$.

\begin{corollary}\label{sigma_delta_ratio}
    Suppose that $\pi$, $\mu$, and $\var(Y \mid \cdot, \cdot)$ are Lipschitz continuous, $\pi$ is uniformly bounded from above, $|Y|\leq B$ with probability one,  and $\pi_{\min} \leq \pi(a \mid x)$ for all $a \in  [a^{\prime} - \eta, a^\prime + \eta]$ for some $\eta \in (0, \text{min}\{a^\prime, 1 - a^\prime\}]$. Then, it follows that
    \begin{align*}
        \underset{\delta \to \infty}{\text{lim}}\left\{ \frac{ \var(\varphi_R(Z;\delta, a^\prime)) }{\delta}\right\} =  \mathbb{E} \left[ \frac{1}{2} \left(\frac{\var(Y\mid X,a^\prime)}{\pi(a^\prime\mid X)} \right) \right] .
    \end{align*}
\end{corollary}

\cref{sigma_delta_ratio} shows that $\psi_R(\delta, a^\prime)$ does not suffer any loss of efficiency in estimating the dose-response curve and (perhaps surprisingly) matches the performance of the DR-learner-style estimator of \cite{kennedy2017cte}, despite only requiring weak positivity around $a^\prime$.

In addition to confirming how incremental effect estimators for continuous treatments can be harnessed for dose-response estimation, the results in this section point towards interesting directions for future work. For example, it would be useful to explore whether the tilted density can be viewed more formally as a kind of kernel weighting, whether incremental effect-based estimators can ever exploit higher-order Holder smoothness, beyond the Lipschitz smoothness used here, data-driven tuning of $\delta$, etc.

\section{Application} \label{experiments_and_simulations}

In this section we estimate the incremental effect of political advertisements on campaign contributions (as well as the associated dose-response curve), following \cite{urban2014dollars}. Typically, political campaigns tend to avoid advertising in states that are not competitive, since this could be a poor use of a limited budget on something unlikely to change election outcomes. However, it could be the case that advertising in non-competitive states could meaningfully increase campaign contributions, which in turn could help finance the campaign. To assess the relationship between the number of political advertisements and campaign contributions, \cite{urban2014dollars} make use of the fact that media markets often cross state boundaries; for example, residents of southwest Illinois (a non-competitive state) were exposed to advertisements in the St. Louis market (located in a competitive state). Therefore, by restricting their focus to non-competitive states the authors can avoid spillover effects due to other campaign events that may also drive donations.

\begin{figure}[h]
\centering
\begin{tikzpicture}  

\begin{axis}[
    xlabel={$\text{log}(\text{Total Ads} + 1)$},
    ylabel={Probability Density},
    ylabel style={yshift=0.1cm},
    width=0.7\textwidth,
    height=0.45\textwidth,
    ymax=1,
    grid=both,
    bar width=0.7cm,
    grid style={line width=0.1pt, draw=gray!20},
    legend cell align=left,
    xtick pos=bottom,
    ytick pos=left,
    legend style={at={(1.05,0.5)}, anchor=west},
    % Remove or adjust legend image post style if needed
]

\definecolor{brightblue}{HTML}{00BFC4}
  \definecolor{ggred}{HTML}{fb2424}
  
    \addplot[hist={bins=25, density}, forget plot, style={fill=brightblue!25}] table [y index=0] {files/histogram_data.csv};

% Gray lines that should not appear in the legend
\addplot[color=gray, opacity=0.75, forget plot] table[x="a", y="delta_0.05", col sep=comma] {files/density_data.csv};
\addplot[color=gray, opacity=0.75, forget plot] table[x="a", y="delta_0.1", col sep=comma] {files/density_data.csv};
\addplot[color=gray, opacity=0.75, forget plot] table[x="a", y="delta_0.2", col sep=comma] {files/density_data.csv};
\addplot[color=gray, opacity=0.75, forget plot] table[x="a", y="delta_0.3", col sep=comma] {files/density_data.csv};
\addplot[color=gray, opacity=0.75, forget plot] table[x="a", y="delta_0.4", col sep=comma] {files/density_data.csv};
\addplot[color=gray, opacity=0.75, forget plot] table[x="a", y="delta_0.5", col sep=comma] {files/density_data.csv};
\addplot[color=gray, opacity=0.75, forget plot] table[x="a", y="delta_1", col sep=comma] {files/density_data.csv};
\addplot[color=gray, opacity=0.75, forget plot] table[x="a", y="delta_1.25", col sep=comma] {files/density_data.csv};
\addplot[color=gray, opacity=0.75, forget plot] table[x="a", y="delta_1.5", col sep=comma] {files/density_data.csv};
\addplot[color=gray, opacity=0.75, forget plot] table[x="a", y="delta_1.75", col sep=comma] {files/density_data.csv};
\addplot[color=gray, opacity=0.75, forget plot] table[x="a", y="delta_2", col sep=comma] {files/density_data.csv};
\addplot[color=gray, opacity=0.75, forget plot] table[x="a", y="delta_2.5", col sep=comma] {files/density_data.csv};
\addplot[color=gray, opacity=0.75, forget plot] table[x="a", y="delta_5", col sep=comma] {files/density_data.csv};

% Main plot for delta = 0
\addplot[very thick, black, smooth] table[x="a", y="pi", col sep=comma] {files/density_data.csv};
\addlegendentry{$\delta = 0$}

% Gray line for delta between 0 and 1 that appears in the legend
\addplot[color=gray, opacity=0.75] table[x="a", y="delta_0.8", col sep=comma] {files/density_data.csv};
\addlegendentry{$\delta \in (0, 10)$}

% Red line for delta = 1 that should appear in the legend
\addplot[color=ggred, ultra thick] table[x="a", y="delta_10", col sep=comma] {files/density_data.csv};
\addlegendentry{$\delta = 10$}
    
\end{axis}
\end{tikzpicture}
\caption{Histogram of the treatment variable, \textit{Total Ads}}
\label{total_ads}
\end{figure}

Although \cite{urban2014dollars} are interested in the number of political advertisements as a treatment variable, they dichotimize this measurement in order to use standard propensity score matching techniques to estimate the effect on campaign contributions. However, dichotimizing a continuous treatment variable loses potentially valuable information and can complicate interpretation of effects. Incremental effects provide an interesting alternative for exploring how contributions might change if the distribution of political advertisements shifted up or down. We use the data collected by \cite{urban2014dollars} (freely available through the \href{https://dataverse.harvard.edu/dataset.xhtml?persistentId=doi:10.7910/DVN/AIF4PI}{Harvard Dataverse} via \cite{fong2018covariate}), which features Federal Election Commission data on campaign contributions, Wisconsin Advertising Project data on televised campaign ads, and demographic data from the 2000 U.S. census. In total there are 16,265 observations, each representing one ZIP code. We keep the same set of covariates included in the analysis from \cite{urban2014dollars} and \cite{fong2018covariate}: log total population, log median household income, population density, percent Hispanic, percent African American, percent college graduate, percent over the age of 65, and an indicator for whether  it is possible to commute to the ZIP code from a competitive state or not. Notably, the treatment variable (log of total political advertisements by ZIP code) has holes in its density, as shown in \cref{total_ads}, further motivating the use of incremental effects to estimate the relationship between political advertisements and campaign contributions.

In our analysis, we estimate $\mu(x, a)$ by fitting a Random Forest using the \texttt{ranger} package in \texttt{R}. Similarly, we estimate $\pi(a \mid x)$ using \texttt{ranger} on the kernel-transformed outcomes as discussed in \cref{proposed_estimator_section}, where we select the bandwidth $h$ by cross-validation. Note that other flexible regression methods (or ensembles thereof) could also be used here as well. For dose-response estimation we let $\delta = n^{1/3}$ since this matches the minimax optimal rate for estimating Lipschitz regression functions, as discussed in \cref{dose_response_section}. Additionally, in \cref{simulation_section}, we provide a simulation study that showcases width and coverage as a function of $\delta$ for different orders of $n$, as well as a longer discussion on how one chooses $\delta$ in practice.

\begin{figure}[h]
  \centering
  \begin{tikzpicture}
    \begin{groupplot}[
      group style={group size=2 by 1, horizontal sep=1.5cm},
      width=0.495\textwidth,
      height=0.45\textwidth,
      grid=both,
      grid style={line width=0.1pt, draw=gray!20, opacity=0.5},
      legend cell align=left,
      xtick pos=bottom,
      ytick pos=left,
    ]

    % ---------- LEFT: incremental effect ψ̂(δ) ----------
    \nextgroupplot[
      xlabel={$\delta$},
      ylabel={$\widehat{\psi}(\delta)$},
              ylabel style={yshift=-0.25cm},
    ]

      \definecolor{brightblue}{HTML}{00BFC4}

      \pgfplotstableread[col sep=comma]{files/incremental_plot_updated.csv}\incrementaltable

      % point estimates
      \addplot[color=black]
        table[x="delta", y="psihat"] {\incrementaltable};

      % CI band
      \addplot[name path=ci_lower, color=brightblue, dashed]
        table[x="delta", y="ci_lower"] {\incrementaltable};
      \addplot[name path=ci_upper, color=brightblue, dashed]
        table[x="delta", y="ci_upper"] {\incrementaltable};
      \addplot[brightblue!50, fill opacity=0.3]
        fill between[of=ci_lower and ci_upper];

    % ---------- RIGHT: dose–response comparison ----------
\nextgroupplot[
  xlabel={$\log(\text{Total Ads}+1)$},
  ylabel={Dose-Response},
  ylabel style={yshift=-0.25cm},
  xmin=0, xmax=6,
  ymin=-0.5, ymax=2.5,
  restrict x to domain*=0:6,
  restrict y to domain*=-0.5:2.5,
  unbounded coords=discard,
  filter discard warning=false,
]

      % colors: match your ggplot palette
      \definecolor{npred}{HTML}{F8766D}    % npcausal::ctseff
      \definecolor{reflblue}{HTML}{00BFC4} % reflected incremental

      \pgfplotstableread[col sep=comma]{files/ctseff_new.csv}\ctstable
      \pgfplotstableread[col sep=comma]{files/reflected_dose_df_final.csv}\refltable

      % --- npcausal::ctseff band + line ---
      \addplot[name path=cts_low,  draw=npred!60, forget plot]
        table[x="a.vals", y="ci.ll"]  {\ctstable};
      \addplot[name path=cts_high, draw=npred!60, forget plot]
        table[x="a.vals", y="ci.ul"] {\ctstable};
      \addplot[npred!40, fill opacity=0.25, forget plot]
        fill between[of=cts_low and cts_high];

      \addplot[draw=npred, thick]
        table[x="a.vals", y="est"] {\ctstable};

      % --- Reflected incremental band + line ---
      \addplot[name path=refl_low,  draw=reflblue!60, forget plot]
        table[x="orig_a", y="ci_lower"]  {\refltable};
      \addplot[name path=refl_high, draw=reflblue!60, forget plot]
        table[x="orig_a", y="ci_upper"] {\refltable};
      \addplot[reflblue!40, fill opacity=0.25, forget plot]
        fill between[of=refl_low and refl_high];

      \addplot[draw=reflblue, thick]
        table[x="orig_a", y="psi_hat"] {\refltable};
    \end{groupplot}
  \end{tikzpicture}
  \caption{Left: Estimated incremental effect of total political advertisements on individual log campaign contributions. Right: estimated dose-response $a \mapsto \mathbb{E}[Y^a]$ via $\psi_R(\delta, a)$ (blue) and the nonparametric estimator of \cite{kennedy2017cte} (red).}
  \label{fig:incremental_and_dose_response}
\end{figure}

Our findings complement and validate those of \cite{urban2014dollars}. As visualized in the left plot of \cref{fig:incremental_and_dose_response}, we find a slight increasing relationship between total political advertisements per ZIP code and log campaign contributions. In particular, we find that moving from $\delta = 0$ to $\delta = 5$ corresponds roughly to an increase in average individual contributions from \$2.00 to \$2.20, where most of the benefit occurs for $\delta < 1$ and then levels out. Aggregating that difference across all residents could suggest substantial benefits to advertising in noncompetitive ZIP codes. On the other hand, we can see that \textit{decreasing} the the number of campaign ads leads to a substantial and asymmetric loss of campaign contributions. Given the small potential upside, but large potential downside, it would suggest that the 2008 presidential campaign was already well optimized. 

In the case of dose-response estimation (as shown in the right plot of \cref{fig:incremental_and_dose_response}), we find that the reflected incremental effect appears to struggle less in low-density regions than nonparametric kernel-based approach of \cite{kennedy2017cte}. One plausible explanation could be that the reflected incremental effect relies globally on $\pi(a \mid x)$, so it may be less sensitive to localized areas of low density. Aside from this, the two curves largely agree with each other. These results are promising, and suggest that dose-response estimation via incremental effects could be a fruitful area for future research.

\section{Discussion} \label{discussion_section}

In this paper we considered nonparametric estimation of incremental effects for continuous exposures. The incremental effect is defined by exponentially tilting the observed treatment density, which corresponds to an increase or decrease in the log-likelihood of receiving a given treatment level. We showed that incremental effects can be identified without requiring the positivity assumption, although estimation and inference require a weakened version of positivity. Importantly, in our analysis we emphasized that estimation is highly dependent on the size of the exponential tilt, as measured by the parameter $\delta$. To demonstrate this, we derived a minimax lower bound illustrating the fundamental limits of incremental effect estimation in terms of both sample size $n$ and tilt size $\delta$. This result shows that the effective sample size for incremental effect estimation is really $n/\delta$ instead of $\delta$. We developed a new analysis of the bias of a doubly robust incremental effect estimator, showing that $n/\delta$ rates can in fact be achieved under weak nonparametric conditions. Finally, we showed that taking $\delta \to \infty$ yields a new incremental effect-based estimator of  the dose-response curve, and analyzed its convergence rate, showing it can achieve optimal rates for Lipschitz functions.

There are lots of avenues for future work, including extending to time-varying and/or multivariate treatments, weakening the positivity assumption, further exploring dose-response estimation (e.g., cross-validation, higher smoothness, etc.), developing sensitivity analysis methods in the presence of unmeasured confounders, extending to other identification schemes, exploring derivative-based effects for continuous exposures \citep{zhou2022marginalinterventionaleffects, McCleanBransonKennedy+2024}, and more. 

\section*{Acknowledgments}
This research was supported by NSF CAREER Award 2047444  and  a fellowship provided by the Novartis Pharmaceuticals Corporation. The authors thank both Alec McClean and Alex Levis for very helpful discussions, and in particular Alex Levis for his construction of the Kumaraswamy distribution treatment density counterexample.

\bigskip

\section*{References}
\vspace{-1cm}
\bibliographystyle{abbrvnat}
\bibliography{references}

\begin{center}
{\large\bf SUPPLEMENTARY MATERIAL}
\end{center}

\begin{description}

\item \cref{reflected_ie_section}: Contains an extended discussion about the reflected incremental effect.
\item \cref{weak_positivity_section}: Contains an extended discussion about the weak positivity assumption.
\item \cref{simulation_section}: Contains additional simulations of coverage and interval width by $\delta$ and $n$.

\item \cref{sec:proofs}: Contains all proofs from the main text, including:
\begin{description}
    \item \cref{identified_eq_proof}: Proof of \cref{identified_eq}.
    \item \cref{eif_theorem_proof}: Proof of \cref{eif_theorem}.
    \item \cref{nonpar_efficiency_bound_proof}: Proof of \cref{nonpar_efficiency_bound}.
    \item \cref{sigma_bounds_proof}: Proof of \cref{sigma_bounds}.
    \item \cref{kl_divergence_lemma_proof}: Proof of \cref{kl_divergence_lemma}.
    \item \cref{minimax_lb_proof}: Proof of \cref{minimax_lb}.
    \item \cref{remainder_term_proof}: Proof of \cref{remainder_term}.
    \item \cref{l2_upper_bound_proof}: Proof of \cref{l2_upper_bound}.
    \item \cref{l2_lower_bound_proof}: Proof of \cref{l2_lower_bound}.
    \item \cref{remainder_bound_proof}: Proof of \cref{remainder_bound}.
    \item \cref{asymptotic_normality_proof}: Proof of \cref{asymptotic_normality}.
    \item \cref{reflected_eif_proof}: Proof of \cref{reflected_eif}.
    \item \cref{dose_response_theorem_proof}: Proof of \cref{dose_response_theorem}.
    \item \cref{sigma_delta_ratio_proof}: Proof of \cref{sigma_delta_ratio}.
    \item \cref{reflected_efficiency_bound_proof}: Proof of \cref{reflected_efficiency_bound}
\end{description}
\end{description}

\newpage

\section{Discussion of the reflected incremental effect} \label{reflected_ie_section}

In this section we provide a technical overview of the incremental effect under the reflected exponentially tilted intervention distribution. That is, for $a^\prime \in [0, 1]$ we consider
\begin{align*}
    \psi_R(\delta; a^\prime) =  \int_\mathcal{X} \int_\mathcal{A} \mu(x, a) r_\delta(a \mid x) \, da \, d\mathbb{P}(x)
\end{align*}
where we define
\begin{align*}
    r_\delta(a \mid x) =  \omega_{a^\prime}\frac{\text{exp}(\delta a) \pi(a \mid x) \mathbbm{1}(a \leq a^\prime) }{\int^{a^\prime}_0 \text{exp}(\delta t) \pi(t \mid x) dt} +  (1-\omega_{a^\prime})\frac{\text{exp}(-\delta a) \pi(a \mid x) \mathbbm{1}(a > a^\prime)}{\int^{1}_{a^\prime} \text{exp}(-\delta t) \pi(t \mid x) dt}
\end{align*}
such that $\omega_{a^\prime} = \int^{a^{\prime}}_0 \pi(t \mid x) dt$. Notably, the conditions required for estimating $\psi_R(\delta; a^\prime)$ are identical to those required to estimate $\psi(\delta)$, despite differences in the functional form of each estimator. As intuition, we first consider the variance of the efficient influence function $\varphi_R(Z;\delta, a^\prime)$, as defined in \cref{reflected_eif}. Importantly, $\var(\varphi_R(Z;\delta, a^\prime))$ is still subject to the same dependency on the magnitude of $\delta$, and consequently, the rate of convergence for our estimator of $\psi_R(\delta; a^\prime)$ must account for the mangtitude of the tilt. To formalize this point, in the following theorem we derive the nonparametric efficiency bound for the reflected incremental effect.

\begin{lemma} \label{reflected_efficiency_bound}
    The variance of the influence function $\varphi_R(Z;\delta, a^\prime )$, which is the nonparametric efficiency bound when $\delta \in [-M,M]$ for some $M<\infty$, is given by $\var(\varphi_R(Z;\delta, a^\prime)) = \mathbb{E}[D^2_Y + D^2_{r, \mu} + D^2_\psi]$ $D_Y$, $D_{r, \mu}$, and $D_\psi$ follow the same definitions given in \cref{reflected_eif}. Moreover, it follows that
    \begin{align*}
        \var(\varphi_R(Z;\delta, a^\prime)) \asymp \delta.
    \end{align*}
\end{lemma}

By \cref{reflected_efficiency_bound}, we can analogously see that the dominating terms in $\var(\varphi_R(Z;\delta, a^\prime))$ are determined by the ratio $r_\delta(A \mid X) / \pi(A \mid X)$, much like how $q_\delta(A \mid X) / \pi(A \mid X)$ dominates the variance of $\varphi(Z;\delta)$. Since $r_\delta(a \mid x)$ is defined by splitting $q_\delta(a \mid x)$ above and below $a^{\prime}$, it is clear that the same dependence on $\delta$ established in \cref{sigma_bounds} holds for the reflected incremental effect as well. Having established an analogous dependence on $\delta$ for $\var(\varphi_R(Z;\delta, a^\prime))$, we now consider estimation of $\psi_R(\delta; a^\prime)$.

In order to estimate the reflected incremental effect, we again use the one-step estimator, defined as
\begin{align*}
  \widehat\psi_R(\delta, a^{\prime}) = 
  \mathbb{P}_n \left[
      \frac{\widehat r_\delta(A \mid X)}{\widehat\pi(A \mid X)}
       Y +
        \mathbbm{1}(A \leq a^{\prime})\widehat L(X, A) + 
        \mathbbm{1}(A > a^{\prime})\widehat U(X, A)
    \right]
\end{align*}
where
\begin{align*}
  \widehat L(X, A)
  &= \widehat{\mathbb{E}}_R\big[\widehat\mu(X, A) \mid X, A \leq a^{\prime}\big] -
     \frac{\widehat r_\delta(A \mid X)}{\widehat\pi(A \mid X)}\,
     \widehat{\mathbb{E}}_R\big[\widehat\mu(X, A) \mid X, A \le a^{\prime} \big] \\
  \widehat U(X, A)
  &= \widehat{\mathbb{E}}_R\big[\widehat\mu(X, A) \mid X, A > a^{\prime}\big]  - \frac{\widehat r_\delta(A \mid X)}{\widehat\pi(A \mid X)}
     \widehat{\mathbb{E}}_R\big[\widehat\mu(X, A) \mid X, A > a^{\prime}\big]
\end{align*}
and
\begin{align*}
  \widehat{\mathbb{E}}_R\big[\widehat\mu(X, A) \mid X\big]
  &= \int \widehat\mu(X, a)\widehat r_\delta(a \mid X) da \\[0.4em]
  \widehat{\mathbb{E}}_R\big[\widehat\mu(X, A) \mid X, A \le a^{\prime}\big]
  &= \int_{a \le a^{\prime}} \widehat\mu(X, a) \left(
      \frac{\widehat r_\delta(a \mid X)}
           {\widehat{\mathbb{P}}(A \le a^{\prime} \mid X)} \right) da \\
  \widehat{\mathbb{E}}_R\big[\widehat\mu(X, A) \mid X, A > a^{\prime}\big]
  &= \int_{a > a^{\prime}} \widehat\mu(X, a) \left(
      \frac{\widehat r_\delta(a \mid X)}
           {\widehat{\mathbb{P}}(A > a^{\prime} \mid X)} \right) da.
\end{align*}
Notably, we can write this more compactly as
\begin{align*}
  m_R(X, A; a^{\prime})
  &= \mathbbm{1}(A \le a^{\prime})
        \mathbb{E}_R\big[\mu(X, A) \mid X, A \le a^{\prime}\big] +
      \mathbbm{1}(A > a^{\prime})
        \mathbb{E}_R\big[\mu(X, A) \mid X, A > a^{\prime}\big],
\end{align*}
which allows us to write the one-step estimator as
\begin{align*}
  \widehat\psi_R(\delta, a^{\prime})
  &= \mathbb{P}_n \left[
    \frac{\widehat r_\delta(A \mid X)}{\widehat\pi(A \mid X)}
      \Big(Y - \widehat m_R(X, A; a^{\prime})\Big)
    + \widehat m_R(X, A; a^{\prime})
  \right].
\end{align*}
As one would expect, taking $a^{\prime} \to 1$ or $a^{\prime} \to 0$ recovers the one-step estimator defined in \cref{proposed_estimator_section}. Furthermore, $\widehat{\psi}_R(\delta, a^{\prime})$ can be estimated following the same procedure described in \cref{cross_validated_algo}. Having now established our estimator for the reflected incremental effect, we explore how asymptotic normality of $\widehat{\psi}_R(\delta, a^{\prime})$ can be established under the same conditions of \cref{asymptotic_normality}.

To establish asymptotic normality for the reflected incremental effect, consider the standard decomposition of the reflected incremental effect, $\widehat{\psi}_R(\delta, a^\prime) - \psi_R(\delta, a^\prime)$, given by
\begin{align*}
     (\mathbb{P}_n - \mathbb{P})\{\varphi_R(Z; \mathbb{P})\} + (\mathbb{P}_n - \mathbb{P})\{\varphi_R(Z; \widehat{\mathbb{P}}) - \varphi_R(Z; \mathbb{P})\} + R_2(\widehat{\mathbb{P}}, \mathbb{P}) 
\end{align*}
where $R_2(\widehat{\mathbb{P}}, \mathbb{P}) = \psi_R(\widehat{\mathbb{P}}) - \psi_R(\mathbb{P}) + \int \phi_R(z; \widehat{\mathbb{P}}) d\mathbb{P}(z)$. Since $\var(\varphi_R(Z;\delta, a^\prime))$ also scales with $\delta$, it follows that Lindeberg's condition still holds under the same requirements as in \cref{asymptotic_normality} (although now with weak positivity holding in a neighborhood of $a^\prime$) and therefore the central limit theorem may be applied to the first term. For the empirical process term, we want to show that $||\varphi_R(Z; \widehat{\mathbb{P}}) - \varphi_R(Z; \mathbb{P})||_2 = o_\mathbb{P}(\sqrt{\delta})$. In order to show this condition holds, the proof is essentially the same as before. However, due to the $L(X, A)$ and $U(X, A)$ terms defined in \cref{reflected_eif}, we now technically require consistency above and below $a^\prime$, i.e. that
\begin{align*}
    \left( \int \underset{a \leq a^\prime}{\text{sup}} \, | \widehat{\pi} - \pi_a|^2 d\mathbb{P}(x) \right)^{1/2} = o_{\mathbb{P}}(1) \quad \text{ and } \quad  \left( \int \underset{a \leq a^\prime}{\text{sup}} \, | \widehat{\mu} - \mu_a|^2 d\mathbb{P}(x) \right)^{1/2} = o_{\mathbb{P}}(1)
\end{align*}
as well as 
\begin{align*}
    \left( \int \underset{a > a^\prime}{\text{sup}} \, | \widehat{\pi} - \pi_a|^2 d\mathbb{P}(x) \right)^{1/2} = o_{\mathbb{P}}(1) \quad \text{ and } \quad  \left( \int \underset{a > a^\prime}{\text{sup}} \, | \widehat{\mu} - \mu_a|^2 d\mathbb{P}(x) \right)^{1/2} = o_{\mathbb{P}}(1).
\end{align*}
However, these conditions are identical to those of \cref{asymptotic_normality}, since there we simply take the supremum across all $a \in \mathcal{A}$, which implies consistency both above and below $a^\prime$. Finally, following the proof of \cref{reflected_eif}, we know that $R_2(\widehat{\mathbb{P}}, \mathbb{P})$ can be decomposed into the following three terms:
\begin{align*}
     R_1 &= \mathbb{E} \left[ \int_{a}\left(\frac{\widehat{r}_\delta}{\widehat{\pi}_a} - \frac{r_\delta}{\pi_a} \right)\Big( (\mu_a - \widehat{\mu}_a)\pi_a + (\pi_a - \widehat{\pi}_a) \widehat{\mu}_a \Big) da \right] \\
    R_2 &= \mathbb{E}\left[\frac{\widehat{\mathbb{E}}_R\left(\mu(X, A) \mid X, A \leq a^{\prime} \right)}{\widehat{\mathbb{P}}(A \leq a^{\prime} \mid X)}\left( \int_{a \leq a^{\prime}} (\widehat{\pi}_a - \pi_a)\left( 1 - \frac{r_\delta}{\pi_a} \right)  da \right)\int_{a \leq a^{\prime}} \frac{\widehat{r}_\delta}{\widehat{\pi}_a} (\widehat{\pi}_a - \pi_a) da \right] \\
    R_3 &= \mathbb{E}\left[ \frac{\widehat{\mathbb{E}}_R\left(\mu(X, A) \mid X, A > a^{\prime} \right)}{\widehat{\mathbb{P}}(A > a^{\prime} \mid X)}\left( \int_{a > a^{\prime}} (\widehat{\pi}_a - \pi_a)\left( 1 - \frac{r_\delta}{\pi_a} \right)  da \right)\int_{a > a^{\prime}} \frac{\widehat{r}_\delta}{\widehat{\pi}_a} (\widehat{\pi}_a - \pi_a) da\right].
 \end{align*}
Notably, $R_1$ is integrated across the entire support of $a$, and therefore behaves exactly the same as its counterpart in \cref{remainder_term}. In order to control $R_2$ and $R_2$, we require that both
\begin{align*}
     \left( \int \underset{a \leq a^\prime}{\text{sup}} \, | \widehat{\pi} - \pi_a|^2 d\mathbb{P}(x) \right) = o_{\mathbb{P}}\left(1 / \sqrt{n / \delta}\right)
\end{align*}
and 
\begin{align*}
    \left( \int \underset{a > a^\prime}{\text{sup}} \, | \widehat{\pi} - \pi_a|^2 d\mathbb{P}(x) \right) = o_{\mathbb{P}}\left(1 / \sqrt{n / \delta}\right).
\end{align*}
However, similar to the case of the empirical process term, this condition is equivalent to simply assuming that $|| \widehat{\pi} - \pi||^2_{L^2_x, L^\infty_a} = o_{\mathbb{P}}(1 / \sqrt{n / \delta})$, which is exactly the condition required in \cref{asymptotic_normality}. Consequently, we can see that the conditions required for asymptotic normality to hold for the reflected incremental effect are identical to those required for the standard incremental effect, only now under the assumption that weak positivity holds in a neighborhood of $a^\prime$.

\section{Extended discussion of weak positivity} \label{weak_positivity_section}

In this section, we provide an extended discussion of why the ``weak positivity'' assumption is required for estimation inference of incremental effects (if one wants to avoid exponential penalties in $\delta$, that is). This result is perhaps a bit counterintuitive, since positivity itself is not required for identification, as shown in \cref{identified_eq_proof}; nevertheless, it is necessary. To formalize this discussion, recall that the weak positivity assumption is the assumption that there exists some $\eta > 0$ such that $\pi(a \mid x) \geq \pi_{\min} > 0$ for all $a \in [\eta, 1]$. Or, in the case of the reflected exponential tilt (as discussed in \cref{dose_response_section}), that  $\pi(a \mid x) \geq \pi_{\min} > 0$ for all $a \in [a^\prime - \eta, a^\prime + \eta]$. To proceed, we provide an informal discussion for the necessity of weak positivity, then follow up with a specific counterexample that illustrates why we cannot weaken the positivity requirement any further.

One way to understand the necessity of weak positivity is to think about the limiting behavior of $q_\delta(a \mid x)$ as $\delta \to \infty$. As discussed in \cref{incremental_effects_section}, as $\delta \to \infty$ then $q_\delta(a \mid x)$ approaches a point mass at the edge of the treatment support. This is important, because the nonparametric efficiency bound is driven primarily by the term
\begin{align*}
     \mathbb{E}\left[\frac{q_\delta(A \mid X)^2}{\pi(A \mid X)^2}\right].
\end{align*}
Intuitively, as $q_\delta(a \mid x)$ begins to concentrate near the edge of the support, if there is not sufficient concentration of $\pi(a \mid x)$ in this neighborhood then $\mathbb{E}\left[q_\delta(A \mid X)^2 / \pi(A \mid X)^2\right]$ can blow up as $\delta \to \infty$. Thus, weak positivity ensures that enough overlap exists such that the variance of our estimator doesn't increase too quickly in $\delta$. A natural follow-up question to this is, ``why is a uniform lower bound in this neighborhood necessary?'' In what follows, we show that without a uniform lower bound, then the nonparametric efficiency bound can scale superlinearly in $\delta$.

Now, let us consider what happens if $\pi(a \mid x)$ has some vanishing, but nonzero density in a neighborhood of one. To explore this setting, we consider the parametric family of distributions over the unit interval with densities given by
\begin{align*}
    p(x; \alpha, \beta) = \alpha \beta x^{a-1}(1-x^\alpha) \mathbbm{1}(x \in [0, 1])
\end{align*}
i.e. the Kumaraswamy distribution. Specifically, we let $\alpha = 1$ so that the conditional treatment density is defined as $\pi(a \mid X) = p(a; 1, \beta)$ for all $a \in [0, 1]$ and $\beta \geq 1$. Two important features of this distribution are that $p(a; 1, \beta)$ is bounded above by $\beta$ (so it satisfies the bounded properties invoked throughout the paper) and furthermore that the mass of $\pi(a; 1, \beta)$ is vanishing as $a \to 1$. By direct computation, it follows that
\begin{align*}
    \mathbb{E}\left[\exp(\delta A) \mid X \right] &= \beta \int^1_0 \exp(\delta a) (1 - a)^{\beta - 1} da \\
    &\overset{(i)}{=} \beta \int^1_0 \exp(\delta(1 - \kappa)) \kappa^{\beta - 1} d \kappa  \\
    &\overset{(ii)}{=} \beta \frac{\exp(\delta)}{\delta^\beta} \int^\delta_0 \exp(-u) u^{\beta - 1} du \\
    &=  \beta \frac{\exp(\delta)}{\delta^\beta} \gamma(\beta, \delta)
\end{align*}
where in $(i)$ we use the substitution $\kappa = 1 - a$ and in $(ii)$ the substitution $u = \delta \kappa$ and $\gamma(\cdot, \cdot)$ denotes the lower incomplete gamma function. Notably, $\gamma(\beta, \delta)$ is bounded above by $\Gamma(\beta)$ uniformly over $\delta$ and for $\delta \geq 1$, 
\begin{align*}
    \gamma(\beta, \delta) \geq \frac{1}{\exp(1)} \int^1_0 u^{\beta - 1} du = \frac{1}{\exp(1) \beta}.
\end{align*}
Consequently, over $\delta \in [1, \infty)$ it follows that $\mathbb{E}\left[\exp(\delta A) \mid X \right] \asymp \frac{\exp(\delta)}{\delta^\beta}$. This allows us to see that
\begin{align*}
      \mathbb{E}\left[\frac{q_\delta(A \mid X)^2}{\pi(A \mid X)^2}\right] = \mathbb{E}\left[ \frac{\exp(2 \delta A)}{\mathbb{E}\left[ \exp(\delta A) \mid X \right]^2} \right] \asymp \frac{\exp(2 \delta)}{(2 \delta)^\beta} \cdot \frac{\delta^{2 \beta}}{\exp(2 \delta)} \asymp \delta^\beta.
\end{align*}
At a high level, this shows that without some form of weak positivity requiring a sufficient of concentration of mass near the edge of the support then the variance of the influence function can grow arbitrarily polynomially fast in $\delta$.

We have now shown that without the weak positivity assumption, the nonparametric efficiency bound can scale at faster than linear rates in $\delta$. This is the key connection between why positivity is not required at all for identification, but weak positivity is required for inference. As discussed in the asymptotic normality result (see \cref{asymptotic_normality}), the exponentially tilted intervention distribution has an effective sample size of $n / \delta$, which is predicated on the linear scaling in $\delta$ of the variance of the efficient influence function. Further weakening of the weak positivity assumption could be theoretically possible, but it would lead to arbitrarily large reductions in the effective sample size based on $\delta$. Therefore, in order to maintain a practically useful estimator, we must assume at least weak positivity holds in order for there to be a sufficient concentration of mass near the edge of the support as $\delta \to \infty$.

\section{Simulations} \label{simulation_section}

In this section we provide simulations for the incremental effect that validate our theoretical findings. We also provide additional guidance for practical decisions, such as the range of $\delta$ to consider. All code can be found online on the associated github repository.

\begin{remark}
    An important decision to make when analyzing incremental effects is the range of $\delta$ to evaluate. Primarily, this should be viewed as a scientific question. That is to say, the domain expert should choose $\delta$ to answer a particular scientific question by considering both the increased log-odds of receiving a given treatment level, and also how far from $\pi(a \mid x)$ they think reasonable to deviate.  However, another way to construct a reasonable range is by considering the effective sample size, following work from the survey sampling literature, e.g. \cite{kish_65}. Let $w_i(\delta) = q_\delta(A_i \mid X_i) / \pi(A_i \mid X_i)$ and define
    \begin{align*}
        \text{ESS}(\delta) = \frac{\left( \sum^n_{i=1} w_i(\delta) \right)^2}{\sum^n_{i=1} w_i(\delta)^2} \approx \frac{n}{\mathbb{E}[w^2(\delta)]}.
    \end{align*}
    Since $\mathbb{E}[w^2(\delta)]$ governs the variance of our estimator, it is clear that $\text{ESS}(\delta)$ establishes a natural correspondence between $\delta$ and the width of our confidence intervals. In practice, one should choose some $\varsigma \in (0, 1)$ and $\delta$ such that $\text{ESS}(\delta) \geq \varsigma n$; this describes the amount of variance inflation one is willing to accept. For example, $\varsigma = 1/2$ implies $n / \text{ESS}(\delta) \leq 2$, and therefore our confidence intervals will be roughly will be 41\% wider (since $\sqrt{2} \approx 1.41$). 
\end{remark}

\begin{figure}[h]
\centering

\begin{tikzpicture}
\begin{groupplot}[
  group style={
    group size=2 by 1,
    horizontal sep=1.25cm,
  },
  width=0.48\textwidth,
  height=0.48\textwidth,
  xmode=log,
  ymin=0,
  log basis x=10,
  % log ticks: 1, 10, 100, 1000 as majors
  xtick={1,10,100,1000},
  xticklabels={1,10,100,1000},
  minor x tick num=8,         % 8 minor ticks per decade
  xmajorgrids=true,
  xminorgrids=true,
  ymajorgrids=true,
  grid style={gray!25},
  major grid style={gray!50},
  xlabel={$\delta$},
  ylabel={Coverage},
  legend cell align=left
]

\definecolor{npred}{HTML}{F8766D} 
\definecolor{reflblue}{HTML}{00BFC4} 
\definecolor{nicepurple}{HTML}{C77CFF}
\definecolor{nicegreen}{HTML}{7CAE00}

% -------- Left panel: alpha = 0.25 --------
\nextgroupplot[
  title={$\alpha = 0.25$},
  legend style={
    at={(0.03,0.03)},        % bottom-left inside the left plot
    anchor=south west
  }
]

% n = 500
\addplot[thick, color=reflblue,
  mark=*,
  mark options={draw=reflblue, fill=reflblue}]
table[
  col sep=comma,
  x="delta",
  y="coverage",
  restrict expr to domain={\thisrow{"alpha"}}{0.24:0.26}, % alpha ≈ 0.25
  restrict expr to domain={\thisrow{"n"}}{500:500}
]{files/sim_results_pgplots.csv};

% n = 1000
\addplot[thick, color=npred,
  mark=square*,
  mark options={draw=npred, fill=npred}
]
table[
  col sep=comma,
  x="delta",
  y="coverage",
  restrict expr to domain={\thisrow{"alpha"}}{0.24:0.26},
  restrict expr to domain={\thisrow{"n"}}{1000:1000}
]{files/sim_results_pgplots.csv};

% n = 2500
\addplot[thick, color=nicepurple,
  mark=diamond*,
  mark options={draw=nicepurple, fill=nicepurple}
]
table[
  col sep=comma,
  x="delta",
  y="coverage",
  restrict expr to domain={\thisrow{"alpha"}}{0.24:0.26},
  restrict expr to domain={\thisrow{"n"}}{2500:2500}
]{files/sim_results_pgplots.csv};

% n = 5000
\addplot[thick, color=nicegreen,
  mark=triangle*,
  mark options={draw=nicegreen, fill=nicegreen}
]
table[
  col sep=comma,
  x="delta",
  y="coverage",
  restrict expr to domain={\thisrow{"alpha"}}{0.24:0.26},
  restrict expr to domain={\thisrow{"n"}}{5000:5000}
]{files/sim_results_pgplots.csv};

% nominal 0.95 line
\addplot[densely dashed, domain=1:1000, samples=2] {0.95};

\legend{$n = 500$, $n = 1000$, $n = 2500$, $n=5000$}

% -------- Right panel: alpha = 0.5 --------
\nextgroupplot[
  title={$\alpha = 0.5$},
  ylabel={} % suppress duplicate y-label if you prefer
]

% n = 500
\addplot[thick, color=reflblue,
  mark=*,
  mark options={draw=reflblue, fill=reflblue}]
table[
  col sep=comma,
  x="delta",
  y="coverage",
  restrict expr to domain={\thisrow{"alpha"}}{0.49:0.51}, % alpha ≈ 0.5
  restrict expr to domain={\thisrow{"n"}}{500:500}
]{files/sim_results_pgplots.csv};

% n = 1000
\addplot[thick, color=npred,
  mark=square*,
  mark options={draw=npred, fill=npred}
]
table[
  col sep=comma,
  x="delta",
  y="coverage",
  restrict expr to domain={\thisrow{"alpha"}}{0.49:0.51},
  restrict expr to domain={\thisrow{"n"}}{1000:1000}
]{files/sim_results_pgplots.csv};

% n = 2500
\addplot[thick, color=nicepurple,
  mark=diamond*,
  mark options={draw=nicepurple, fill=nicepurple}
]
table[
  col sep=comma,
  x="delta",
  y="coverage",
  restrict expr to domain={\thisrow{"alpha"}}{0.49:0.51},
  restrict expr to domain={\thisrow{"n"}}{2500:2500}
]{files/sim_results_pgplots.csv};

% n = 5000
\addplot[thick, color=nicegreen,
  mark=triangle*,
  mark options={draw=nicegreen, fill=nicegreen}
]
table[
  col sep=comma,
  x="delta",
  y="coverage",
  restrict expr to domain={\thisrow{"alpha"}}{0.49:0.51},
  restrict expr to domain={\thisrow{"n"}}{5000:5000}
]{files/sim_results_pgplots.csv};

% nominal 0.95 line
\addplot[densely dashed, domain=1:1000, samples=2] {0.95};

\end{groupplot}
\end{tikzpicture}

\caption{Empirical coverage across $\delta$ for $\alpha \in \{0.25, 0.5\}$ and $n \in \{500, 1000, 2500, 5000\}$}
\label{coverage_sim}
\end{figure}

Now that we have discussed heuristics for choosing $\delta$, we consider a simulation study to show how coverage and confidence interval width depend on $\delta$ and $n$. We generate independent samples such that $X_1,X_2 \sim \mathrm{Uniform}(0,1)$, $A \sim \mathrm{Uniform}(0,1)$ independent of $X$ so that $\pi(a \mid x) = 1$ on $[0,1]$, and $Y = \mu(X,A) + \varepsilon$ where $\mu(x,a) = 1 + x_1 + x_2 + 2a$ and $\varepsilon \sim N\big(b n^{-\alpha}, n^{-2\alpha}\big)$. The choice of the parameters $\alpha$ and $b$ follows the simulation setting of \cite{rubio2024populationsizeestimationlists}; $\alpha$ controls the rate of convergence (such that $\alpha = 0.5$ corresponds to parametric rates, and $\alpha = 0.25$ to nonparametric rates), and $b$ controls the bias. Under this design, it follows that $\psi_n(\delta) = 2 + 2\mathbb{E}_Q[A] + b n^{-\alpha}$ where 
\begin{align*}
    \mathbb{E}[A_\delta] = \frac{\delta \exp(\delta) - (\exp(\delta) - 1)}{\delta (\exp(\delta) - 1)}
\end{align*}
for $\delta \neq 0$, and $\mathbb{E}_Q[A] = 1/2$ when $\delta = 0$, since $A \sim q_\delta( \cdot \mid x)$ has density proportional to $\exp(\delta a)$ on $[0, 1]$. For each sample size $n \in \{500, 1000, 2500, 5000\}$, we consider $\alpha \in \{0.25, 0.5\}$, and $\delta \in \{0, 1, 2, 5, 10, 25, 50, 100, 250, 500, 1000 \}$. We then simulate 10,000 data sets, estimate $\mu(x,a)$ via (cross validated) random forests and compute the one-step estimator defined in \cref{estimation_inference_section} using the known treatment density. For each combination of $n$, $\alpha$, and $\delta$ we calculate the empirical coverage and average interval width.

\begin{figure}[h]
\centering

\begin{tikzpicture}
\begin{groupplot}[
  group style={
    group size=2 by 1,
    horizontal sep=1.25cm,
  },
  width=0.48\textwidth,
  height=0.48\textwidth,
  xmode=log,
  ymin=0,
  log basis x=10,
  % log ticks: 1, 10, 100, 1000 as majors
  xtick={1,10,100,1000},
  xticklabels={1,10,100,1000},
  minor x tick num=8,         % 8 minor ticks per decade
  xmajorgrids=true,
  xminorgrids=true,
  ymajorgrids=true,
  grid style={gray!25},
  major grid style={gray!50},
  xlabel={$\delta$},
  ylabel={Mean Interval Width},
  legend cell align=left
]

\definecolor{npred}{HTML}{F8766D} 
\definecolor{reflblue}{HTML}{00BFC4} 
\definecolor{nicepurple}{HTML}{C77CFF}
\definecolor{nicegreen}{HTML}{7CAE00}

% -------- Left panel: alpha = 0.25 --------
\nextgroupplot[
  title={$\alpha = 0.25$},
  legend style={
    at={(0.03,0.97)},        % bottom-left inside the left plot
    anchor=north west
  }
]

% n = 500
\addplot[thick, color=reflblue,
  mark=*,
  mark options={draw=reflblue, fill=reflblue}]
table[
  col sep=comma,
  x="delta",
  y="mean_width",
  restrict expr to domain={\thisrow{"alpha"}}{0.24:0.26}, % alpha ≈ 0.25
  restrict expr to domain={\thisrow{"n"}}{500:500}
]{files/sim_results_pgplots.csv};

% n = 1000
\addplot[thick, color=npred,
  mark=square*,
  mark options={draw=npred, fill=npred}
]
table[
  col sep=comma,
  x="delta",
  y="mean_width",
  restrict expr to domain={\thisrow{"alpha"}}{0.24:0.26},
  restrict expr to domain={\thisrow{"n"}}{1000:1000}
]{files/sim_results_pgplots.csv};

% n = 2500
\addplot[thick, color=nicepurple,
  mark=diamond*,
  mark options={draw=nicepurple, fill=nicepurple}
]
table[
  col sep=comma,
  x="delta",
  y="mean_width",
  restrict expr to domain={\thisrow{"alpha"}}{0.24:0.26},
  restrict expr to domain={\thisrow{"n"}}{2500:2500}
]{files/sim_results_pgplots.csv};

% n = 5000
\addplot[thick, color=nicegreen,
  mark=triangle*,
  mark options={draw=nicegreen, fill=nicegreen}
]
table[
  col sep=comma,
  x="delta",
  y="mean_width",
  restrict expr to domain={\thisrow{"alpha"}}{0.24:0.26},
  restrict expr to domain={\thisrow{"n"}}{5000:5000}
]{files/sim_results_pgplots.csv};

\legend{$n = 500$, $n = 1000$, $n = 2500$, $n = 5000$}

% -------- Right panel: alpha = 0.5 --------
\nextgroupplot[
  title={$\alpha = 0.5$},
  ylabel={} % suppress duplicate y-label if you prefer
]

% n = 500
\addplot[thick, color=reflblue,
  mark=*,
  mark options={draw=reflblue, fill=reflblue}]
table[
  col sep=comma,
  x="delta",
  y="mean_width",
  restrict expr to domain={\thisrow{"alpha"}}{0.49:0.51}, % alpha ≈ 0.5
  restrict expr to domain={\thisrow{"n"}}{500:500}
]{files/sim_results_pgplots.csv};

% n = 1000
\addplot[thick, color=npred,
  mark=square*,
  mark options={draw=npred, fill=npred}
]
table[
  col sep=comma,
  x="delta",
  y="mean_width",
  restrict expr to domain={\thisrow{"alpha"}}{0.49:0.51},
  restrict expr to domain={\thisrow{"n"}}{1000:1000}
]{files/sim_results_pgplots.csv};

% n = 2500
\addplot[thick, color=nicepurple,
  mark=diamond*,
  mark options={draw=nicepurple, fill=nicepurple}
]
table[
  col sep=comma,
  x="delta",
  y="mean_width",
  restrict expr to domain={\thisrow{"alpha"}}{0.49:0.51},
  restrict expr to domain={\thisrow{"n"}}{2500:2500}
]{files/sim_results_pgplots.csv};

% n = 5000
\addplot[thick, color=nicegreen,
  mark=triangle*,
  mark options={draw=nicegreen, fill=nicegreen}
]
table[
  col sep=comma,
  x="delta",
  y="mean_width",
  restrict expr to domain={\thisrow{"alpha"}}{0.49:0.51},
  restrict expr to domain={\thisrow{"n"}}{5000:5000}
]{files/sim_results_pgplots.csv};

\end{groupplot}
\end{tikzpicture}

\caption{Average confidence interval width across $\delta$ for $\alpha \in \{0.25, 0.5\}$ and $n \in \{500, 1000, 2500, 5000\}$}
\label{width_sim}
\end{figure}

\cref{coverage_sim} verifies our theoretical results. It is clear to see that $1 - \alpha$ coverage holds for modest choices of $\delta$ relative to the sample size, but eventually decays when the ratio of $\delta$ to $n$ grows too large. However, for very large sample sizes coverage holds even up to extremely large values of $\delta$, e.g. $\delta = 1,000$. Meanwhile, in \cref{width_sim} we find simulated verification results such as \cref{sigma_bounds}. It is clear that mean confidence interval widths are growing with $\delta$, regardless of how large the sample size is. 

\newpage 

\section{Proofs}
\label{sec:proofs}
\subsection{Proof of \texorpdfstring{\cref{identified_eq}}{Equation 2}} \label{identified_eq_proof}

\begin{proof}[\textbf{Proof:}] Here, we provide a short proof for the identification of \cref{identified_eq}, as discussed in \cref{identification_section}. Recall that we assume: $(i)$ \textit{consistency} ($Y = Y^a$ if $A = a$), and $(ii)$ \textit{exchangeability} ($A \ind Y^a \mid X$ for $a \in \mathbb{R}$). Key to the identification proof is the fact that $\mathbb{E}[Y^{Q(\delta)}]$ can be written as a weighted average of the potential outcomes, with weights given by $q_\delta(a \mid x)$. To see this, we let $A_Q \mid X \sim q_\delta( \cdot \mid X)$ denote a draw from our stochastic intervention. Then, applying the law of iterated expectations,
\begin{align*}
    \mathbb{E}[Y^{Q(\delta)}] &= \mathbb{E}_X\left[ \mathbb{E}_{A_Q \mid X}\left[ \mathbb{E}_{Y(\cdot) \mid X, A_Q}\left[Y^{A_Q} \mid X, A_Q \right] \right] \right].
\end{align*}
From here, note that by construction, $A_Q \ind Y^a \mid X$ since  we sample $A_Q \sim q_\delta(\cdot \mid X)$ independently of the potential outcomes, conditional on $X$. Thus, for any fixed $a$, 
\begin{align*}
    \mathbb{E}_{Y(\cdot) \mid X, A_Q}\left[Y^{A_Q} \mid X, A_Q \right] =  \left. \mathbb{E}\left[Y^{a} \mid X\right] \, \right|_{a=A_Q}.
\end{align*}
This allows us to see that
\begin{align*}
    \mathbb{E}_{A_Q \mid X}\left[ \mathbb{E}_{Y(\cdot) \mid X, A_Q}\left[Y^{A_Q} \mid X, A_Q \right] \right] &= \mathbb{E}_{A_Q \mid X}\left[\left. \mathbb{E}\left[Y^{a} \mid X\right] \, \right|_{a=A_Q} \right] \\
    &= \int_\mathcal{A} \mathbb{E}[Y^a \mid X] q_\delta(a \mid X) da.
\end{align*}
Now, plugging this back into the original expectation yields
\begin{align*}
    \mathbb{E}[Y^{Q(\delta)}] &= \mathbb{E}_X\left[\int_\mathcal{A} \mathbb{E}[Y^a \mid X] q_\delta(a \mid X) da \right] \\
    &= \mathbb{E}_X\left[\int_\mathcal{A} \mathbb{E}[Y^a \mid X, A = a] q_\delta(a \mid X) da \right] \tag{By $(ii)$} \\
    &=  \mathbb{E}_X\left[\int_\mathcal{A} \mathbb{E}[Y \mid X, A = a] q_\delta(a \mid X) da \right] \tag{By $(i)$} \\ 
    &= \int_\mathcal{X} \int_\mathcal{A} \mu(x, a) q_\delta(a \mid x) \, da \, d\mathbb{P}(x)
\end{align*}
where $\mu(x, a) = \mathbb{E}[Y \mid X = x, A = a]$.

\end{proof}

\subsection{Proof of \texorpdfstring{\cref{eif_theorem}}{Proposition 1}} \label{eif_theorem_proof}
\begin{proof}[\textbf{Proof:}] We will derive the efficient influence function  for a general tilt,
\begin{align*}
    q_{\delta}(a \mid x) = \frac{f_\delta(\pi(a \mid x))}{\int_a f_\delta(\pi(a \mid x)) da }
\end{align*}
where $f(\cdot)$ is some smooth function and $\delta$ is assumed to be finite. To derive the efficient influence we will follow the ``derivative rules'' approach outlined in \cite{kennedy2023semiparametric} where we assume the data are discrete and use simple influence functions as building blocks. These assumptions greatly simplify the derivation process. While this process does not technically result in the efficient influence function (while the data is assumed to be discrete), we will then show the remainder term from the von Mises expansion is a second-order product of errors. This confirms that our derived influence function is valid in the general continuous case. Thus, suppose that the data are discrete, let $\mu(x, a) = \mathbb{E}[Y \mid X=x, A=a]$ and $p(x)$ be the probability mass function of $X$. Additionally, for notational convenience, let $f_\delta(\pi_a) = f_\delta(\pi(a \mid X))$. Then, applying the ``derivative rules" it follows that the influence function for $\psi(\delta)$ can be written as
\begin{align*}
    \mathbb{IF}(\psi(\delta)) &= \sum_{x \in \mathcal{X}} \sum_{a \in \mathcal{A}} \Bigg( \mathbb{IF}(\mu(x, a)) f_\delta(\pi(a \mid x)) \Big[\sum_t f_\delta(\pi(t \mid x)) \Big]^{-1} p(x)  \: + \tag{$i$}\\
    &\phantom{{}={\sum_{x \in \mathcal{X}} \sum_{a \in \mathcal{A}} \Bigg(}} \mu(x, a) \mathbb{IF}(f_\delta(\pi(a \mid x))) \Big[\sum_t f_\delta(\pi(t \mid x)) \Big]^{-1} p(x) \: + \tag{$ii$}\\
    &\phantom{{}={\sum_{x \in \mathcal{X}} \sum_{a \in \mathcal{A}} \Bigg(}} \mu(x, a) f_\delta(\pi(a \mid x))  \mathbb{IF}\Big(\Big[\sum_t f_\delta(\pi(t \mid x)) \Big]^{-1}\Big) p(x) \: + \tag{$iii$}\\
    &\phantom{{}={\sum_{x \in \mathcal{X}} \sum_{a \in \mathcal{A}} \Bigg(}} \mu(x, a) f_\delta(\pi(a \mid x)) \Big[\sum_t f_\delta(\pi(t \mid x)) \Big]^{-1} \mathbb{IF}(p(x)) \Bigg). \tag{$iv$}
\end{align*}
From here, we can use established derivations of influence functions as building blocks. Recall that
\begin{align*}
    \mathbb{IF}(\mu(x, a)) &= \frac{\mathbbm{1}(A = a, X=x)}{\pi(a \mid x) p(x)} \big( Y - \mu(x, a) \big) \\
    \mathbb{IF}(p(x)) &= \mathbbm{1}(X = x) - p(x) \\
    \mathbb{IF}(\pi(a \mid x)) &= \frac{\mathbbm{1}(X = x)}{p(x)} \big( \mathbbm{1}(A = a) - \pi(a \mid x) \big).
\end{align*}
Next, we derive the influence functions for $f_\delta(\pi(t \mid x))$ and $[\sum_t f_\delta(\pi(t \mid x))]^{-1}$. These are given by
\begin{align*}
    \mathbb{IF}(f_\delta(\pi(a \mid x))) &= f^{\prime}_\delta(\pi(a \mid x)) \frac{\mathbbm{1}(X = x)}{p(x)} \left( \mathbbm{1}(A = a) - \pi(a \mid x) \right) \\
    \mathbb{IF}\Big(\Big[\sum_a f_\delta(\pi(a \mid x)) \Big]^{-1}\Big) &= - \frac{\sum_a f^{\prime}_\delta(\pi(a \mid x))\frac{\mathbbm{1}(X = x)}{p(x)} \left( \mathbbm{1}(A = a) - \pi(a \mid x) \right) }{\left(\sum_a f_\delta(\pi(a \mid x)) \right)^2}.
\end{align*}
Finally, we plug these intermediate steps back into the influence function calculation. Observe that the first term is equal to
\begin{align*}
    (i)  &= \sum_{x \in \mathcal{X}} \sum_{a \in \mathcal{A}}  \frac{\mathbbm{1}(A = a, X=x)}{\pi(a \mid x)} \Big( Y - \mu(x, a) \Big)  \frac{f_\delta(\pi(a \mid x))}{\sum_t f_\delta(\pi(t \mid x))} \\
    &=  \frac{f_\delta(\pi(A \mid X))}{\pi(A \mid X) \sum_t f_\delta(\pi(t \mid X))} \Big( Y - \mu(X, A) \Big),
\end{align*}
the second is given by,
\begin{align*}
    (ii)  &= \sum_{x \in \mathcal{X}} \sum_{a \in \mathcal{A}} \frac{\mu(x, a) f^{\prime}_\delta(\pi(a \mid x))}{\sum_t f_\delta(\pi(t \mid x))} \mathbbm{1}(X = x) \left( \mathbbm{1}(A = a) - \pi(a \mid x) \right) \\
    &=  \frac{\mu(X, A) f^{\prime}_\delta(\pi(A \mid X))}{\sum_t f_\delta(\pi(t \mid X))}  -   \frac{\sum_{a \in \mathcal{A}} \mu(X, a) f^{\prime}_\delta(\pi(a \mid X)) \pi(a \mid X) }{\sum_t f_\delta(\pi(t \mid X))}   
\end{align*}
the third by
\begin{align*}
    (iii) &=  - \sum_{x \in \mathcal{X}} \sum_{a \in \mathcal{A}} \mu(x, a) f_\delta(\pi(a \mid x)) \frac{\sum_t f^{\prime}_\delta(\pi(t \mid x))\mathbbm{1}(X = x) \left( \mathbbm{1}(A = t) - \pi(t \mid x) \right) }{\left(\sum_t f_\delta(\pi(t \mid x)) \right)^2} \\
    &= \left(\frac{\sum_t f^{\prime}_\delta(\pi(t \mid X)) \pi(t \mid X) }{\left(\sum_t f_\delta(\pi(t \mid X)) \right)^2} - \frac{f^{\prime}_\delta(\pi(A \mid X))}{\left(\sum_t f_\delta(\pi(t \mid X)) \right)^2}\right)\sum_{a \in \mathcal{A}} \mu(X, a) f_\delta(\pi(a \mid X)),
\end{align*}
and finally the fourth term by
\begin{align*}
    (iv) &= \sum_{x \in \mathcal{X}} \sum_{a \in \mathcal{A}} \frac{\mu(x, a) f_\delta(\pi(a \mid x))}{\sum_t f_\delta(\pi(t \mid x))} (\mathbbm{1}(X = x) - p(x)) \\
    &= \sum_{a \in \mathcal{A}} \frac{\mu(X, a) f_\delta(\pi(a \mid X))}{\sum_t f_\delta(\pi(t \mid X))}  - \psi(\delta).
\end{align*}
Putting everything together, it follows that our candidate influence function is given by
\begin{align*}
    \mathbb{IF}(\psi(\delta)) &= \Bigg\{ \frac{f_\delta(\pi(A \mid X))}{\pi(A \mid X) \sum_a f_\delta(\pi(a \mid X))}\Big(Y - \mu(X, A) \Big) \: + \\
    &\phantom{{}={\Bigg\{}} \left( \frac{f^{\prime}_\delta(\pi(A \mid X)) \mu(X, A)}{\sum_t f_\delta(\pi(t \mid X))} - \frac{\sum_a f^{\prime}_\delta(\pi(a \mid X)) \pi(a \mid X) \mu(X, a)}{\sum_t f_\delta(\pi(t \mid X))} \right) \: - \\
    &\phantom{{}={\Bigg\{}}  \frac{\sum_a f_\delta( \pi( a \mid X)) \mu(X, a)}{\left[\sum_t f_\delta(\pi(t \mid X)) \right]^2} \left(f^\prime_\delta( \pi(A \mid X)) - \sum_t f^\prime_\delta( \pi(t \mid X)) \pi(t \mid X) \right) \: + \\
    &\phantom{{}={\Bigg\{}} \frac{\sum_a  \mu(X, a) f_\delta(\pi(a \mid X))}{\sum_t f_\delta(\pi(t \mid X))} - \psi(\delta) \Bigg\}.
\end{align*}
If we specifically use the exponential tilt $f_\delta( \pi(a \mid x)) = \exp(\delta a) \pi(a \mid x)$ and note that $f^\prime_\delta(\pi_a) = \exp(\delta a)$, we can see that the candidate influence function simplifies to
\begin{align*}
    \mathbb{IF}(\psi(\delta)) &= \Bigg\{ \frac{q_\delta(A \mid X)}{\pi(A \mid X)} \Big(Y - \mu(X, A) \Big) \: + \\
    &\phantom{{}={\Bigg\{}} \frac{q_\delta(A \mid X)}{\pi(A \mid X)}\left( \mu(X, A) - \sum_a q_\delta(a \mid X) \mu(X, a) \right) \: + \\
    &\phantom{{}={\Bigg\{}} \sum_a q_\delta(a \mid X) \mu(X, a)  - \psi(\delta) \Bigg\}.
\end{align*}

The remainder of the proof verifying that the von Mises expansion is a second-order product of errors is given in \cref{remainder_term_proof}. Note that in order for the derivation of \cref{eif_theorem} to be valid, it must be the case that $\delta$ is finite. Otherwise, as shown in \cref{sigma_bounds}, the variance of the efficient influence function will be infinite. However, as we demonstrate in \cref{estimation_inference_section}, it is still possible to obtain valid statistical inference with unbounded $\delta$.

\end{proof}

\subsection{Proof of \texorpdfstring{\cref{nonpar_efficiency_bound}}{Theorem 1}} \label{nonpar_efficiency_bound_proof}
\begin{proof}[\textbf{Proof:}]
    To derive the nonparametric efficiency bound, we must evaluate the variance of the efficient influence function, which we present in its compact form, i.e. 
    \begin{align*}
        \varphi(Z; \delta) = \frac{q_\delta(A \mid X)}{\pi(A \mid X)} \Big(Y - \mu_Q(X) \Big) + \mu_Q(X) - \psi(\delta).
    \end{align*}
where we define $\mu_Q(X) = \mathbb{E}_Q[ \mu(X, A) \mid X]$. Notably, since $\varphi(Z; \delta)$ is mean-zero, we simply need to evaluate $\mathbb{E}[\varphi(Z; \delta)^2]$. To that end, we define $U = \frac{q_\delta(A \mid X)}{\pi(A \mid X)} (Y - \mu_Q(X))$ and $V =  \mu_Q(X) - \psi(\delta)$. Next, we will show that the cross term $\mathbb{E}[UV] = 0$. To see this, first note that by the law of iterated expectations,
\begin{align*}
    \mathbb{E}[UV] &= \mathbb{E}\left[\left(\frac{q_\delta(A \mid X)}{\pi(A \mid X)} (Y - \mu_Q(X)) \right) \left(\mu_Q(X) - \psi(\delta) \right) \right] \\
    &= \mathbb{E}_X\left[ \mathbb{E}_{A \mid X}\left[ \mathbb{E}_{Y \mid X, A}\left[\left(\frac{q_\delta(A \mid X)}{\pi(A \mid X)} (Y - \mu_Q(X)) \right) \left(\mu_Q(X) - \psi(\delta) \right)  \right]\right] \right] \\
    &= \mathbb{E}_X\left[\left(\mu_Q(X) - \psi(\delta) \right)  \mathbb{E}_{A \mid X}\left[ \frac{q_\delta(A \mid X)}{\pi(A \mid X)}\mathbb{E}_{Y \mid X, A}\left[\left(Y - \mu_Q(X) \right)   \right]\right] \right] \\
    &= \mathbb{E}_X\left[\left(\mu_Q(X) - \psi(\delta) \right)  \mathbb{E}_{A \mid X}\left[ \frac{q_\delta(A \mid X)}{\pi(A \mid X)} \left( \mu(X, A) - \mu_Q(X) \right) \right] \right].
\end{align*}
Then, observe that
\begin{align*}
    \mathbb{E}_{A \mid X}\left[ \frac{q_\delta(A \mid X)}{\pi(A \mid X)} \left( \mu(X, A) - \mu_Q(X) \right) \right] = \int_a \left(\mu(X, a) - \mu_Q(X) \right) q_\delta(a \mid X) da = 0,
\end{align*}
which thereby implies that $\mathbb{E}[UV] = 0$. We can further simplify our variance calculation by noting that 
\begin{align*}
    \mathbb{E}[V^2] &= \mathbb{E}\left[\Big( \mathbb{E}_Q\left[\mu(X, A) \mid X \right]  - \psi(\delta) \Big)^2 \right] \\
    &= \mathbb{E}\left[\Big( \mathbb{E}_Q\left[\mu(X, A) \mid X \right]  - \int_x \int_a \mu(x, a) q_\delta(a \mid x) \, da \, d\mathbb{P}(x) \Big)^2 \right] \\
    &= \mathbb{E}\left[\Big( \mathbb{E}_Q\left[\mu(X, A) \mid X \right]  - \mathbb{E}\Big[ \mathbb{E}_Q \big[ \mu(X, A) \mid X\big] \Big] \Big)^2 \right] \\
    &= \var\Big( \mathbb{E}_Q\left[\mu(X, A) \mid X \right] \Big).
\end{align*}
Consequently, it follows that
\begin{align*}
    \var \left( \varphi(Z; \delta) \right) = \mathbb{E}\left[\left(\frac{q_\delta(A \mid X)}{\pi(A \mid X)} \right)^2 (Y - \mathbb{E}_Q\left[\mu(X, A) \mid X \right])^2 \right] + \var\Big( \mathbb{E}_Q\left[\mu(X, A) \mid X \right] \Big).
\end{align*}
For some later results it will be convenient to isolate individual components of $\var \left( \varphi(Z; \delta) \right)$, so we note that after applying the identity $Y - \mu_Q(X) = Y - \mu(X, A) + \mu(X, A) - \mu_Q(X)$, then we can further decompose $\var \left( \varphi(Z; \delta) \right)$ into the three term expansion,
 \begin{align*}
     \mathbb{E}\left[\left(\frac{q_\delta(A \mid X)}{\pi(A \mid X)} \right)^2 \Big( \var\left( Y \mid X, A \right) +  \big[ \mu(X, A) - \mu_Q(X) \big]^2 \Big) \right] + \var\big( \mu_Q(X) \big).
\end{align*}
\end{proof}

\subsection{Proof of \texorpdfstring{\cref{sigma_bounds}}{Lemma 1}} \label{sigma_bounds_proof}

\begin{proof}[\textbf{Proof:}] 

Recall that the three term expression for the nonparametric efficiency bound $\sigma^2_\delta$ is given by
\begin{align*}
     \mathbb{E}\left[\left(\frac{q_\delta(A \mid X)}{\pi(A \mid X)} \right)^2 \Big( \var\left( Y \mid X, A \right) +  \big[ \mu(X, A) - \mu_Q(X) \big]^2 \Big) \right] + \var\big( \mu_Q(X) \big)
\end{align*}
In what follows, we derive lower and upper bounds for $\sigma^2_\delta$. Note that we first assume that $\delta > 0$. Later in the proof, we discuss the setting where $\delta < 0$.

\noindent \textbf{Lower Bound:} First, we derive the lower bound. Using the three term decomposition of $\sigma^2_\delta$, we can immediately see that
 \begin{align*}
     \sigma^2_\delta \geq \mathbb{E}\left[\left(\frac{q_\delta(A \mid X)}{\pi(A \mid X)}\right)^2 \var\left( Y \mid X, A \right)   \right]
 \end{align*}
 since $\var( \mu_Q(X)) \geq 0$ and $(\mu(x, a) - \mu_Q(x))^2 \geq 0$ for all $a$ and $x$. Now, under the assumption that $\sigma^2_{\min} \leq \var\left( Y \mid X, A \right)$ all that remains is to lower bound the expected squared likelihood ratio. To do so, suppose that there exists some $\eta \in[0, 1)$ and $\pi_{\min} > 0$ such that $\pi(a \mid x) \geq \pi_{\min}$ for all $a \in [\eta, 1]$. Further assume that $\pi(a \mid x) \leq \pi_{\max} < \infty$ for all $a$ and $x$ and $\delta > 0$. Then, it follows that
\begin{align*}
    \mathbb{E}\left[\left(\frac{q_\delta(A \mid X)}{\pi(A \mid X)}\right)^2 \right] &= \mathbb{E}\left[\frac{\int^1_0 \exp(2 \delta a) \pi(a \mid X) da}{\left(\int^1_0 \exp(\delta a) \pi(a \mid X) da \right)^2} \right] \geq \frac{\pi_{\min}}{\pi^2_{\max}} \frac{\int^1_\eta \exp(2 \delta a)da}{\left(\int^1_0 \exp(\delta a) da \right)^2}. 
\end{align*}
Next, after integrating both terms we have
\begin{align*}
   \frac{\int^1_\eta \exp(2 \delta a)da}{\left(\int^1_0 \exp(\delta a) da \right)^2} = \frac{\delta}{2} \cdot \frac{1 - \exp(- 2 \delta(1 - \eta))}{(1 - \exp(- \delta))^2}.
\end{align*}
From here, observe that for any $u > 0$ and $v \geq 0$ it follows that
\begin{align*}
    1 - \exp(-uv) \geq (1 - \exp(-u))(1 - \exp(-v)).
\end{align*}
Thus, letting $u = 2(1 - \eta)$ and $v = \delta$ we can see that
\begin{align*}
    \frac{\delta}{2} \cdot \frac{1 - \exp(- 2 \delta(1 - \eta))}{(1 - \exp(- \delta))^2} &\geq \frac{\delta}{2} \cdot \frac{(1 - \exp(-2(1- \eta)))(1 - \exp(-\delta))}{(1 - \exp(- \delta))^2} \\
    &\geq \frac{\delta}{2} \cdot \frac{1 - \exp(-2(1- \eta))}{1 - \exp(- \delta)} \\
    &\geq  \delta \left( \frac{1 - \exp(-2(1- \eta))}{2} \right)
\end{align*}
where the last inequality follows since $1 - \exp(-\delta) \leq 1$ for all $\delta > 0$. Putting everything together, it follows that
\begin{align*}
    \sigma^2_\delta \geq \delta \left( \frac{\pi_{\min} \sigma^2_{\min}}{\pi^2_{\max}} \cdot  \frac{1 - \exp(-2(1- \eta))}{2} \right) \geq \delta \left( \frac{\pi_{\min} \sigma^2_{\min}}{\pi^2_{\max}} \cdot \frac{2}{5}(1 - \eta) \right).
\end{align*}
In the case that $\delta < 0$, we can follow analogous steps to obtain a lower bound that is linear in $|\delta|$. In this case, we now must assume there exists some $\eta \in (0, 1]$ where positivity holds. Then, we have that
\begin{align*}
     \mathbb{E}\left[\left(\frac{q_\delta(A \mid X)}{\pi(A \mid X)}\right)^2 \right] &\geq \frac{\pi_{\min}}{\pi^2_{\max}} \frac{\int^\eta_0 \exp(2 \delta a)da}{\left(\int^1_0 \exp(\delta a) da \right)^2}. 
\end{align*}
Following the same steps as before yields the bound
\begin{align*}
     \sigma^2_\delta \geq |\delta| \left( \frac{\pi_{\min} \sigma^2_{\min}}{\pi^2_{\max}} \cdot  \frac{1 - \exp(-2\eta )}{2} \right).
\end{align*}
\noindent \textbf{Upper Bound:} Next, for the upper bound observe that
\begin{align*}
    \var\left( Y \mid X, A \right) = \mathbb{E}\left[ Y^2 \mid X, A \right] - \mathbb{E}\left[ Y \mid X, A \right]^2 \leq \mathbb{E}\left[ Y^2 \mid X, A \right] \leq B^2.
\end{align*}
Following the same logic, since $|\mu(X, A)| = |\mathbb{E}[Y \mid X, A]|\leq B$, we have that
\begin{align*}
    \Big| \mathbb{E}_Q(\mu(X, A) \mid X) \Big| = \left|\int_a \mu(X, a) q_\delta(a \mid X)da \right| \leq  B \left| \int_a q_\delta(a \mid X)da \right| = B.
\end{align*}
Applying these calculations again, it is clear that
\begin{align*}
    \var\Big( \mathbb{E}_Q\left[\mu(X, A) \mid X \right] \Big) &\leq \mathbb{E}\left[ \mathbb{E}_Q\left[\mu(X, A) \mid X \right]^2 \right]  \leq B^2
\end{align*}
and
\begin{align*}
    \Big( \mu(X, A) - \mathbb{E}_Q(\mu(X, A) \mid X) \Big)^2 \leq  \Big(\left|\mu(X, A)\right| + \left|\mathbb{E}_Q(\mu(X, A) \mid X)\right|  \Big)^2 \leq 4B^2.
\end{align*}
Thus, we arrive at an initial upper bound of.
\begin{align*}
    \sigma^2_\delta &\leq \mathbb{E}\left[\left(\frac{q_\delta(A \mid X)}{\pi(A \mid X)}\right)^2 \Big( B^2 + 4B^2 \Big) \right] + B^2.
\end{align*}
Now, all that remains is to upper bound the expected squared likelihood ratio. Following similar arguments to the lower bound, we can see that
\begin{align*}
    \mathbb{E}\left[\left(\frac{q_\delta(A \mid X)}{\pi(A \mid X)}\right)^2 \right] &= \mathbb{E}\left[\frac{\int^1_0 \exp(2 \delta a) \pi(a \mid X) da}{\left(\int^1_0 \exp(\delta a) \pi(a \mid X) da \right)^2} \right] \\
    &\leq \frac{\pi_{\max}}{\pi^2_{\min}}\frac{\int^1_0 \exp(2 \delta a) da}{\left(\int^1_\eta \exp(\delta a) da \right)^2} \\
    &\leq \frac{\pi_{\max}}{\pi^2_{\min}} \cdot \frac{\delta}{2} \cdot \frac{\exp(2 \delta) - 1}{\left(\exp(\delta) - \exp(\delta \eta) \right)^2}.
\end{align*}
From here let $\ell = \delta(1 - \eta)$. Then, we can see that
\begin{align*}
    \exp(2 \delta) - 1 &= \exp(2 \delta \eta)( \exp(2 \ell) - 1) + ( \exp(2 \delta \eta) - 1) \quad \text{and} \\
    \left(\exp(\delta) - \exp(\delta \eta) \right)^2 &= \exp(2 \delta \eta) (\exp(\ell) - 1)^2,
\end{align*}
which allows us show that
\begin{align*}
    \frac{\exp(2 \delta) - 1}{\left(\exp(\delta) - \exp(\delta \eta) \right)^2} &= \frac{\exp(2 \ell) - 1}{(\exp(\ell) - 1)^2} + \frac{1 - \exp(- 2 \delta \eta )}{(\exp(\ell) - 1)^2} \\
    &= \text{coth}\left( \frac{\ell}{2} \right) +  \frac{1 - \exp(- 2 \delta \eta )}{(\exp(\ell) - 1)^2} \\
    &\overset{(i)}{\leq} \text{coth}\left( \frac{\ell}{2} \right) + \frac{2 \delta \eta}{\ell^2} \\
    &= \text{coth}\left( \frac{\ell}{2} \right) + \frac{2 \eta}{\delta(1 - \eta)^2}
\end{align*}
where $(i)$ follows since $1 - \exp(-2\delta \eta) \leq 2 \delta \eta$ and $\exp(-\ell) - 1 \geq \ell$. Finally, note that since  $x \cdot \text{coth}(cx) \leq 1 / c + x$ for any scalar $c$, it follows that
\begin{align*}
    \frac{\delta}{2}\text{coth}\left( \frac{\ell}{2} \right) =  \frac{\delta}{2}\text{coth}\left( \frac{\delta(1 - \eta)}{2} \right) \leq \frac{\delta}{2} + \frac{1}{1 - \eta}.
\end{align*}
Putting everything together, we arrive at a final upper bound of
\begin{align*}
    \sigma^2_\delta &\leq  \frac{\pi_{\max}}{\pi^2_{\min}}\left(\frac{\delta}{2} +  \frac{1}{(1 - \eta)^2}\right) \Big( B^2 + 4B^2 \Big) + B^2 \\
    &\leq B^2 \left[1 + \frac{5}{2} \cdot \frac{\pi_{\max}}{\pi^2_{\min}} \left(\delta + \frac{2}{1 - \eta} \right) \right].
\end{align*}
Again, in the setting where $\delta < 0$ we can repeat this same analysis by starting with the inequality
\begin{align*}
     \mathbb{E}\left[\left(\frac{q_\delta(A \mid X)}{\pi(A \mid X)}\right)^2 \right] &\leq \frac{\pi_{\max}}{\pi^2_{\min}}\frac{\int^1_0 \exp(2 \delta a) da}{\left(\int^\eta_0 \exp(\delta a) da \right)^2}.
\end{align*}
which (after repeating calculations) yields an analogous upper bound in terms of $|\delta|$.
\end{proof}

\subsection{Proof of \texorpdfstring{\cref{kl_divergence_lemma}}{Proposition 2}} \label{kl_divergence_lemma_proof}

\begin{proof}[\textbf{Proof:}]
    Suppose $\delta > 0$ and observe that
    \begin{align*}
        D_{\text{KL}}\Big(q_\delta(a \mid X) \: || \: \pi(a \mid X) \Big) &= \int_a q_\delta(a \mid X) \text{log}\left(\frac{q_\delta(a \mid X)}{\pi(a \mid X)} \right) da \\
        &= \int_a q_\delta(a \mid X) \text{log}\left(\frac{\exp(\delta a)}{\int_t \exp( \delta t) \pi(t \mid X) dt}  \right) da \\
        &= \int_a q_\delta(a \mid X) \left( \delta a - \text{log}\left(\int_t \exp( \delta t) \pi(t \mid X) dt \right)  \right) da \\
        &= \delta \int_a a q_\delta(a \mid X)da -  \text{log}\left(\int_t \exp( \delta t) \pi(t \mid X) dt \right) \int_a q_\delta(a \mid X) da \\
        &= \delta \int_a a q_\delta(a \mid X)da -  \text{log}\left(\int_t \exp( \delta t) \pi(t \mid X) dt \right),
    \end{align*}
    where the last step follows since $\int_a q_\delta(a \mid X) da = 1$. Note that we can think of $\int_a a q_\delta(a \mid X)da$ as $\mathbb{E}_Q[A \mid X]$ for some $A \sim q_{\delta}( \cdot \mid X)$. From here, we will differentiate with respect to $\delta$. Now, under the assumption that $\pi(a \mid X)$ is continuous, it follows that
    \begin{align*}
        \frac{\partial }{\partial \delta} \left\{ \text{log}\left(\int_t \exp(\delta t) \pi(t \mid X) dt \right) \right\} &= \frac{1}{\int_t \exp(\delta t) \pi(t \mid X) dt}\left( \frac{\partial }{\partial \delta} \int_t \exp(\delta t) \pi(t \mid X) dt\right) \\
        &= \frac{1}{\int_t \exp(\delta t) \pi(t \mid X) dt}\left(  \int_t \frac{\partial }{\partial \delta}  \exp(\delta t) \pi(t \mid X) dt \right) \\
        &= \frac{1}{\int_t \exp(\delta t) \pi(t \mid X) dt} \left(\int_t t \exp(\delta t) \pi(t \mid X) dt \right) \\
        &= \int_a a q_\delta(a \mid X)  da
    \end{align*}
    and
    \begin{align*}
        \frac{\partial}{\partial \delta} \left\{ \delta \int_a a q_\delta(a \mid X)da \right\} &= \int_a a q_\delta(a \mid X)da + \delta \left( \frac{\partial}{\partial \delta} \int_a a q_\delta(a \mid X)da \right) \\
        &= \int_a a q_\delta(a \mid X)da + \delta \left(  \int_a \frac{\partial}{\partial \delta} a q_\delta(a \mid X)da \right). 
    \end{align*}
    Finally, to complete the proof, observe that
    \begin{align*}
        \frac{\partial}{\partial \delta} \left\{ a q_\delta(a \mid X) \right\} &= \frac{\partial}{\partial \delta} \left\{ \frac{a \exp(\delta a) \pi(a \mid X)}{\int_t \exp(\delta t) \pi(t \mid X) dt} \right\} \\
        &= \frac{a^2 \exp(\delta a) \pi(a \mid X)}{\int_t \exp(\delta t) \pi(t \mid X) dt} - \frac{\left(a \exp(\delta a) \pi(a \mid X)\right)\left(\int_t t \exp(\delta t) \pi(t \mid X) dt \right)}{\left(\int_t \exp(\delta t) \pi(t \mid X) dt \right)^2},
    \end{align*}
    so it follows that $D^\prime_{KL}(q_\delta, \pi_a) := \frac{\partial}{\partial \delta} \left\{ D_{\text{KL}}\Big(q_\delta(a \mid X) \: || \: \pi(a \mid X) \Big) \right\}$ is given by
    \begin{align*}
        D^\prime_{KL}(q_\delta, \pi_a) &= \delta \left(\frac{\int_a a^2 \exp( \delta a) \pi(a \mid X) da}{\int_t \exp(\delta t) \pi(t \mid X) dt} - \left(\frac{\left(\int_a a \exp(\delta a) \pi(a \mid X) da\right)}{\left(\int_t \exp(\delta t) \pi(t \mid X) dt \right)} \right)^2 \right) \\
        &= \delta \left( \int_a a^2 q_\delta(a \mid X) da - \left( \int_a a q_\delta(a \mid X) da \right)^2 \right)
    \end{align*}
    which is equal to $\delta \var_Q(A \mid X)$. Now, we evaluate the second derivative with respect to $\delta$. Here, it follows that
    \begin{align*}
        \frac{\partial}{\partial \delta} \left\{ \delta \var_Q(A \mid X) \right\} = \var_Q(A \mid X) + \delta \left(\frac{\partial}{\partial \delta}\var_Q(A \mid X) \right).
    \end{align*}
    Following similar steps from before, we can see that $\frac{\partial}{\partial \delta} \left\{ \int_a a^2 q_\delta(a \mid X) da \right\}$ is given by
    \begin{align*}
         \frac{\int_a a^3 \exp(\delta a) \pi(a \mid X) da }{\int_t \exp(\delta t) \pi(t \mid X) dt} - \frac{\left(\int_a a^2 \exp(\delta a) \pi(a \mid X) da \right)\left(\int_t t \exp(\delta t) \pi(t \mid X) dt \right)}{\left(\int_t \exp(\delta t) \pi(t \mid X) dt \right)^2}
    \end{align*}
    which reduces to $\mathbb{E}_Q[A^3 \mid X] - \mathbb{E}_Q[A^2 \mid X] \mathbb{E}_Q[A \mid X]$, and
    \begin{align*}
        \frac{\partial}{\partial \delta} \left\{ \left( \int_a a q_\delta(a \mid X) da \right)^2  \right\} &= 2 \int_a a q_\delta(a \mid X) da \left(\int_a a^2 q_\delta(a \mid X) da - \left( \int_a a q_\delta(a \mid X) da \right)^2 \right) \\
        &= 2 \mathbb{E}_Q[A \mid X] \Big(\mathbb{E}_Q[A^2 \mid X] - \mathbb{E}_Q[A \mid X]^2 \Big).
    \end{align*}
    Then, using the fact that
    \begin{align*}
        \mathbb{E}_Q[A^3 \mid X] - 3\mathbb{E}_Q[A^2 \mid X] \mathbb{E}_Q[A 
        \mid X]  + 2 \mathbb{E}_Q[A \mid X]^3 = \mathbb{E}_Q\left[(A - \mathbb{E}_Q[A \mid X])^3 \mid X \right]
    \end{align*}
    it follows that
    \begin{align*}
        \frac{\partial^2}{\partial \delta^2}  \left\{ D_{\text{KL}}\Big(q_\delta(a \mid X) \: || \: \pi(a \mid X) \Big) \right\} = \var_Q(A \mid X) + \delta \mathbb{E}_Q\left[(A - \mathbb{E}_Q[A \mid X])^3 \mid X \right].
    \end{align*}
    If we instead were interested about the Kullback-Leibler divergence between $\pi(a \mid X)$ and $q_\delta(a \mid X)$, we can see that
    \begin{align*}
         D_{\text{KL}}\Big(\pi(a \mid X) \: || \: q_\delta(a \mid X) \Big) &= \int_a \pi(a \mid X) \text{log}\left(\frac{\pi(a \mid X)}{q_\delta(a \mid X)} \right) da \\
         &= \int_a \pi(a \mid X) \left(\text{log}\left(\int_t \exp(\delta t) \pi(t \mid X) dt \right) - \delta a \right) da \\
         &= \text{log}\left(\int_t \exp(\delta t) \pi(t \mid X) dt \right)   - \delta \int_a a \pi(a \mid X) da,
    \end{align*}
    which yields a first and second derivative of
    \begin{align*}
        \frac{\partial}{\partial \delta} \left\{ D_{\text{KL}}\Big(\pi(a \mid X) \: || \: q_\delta(a \mid X) \Big) \right\} &= \mathbb{E}_Q[A \mid X] - \mathbb{E}[A \mid X] \\
        \frac{\partial^2}{\partial \delta^2} \left\{ D_{\text{KL}}\Big(\pi(a \mid X) \: || \: q_\delta(a \mid X) \Big) \right\} &= \var_Q(A \mid X).
    \end{align*}
\end{proof}

\subsection{Proof of \texorpdfstring{\cref{minimax_lb}}{Theorem 2}} \label{minimax_lb_proof}
\begin{proof}[\textbf{Proof:}]
Let $P_0$ and $P_1$ denote distributions in $\mathcal{P}$ such that $P_0 = \otimes^n_{i=1} P_{0i}$ and $P_1 = \otimes^n_{i=1} P_{1i}$. Then, recall from  \cite{tsbakov2009} that if
\begin{align*}
    H^2(P_0, P_1) \leq \alpha < 2
\end{align*}
and $\psi(p_{0i}) - \psi(p_{1i}) \geq s > 0$ for a functional $\psi : \mathcal{P} \to \mathbb{R}$ for all $i = 1, \ldots, n$ then,
\begin{align*}
    \underset{\widehat{\psi}}{\text{inf}} \: \underset{P \in \mathcal{P}}{\text{sup}} \mathbb{E}_P \left[\ell\left(\widehat{\psi} - \psi(P) \right) \right] \geq \ell(s/2)\left(\frac{1 - \sqrt{\alpha(1 - \alpha/4)}}{2} \right)
\end{align*}
for any monotonic non-negative loss function $\ell$. Thus, for some $\varepsilon > 0$, consider the following null density and fluctuated alternative:
\begin{align*}
    p_0(z) &= p(y \mid x, a) p(a \mid x) p(x) \\
    p_1(z) &= \Big[p(y \mid x, a)(1 + \varepsilon \phi_y(z; p)) \Big]p(a \mid x) p(x)
\end{align*}
where we define
\begin{align*}
    \phi_y(z; p) = \frac{q_\delta(a \mid x)}{p(a \mid x)}(y - \mu(x, a)).
\end{align*}
Additionally, let 
\begin{align*}
    \phi_{\psi}(z; p) &= \frac{q_\delta(a \mid x)}{p(a \mid x)}\left( \mu(x, a) - \int q_\delta(a \mid x) \mu(x, a)da \right) \\
    \phi_{q, \mu}(z; p) &= \int \mu(x, a) q(a \mid x) da - \psi 
\end{align*}
such that $\varphi(z; p) = \phi_y(z; p) + \phi_{q, \mu}(z; p) + \phi_{\psi}(z ;p)$. With these definitions in place, we will first evaluate the functional separation. Applying the von Mises expansion, we can see that
\begin{align*}
    \psi(p_0) - \psi(p_1) - R_2(p_0, p_1) = - \int \varphi(z; p_0) p_1(z)dz.
\end{align*}
Then, plugging in our definitions of $p_1(z)$ we find
\begin{align*}
    - \int \varphi(z; p_0) p_1(z)dz &=  - \int \varphi(z; p_0) \Big[p(y \mid x, a)(1 + \varepsilon \phi_y(z; p)) \Big]p(a \mid x) p(x) dz \\
    &=  - \int \varphi(z; p) p(z) dz - \varepsilon \int \Big(\varphi(z; p_0) \phi_y(z; p)\Big) p(y \mid x, a) p(a \mid x) p(x) dz.
\end{align*}
From here, we can see that the first integral is zero, since $\varphi(\cdot)$ is mean-zero. Next, by expanding out $\varphi(z; p_0) \phi_y(z; p)$ it follows that
\begin{align*}
      - \varepsilon \int \Big(\varphi(z; p_0) \phi_y(z; p)\Big) p(y \mid x, a) p(a \mid x) p(x) dz &=  - \varepsilon \int  \phi_y(z; p)^2 p(z) dz,
\end{align*}
as each of the cross-terms are zero. That is,
\begin{align*}
    \int \phi_{y}(z; p) \phi_{q, \mu} p(z) dz &= 0 \\
    \int \phi_{y}(z; p) \phi_{\psi} p(z) dz &= 0
\end{align*}
as shown in the proof of \cref{nonpar_efficiency_bound}. Thus, we are left with
\begin{align*}
    \psi(p_0) - \psi(p_1) - R_2(p_0, p_1) = -\varepsilon \int  \phi_y(z; p)^2 p(z) dz.
\end{align*}
Furthermore, it can be shown that the remainder term $R_2(p_0, p_1)$ under the null and fluctuated alternative densities is also equal to zero. Recall that by \cref{remainder_term} it follows that $R_2(p_0, p_1) = R_1 - R_2$ where
\begin{align*}
    R_1 &= \mathbb{E} \left[\int_a \left(\frac{q_0(a \mid X)}{p_0(a \mid X)} - \frac{q_1(a \mid X)}{p_1(a \mid X)} \right)\Big((p_1(a \mid X)\mu_1(X, a) - p_0(a \mid x)\mu_0(X, a) \Big)  da \right] \\
    R_2 &= \mathbb{E} \left[\left(\int_a \frac{q_1(a \mid X)}{p_1(a \mid X)} \mu_0(X, a) p_0(a \mid X) da \right)\left(\int_a \frac{q_0(a \mid X)}{p_0(a \mid X)} (p_1(a \mid X) - p_0(a \mid X)) da \right)^2\right],
\end{align*}
which evaluates to zero, because $p_0(a \mid X) = p_1(a \mid X)$ under the null and fluctuated densities. This simplifies our functional separation to
\begin{align*}
    |\psi(p_0) - \psi(p_1)|  = \varepsilon \left| \int  \phi_y(z; p)^2 p(z) dz \right|.
\end{align*}
Thus, all that remains is to bound the integral of $\phi_y(z;p)^2$. By \cref{sigma_bounds}, it follows that 
\begin{align*}
   |\psi(p_0) - \psi(p_1)| \geq \varepsilon  \left( \frac{2}{5} \cdot \frac{p_{\min} \sigma^2_{\min}}{p^2_{\max}} \right) \delta 
\end{align*}
where we assume that $p_{\min} \leq p(a \mid x) \leq p_{\max}$ for all $a,x$ and $\sigma^2_{\min} \leq \var(Y \mid X, A)$. Note that this corresponds to the setting under \cref{sigma_bounds} in which $\eta = 0$. This satisfies the functional separation requirement. Next, we evaluate the $\chi^2$-distance between the null density and fluctuated alternative. Observe that
\begin{align*}
    \chi^2(p_1, p_0) &= \int \left( \frac{p_1(z)}{p_0(z)} \right)^2 p_0(z)dz - 1 \\
    &= \int  \frac{p^2_1(z)}{p_0(z)}  - 1 \\
    &= \int  \frac{\left(\big[p(y \mid x, a)(1 + \varepsilon \phi_y(z; p)) \big]p(a \mid x) p(x)\right)^2}{p_0(z)}dz  - 1 \\
    &= \int  \frac{\left(p_0(z) + \varepsilon \phi_y(z; p) p_0(z)\right)^2}{p_0(z)} dz  - 1 \\
    &= \int  \left(p_0(z) + 2\varepsilon \phi_y(z; p) p_0(z) dz + \varepsilon^2 \phi_y(z; p)^2 p_0(z)\right)  - 1
\end{align*}
Then, using the fact that $\int p_0(z) dz = 1$ and
\begin{align*}
    \int \phi_y(z; p) p_0(z) dz &= \int \frac{q_\delta(a \mid x)}{p(a \mid x)}\Big(y - \mu(x, a)\Big) p_0(z) dz \\
    &= \int \frac{q_\delta(a \mid x)}{p(a \mid x)}\Big(\mu(x, a) - \mu(x, a)\Big) p(y \mid x, a) p(a \mid x) p(x) dz = 0
\end{align*}
it follows that
\begin{align*}
    \chi^2(p_1, p_0) = \varepsilon^2 \int \phi_y(z; p)^2 p_0(z) dz.
\end{align*}
Now, we need to upper bound $\int \phi_y(z; p)^2 p_0(z) dz$. Again, by \cref{sigma_bounds} it follows that
\begin{align*}
    \chi^2(p_1, p_0) \leq \varepsilon^2 B^2 \left[1 + \frac{5}{2} \cdot \frac{p_{\max}}{p^2_{\min}} \left(2 + \delta \right) \right]
\end{align*}
under the additional assumption that $|Y|\leq B$ with probability one. To complete the proof, we make use of the fact that the $\chi^2$ distance upper bounds the Hellinger distance in the sense that for distributions $P_0$ and $P_1$, $H^2(P_0, P_1) \leq \sqrt{\chi^2(P_0, P_1)}$. Therefore, if we show that $\sqrt{\chi^2(P_0, P_1)} < \alpha$, for some $\alpha < 2$, it must also hold for $H^2(P_0, P_1)$. From here, recall that for product measures $P_0 = \otimes^n_{i=1} p_{0i}$ and $P_1 = \otimes^n_{i=1} p_{1i}$, then
\begin{align*}
    \chi^2(P_0, P_1) = \prod^n_{i=1} \Big(1 + \chi^2(p_{0i}, p_{1i})\Big) - 1,
\end{align*}
so we must show that $\prod^n_{i=1} \Big(1 + \chi^2(p_{0i}, p_{1i})\Big) < \alpha^2 + 1$. After a log transform
\begin{align*}
   \text{log}\left( \prod^n_{i=1} \Big(1 + \chi^2(p_{0i}, p_{1i})\Big) \right) &= \sum^n_{i=1} \text{log}\Big(1 + \chi^2(p_{0i}, p_{1i})\Big) \\
   &\leq n \text{log}\left(1 + \varepsilon^2  B^2 \left(1 + \frac{5 p_{\max}}{2 p_{\min}^2} \left(2 + \delta  \right) \right)\right) \\
   &\leq n  \varepsilon^2  B^2 \left(1 + \frac{5 p_{\max}}{2 p_{\min}^2} \left(2 + \delta  \right) \right).
\end{align*}
using $\log(1+x) \leq x$ in the last line. 
Thus, it suffices to show that
\begin{align*}
    n  \varepsilon^2  B^2 \left(1 + \frac{5 p_{\max}}{2 p_{\min}^2} \left(2 + \delta  \right) \right) < \text{log}(\alpha^2 + 1)
\end{align*}
for some $\alpha < 2$, which clearly holds when
\begin{align*}
    \varepsilon^2 < \frac{\text{log}(\alpha^2 + 1)}{n  B^2 \left( 1 + \frac{5 p_{\max}}{2 p_{min}^2} (2+\delta) \right)}.
\end{align*}
If we assume $\delta \geq 2$, then it follows that
\begin{align*}
    \frac{\text{log}(\alpha^2 + 1)}{n  B^2 \left( 1 + \frac{5 p_{\max}}{2 p_{min}^2} (2+\delta) \right)} &\geq \frac{\text{log}(\alpha^2 + 1)}{n  B^2 \left( 1 + \frac{5 p_{\max}}{p_{min}^2} \delta  \right)} \geq \frac{\text{log}(\alpha^2 + 1)}{n  B^2 \left( 1 + \frac{5 p^2_{\max}}{ p_{min}^2} \delta  \right)} \geq \frac{\text{log}(\alpha^2 + 1)}{ 10 n  B^2 ( p^2_{\max} / p_{min}^2)\delta }
\end{align*}
where the last inequality follows since $p^2_{\max} / p^2_{\min} \geq 1$. Thus, setting
\begin{align*}
    \varepsilon^2 =\frac{\text{log}(\alpha^2 + 1)}{ 10 n  B^2 ( p^2_{\max} / p_{\min}^2 )\delta }
\end{align*}
ensures that $H^2(P_0, P_1) \leq \alpha < 2$. Putting everything together, since
\begin{align*}
    s &= \varepsilon \left( \frac{2}{5} \cdot  \frac{p_{\min} \sigma^2_{\min}}{p_{\max}^2} \right)  \delta \\
    &= \sqrt{\frac{\text{log}(\alpha^2 + 1)}{ 10 n  B^2 ( p^2_{\max} / p_{min}^2 )\delta }} \left( \frac{2}{5} \cdot \frac{p_{\min} \sigma^2_{\min}}{p_{\max}^2} \right) \delta  \\
    &= \sqrt{\frac{\delta}{n}} \left(\frac{\sqrt{10 \cdot \text{log}(\alpha^2 + 1)}p^2_{\min} \sigma^2_{\min}}{25 B p^3_{\max}} \right),
\end{align*}
we can see that 
\begin{align*}
    \underset{\widehat{\psi}}{\text{inf}} \ \underset{P \in \mathcal{P}}{\text{sup}} \ \mathbb{E}_P \left[\ell\left(\widehat{\psi} - \psi(\delta) \right) \right] &\geq \ell\left(\sqrt{\frac{\delta}{n}} \left(\frac{\sqrt{10 \cdot \text{log}(\alpha^2 + 1)}p^2_{\min} \sigma^2_{\min}}{50 B p^3_{\max}} \right) \right)\left(\frac{1 - \sqrt{\alpha(1 - \frac{\alpha}{4})}}{2} \right).
\end{align*}
From here, if we are interested in the root mean squared error as our loss function, it follows that
\begin{align*}
    \underset{\widehat{\psi}}{\text{inf}} \ \underset{P \in \mathcal{P}}{\text{sup}} \ \mathbb{E}_P \left|\widehat{\psi} - \psi(\delta) \right| &\geq \sqrt{\frac{C}{n/\delta}}
\end{align*}
where
\begin{align*}
    C = \left(\frac{\sqrt{10 \cdot \text{log}(\alpha^2 + 1)}p^2_{\min} \sigma^2_{\min}}{50 B p^3_{\max}} \right)\left(\frac{1 - \sqrt{\alpha(1 - \alpha/4)}}{2} \right).
\end{align*}
Finally, note that although we have considered the set of models $\mathcal{P}$ lower bounded densities, we can immediately extend this result to the set of models that only have upper bounded densities, $\mathcal{P}^\prime$. This follows since
\begin{align*}
    \underset{\widehat{\psi}}{\text{inf}} \underset{P \in \left\{ \mathcal{P} \cup \mathcal{P}^\prime \right\} }{\text{sup}} \mathbb{E}_P \left|\widehat{\psi} - \psi(\delta) \right| \geq \underset{\widehat{\psi}}{\text{inf}} \ \underset{P \in \mathcal{P}}{\text{sup}} \ \mathbb{E}_P \left|\widehat{\psi} - \psi(\delta) \right| &\geq \sqrt{\frac{C}{n/\delta}}.
\end{align*}

\end{proof}

\subsection{Proof of \texorpdfstring{\cref{remainder_term}}{Proposition 3}} \label{remainder_term_proof}
\begin{proof}[\textbf{Proof:}]
Recall that for two distributions, $\mathbb{P}$ and $\widehat{\mathbb{P}}$, the remainder term in the von Mises expansion is given by
\begin{align*}
    R_2(\widehat{\mathbb{P}}, \mathbb{P}) &= \psi(\widehat{\mathbb{P}}) - \psi(\mathbb{P}) - \int \varphi (z; \widehat{\mathbb{P}}) d(\widehat{\mathbb{P}} - \mathbb{P})(z) \\
    &= \psi(\widehat{\mathbb{P}}) - \psi(\mathbb{P}) + \int \varphi (z; \widehat{\mathbb{P}}) d\mathbb{P}(z) - \int \varphi (z; \widehat{\mathbb{P}}) d\widehat{\mathbb{P}}(z) \\
    &= \psi(\widehat{\mathbb{P}}) - \psi(\mathbb{P}) + \int \varphi (z; \widehat{\mathbb{P}}) d\mathbb{P}(z) \tag{Since $\varphi$ is mean-zero}\\
    &= \cancel{\psi(\widehat{\mathbb{P}})} - \psi(\mathbb{P}) + \int\phi(z; \widehat{\mathbb{P}}) d\mathbb{P}(z) - \cancel{\psi(\widehat{\mathbb{P}})} \\
    &= \mathbb{E}_\mathbb{P}\left[\phi(Z; \widehat{\mathbb{P}}) - \psi(\mathbb{P})\right]
\end{align*}
where $\phi(Z; \mathbb{P}) = \varphi(Z; \mathbb{P}) + \psi(\mathbb{P})$. For notational convenience, we use the shorthand script $\pi_a = \pi(a \mid X)$, $\mu_a = \mu(X, a)$, and $f_\delta(\pi_a) = f_\delta( \pi(a \mid X))$. Then, leveraging our results from \cref{eif_theorem} it follows that
\begin{align*}
    R_2(\widehat{\mathbb{P}}, \mathbb{P}) &= \mathbb{E}_\mathbb{P}\Bigg[ \frac{f_\delta(\widehat{\pi}(A \mid X))}{\widehat{\pi}(A \mid X) \sum_t f_\delta(\widehat{\pi}_t)}\Big(Y - \widehat{\mu}(X, A) \Big) \: + \tag{$i$}\\
    &\phantom{{}={\mathbb{E}_\mathbb{P}\Bigg[}} \left( \frac{f^{\prime}_\delta(\widehat{\pi}(A \mid X)) \widehat{\mu}(X, A)}{\sum_t f_\delta(\widehat{\pi}_t)} - \frac{\sum_t f^{\prime}_\delta(\widehat{\pi}_t) \widehat{\pi}_t \widehat{\mu}_t}{\sum_t f_\delta(\widehat{\pi}_t)} \right) \: - \tag{$ii$} \\
    &\phantom{{}={\mathbb{E}_\mathbb{P}\Bigg[}}  \frac{\sum_t f_\delta( \widehat{\pi}_t) \widehat{\mu}_t}{\left[\sum_t f_\delta(\widehat{\pi}_t) \right]^2} \left(f^\prime_\delta( \widehat{\pi}(A \mid X)) - \sum_t f^\prime_\delta( \widehat{\pi}_t) \widehat{\pi}_t \right) \: + \tag{$iii$}\\
    &\phantom{{}={\mathbb{E}_\mathbb{P}\Bigg[}} \frac{\sum_t  \mu_t f_\delta(\widehat{\pi}_t )}{\sum_t f_\delta(\widehat{\pi}_t)} - \psi(\mathbb{P}) \Bigg]. \tag{$iv$}
\end{align*}
From here, we simplify the remainder by applying the law of iterated expectations. Observe that
\begin{align*}
    (i) &= \mathbb{E}\left[ \mathbb{E}\left[ \frac{f_\delta(\widehat{\pi}(A \mid X))}{\widehat{\pi}(A \mid X) \sum_t f_\delta(\widehat{\pi}_t)}\Big(Y - \widehat{\mu}(X, A) \Big) \mid X, A \right]\right] \\
    &= \mathbb{E}\left[  \frac{f_\delta(\widehat{\pi}(A \mid X))}{\widehat{\pi}(A \mid X) \sum_t f_\delta(\widehat{\pi}_t)}\Big(\mu(X, A) - \widehat{\mu}(X, A) \Big) \right] \\
    &= \mathbb{E}\left[ \mathbb{E}\left[  \frac{f_\delta(\widehat{\pi}(A \mid X))}{\widehat{\pi}(A \mid X) \sum_t f_\delta(\widehat{\pi}_t)}\Big(\mu(X, A) - \widehat{\mu}(X, A) \Big) \mid X \right] \right] \\
    &= \mathbb{E}\left[ \sum_a   \frac{f_\delta(\widehat{\pi}_a) \pi_a}{\widehat{\pi}_a \sum_t f_\delta(\widehat{\pi}_t)}\Big(\mu_a - \widehat{\mu}_a \Big)  \right]. 
\end{align*}
Similarly, the other terms in the remainder can be expressed as
\begin{align*}
    (ii) &= \mathbb{E}\left[\left( \frac{f^{\prime}_\delta(\widehat{\pi}(A \mid X)) \widehat{\mu}(X, A)}{\sum_t f_\delta(\widehat{\pi}_t)} - \frac{\sum_t f^{\prime}_\delta(\widehat{\pi}_t) \widehat{\pi}_t \widehat{\mu}_t}{\sum_t f_\delta(\widehat{\pi}_t)} \right) \right] \\
    &= \mathbb{E}\left[ \frac{1}{\sum_t f_\delta(\widehat{\pi}_t)}\sum_a \left( f^{\prime}_\delta(\widehat{\pi}_a) \widehat{\mu}_a - \sum_t f^{\prime}_\delta(\widehat{\pi}_t) \widehat{\pi}_t \widehat{\mu}_t \right)\pi_a  \right] \\
    &= \mathbb{E}\left[ \frac{1}{\sum_t f_\delta(\widehat{\pi}_t)}\sum_a  f^{\prime}_\delta(\widehat{\pi}_a) \widehat{\mu}_a \pi_a - \frac{1}{\sum_t f_\delta(\widehat{\pi}_t)}\sum_t f^{\prime}_\delta(\widehat{\pi}_t) \widehat{\pi}_t \widehat{\mu}_t  \right] \\
    &= \mathbb{E}\left[ \frac{1}{\sum_t f_\delta(\widehat{\pi}_t)}\sum_a  f^{\prime}_\delta(\widehat{\pi}_a) \widehat{\mu}_a (\pi_a -  \widehat{\pi}_a)  \right] 
\end{align*}
and
\begin{align*}
    (iii) &= \mathbb{E}\left[\frac{\sum_t f_\delta( \widehat{\pi}_t) \widehat{\mu}_t}{\left[\sum_t f_\delta(\widehat{\pi}_t) \right]^2} \left(f^\prime_\delta( \widehat{\pi}(A \mid X)) - \sum_t f^\prime_\delta( \widehat{\pi}_t) \widehat{\pi}_t \right) \right] \\
    &= \mathbb{E} \left[\frac{\sum_t f_\delta( \widehat{\pi}_t) \widehat{\mu}_t}{\left[\sum_t f_\delta(\widehat{\pi}_t) \right]^2} \left( \sum_a f^\prime_\delta( \widehat{\pi}_a) \pi_a  - \sum_t f^\prime_\delta( \widehat{\pi}_t) \widehat{\pi}_t \right) \right] \\
    &= \mathbb{E} \left[\frac{\sum_t f_\delta( \widehat{\pi}_t) \widehat{\mu}_t}{\left[\sum_t f_\delta(\widehat{\pi}_t) \right]^2} \left( \sum_a f^\prime_\delta( \widehat{\pi}_a) (\pi_a - \widehat{\pi}_a) \right) \right]
\end{align*}
and finally
\begin{align*}
    (iv) &= \mathbb{E}\left[\frac{\sum_t  \mu_t f_\delta(\widehat{\pi}_t )}{\sum_t f_\delta(\widehat{\pi}_t)} - \psi(\mathbb{P}) \right].
\end{align*}
Then, putting everything together, it follows that
\begin{align*}
    R_2(\widehat{\mathbb{P}}, \mathbb{P}) &= \mathbb{E}\Bigg[ \frac{1}{\sum_t f_\delta(\widehat{\pi}_t)} \sum_a   \frac{f_\delta(\widehat{\pi}_a) \pi_a}{\widehat{\pi}_a }\Big(\mu_a - \widehat{\mu}_a \Big) - \frac{\sum_t f_\delta( \widehat{\pi}_t) \widehat{\mu}_t}{\left[\sum_t f_\delta(\widehat{\pi}_t) \right]^2} \left( \sum_a f^\prime_\delta( \widehat{\pi}_a) (\pi_a - \widehat{\pi}_a) \right) \: + \\
    &\phantom{{}={\mathbb{E}\Bigg[}}  \frac{1}{\sum_t f_\delta(\widehat{\pi}_t)}\sum_a  f^{\prime}_\delta(\widehat{\pi}_a) \widehat{\mu}_a (\pi_a -  \widehat{\pi}_a) + \frac{\sum_t  \mu_t f_\delta(\widehat{\pi}_t )}{\sum_t f_\delta(\widehat{\pi}_t)} - \psi(\mathbb{P}) \Bigg].
\end{align*}
From here, we will proceed with a Taylor series expansion argument. Observe that
\begin{align*}
    f_\delta(\pi_a) - f_\delta(\widehat{\pi}_a) = f^\prime (\widehat{\pi}_a) \big(\pi_a - \widehat{\pi}_a \big) + \underbrace{f^{\prime \prime} (\pi^*_a) \big(\pi_a - \widehat{\pi}_a \big)^2}_{r(a, X)}
\end{align*}
where $\pi^*_a$ is between $\pi_a$ and $\widehat{\pi}_a$. Substituting in $r(a, X)$ and adding and subtracting the term $f_\delta(\pi_a)$, we can see that
\begin{align*}
    R_2(\widehat{\mathbb{P}}, \mathbb{P}) &= \mathbb{E}\Bigg[ \frac{1}{\sum_t f_\delta(\widehat{\pi}_t)} \sum_a \left(\frac{f_\delta(\widehat{\pi}_a) \pi_a}{\widehat{\pi}_a} + \textcolor{blue}{f_\delta(\pi_a)} - \textcolor{red}{f_\delta(\pi_a)} \right)\big(\mu_a - \widehat{\mu}_a \big) \: - \\
    &\phantom{{}={\mathbb{E} \Bigg[}}  \frac{\sum_a f_\delta( \widehat{\pi}_a) \widehat{\mu}_a}{\left[\sum_t f_\delta(\widehat{\pi}_t) \right]^2}  \sum_a \Big(  f_\delta(\pi_a) - f_\delta(\widehat{\pi}_a) - r(a, X) \Big) \: + \\
    &\phantom{{}={\mathbb{E} \Bigg[ }} \frac{1}{\sum_t f_\delta(\widehat{\pi}_t)} \sum_a \Big(  f_\delta(\pi_a) - f_\delta(\widehat{\pi}_a) - r(a, X) \Big)\widehat{\mu}_a   \: + \\
    &\phantom{{}={\mathbb{E} \Bigg[ }} \frac{1}{\sum_t f_\delta(\widehat{\pi}_t)} \sum_a f_\delta(\widehat{\pi}_a) \widehat{\mu}_a  - \psi \Bigg].
\end{align*}
From here, expanding out this expression and making cancellations reduces the remainder term to
\begin{align*}
     R_2(\widehat{\mathbb{P}}, \mathbb{P}) &=  \mathbb{E} \Bigg[ \frac{1}{\sum_t f_\delta(\widehat{\pi}_t)}\sum_a \left(\frac{f_\delta(\widehat{\pi}_a) \pi_a}{\widehat{\pi}_a} - f_\delta(\pi_a) \right)\big(\mu_a - \widehat{\mu}_a \big) \: - \\
    &\phantom{{}={\mathbb{E} \Bigg[}}  \frac{\sum_a f_\delta( \widehat{\pi}_a) \widehat{\mu}_a}{\left[\sum_a f_\delta(\widehat{\pi}_t) \right]^2}  \sum_t \Big(  f_\delta(\pi_a) - f_\delta(\widehat{\pi}_a) \Big) +   \sum_a \frac{f_\delta( \widehat{\pi}_a) \widehat{\mu}_a}{\left[\sum_t f_\delta(\widehat{\pi}_t) \right]^2}  \sum_a r(a, X)   \: - \\
    &\phantom{{}={\mathbb{E} \Bigg[ }} \frac{1}{\sum_t f_\delta(\widehat{\pi}_t)} \sum_a  r(a, X) \widehat{\mu}_a  + \sum_a f_\delta(\pi_a ) \mu_a \left(\frac{1}{\sum_t f_\delta(\widehat{\pi}_t)} - \frac{1}{\sum_t f_\delta(\pi_t )} \right) \Bigg].
\end{align*}
Finally, to algebraically manipulate our expression into second-order terms, let $u = \sum_a f_\delta (\pi_a) \mu_a$ and $v = \sum_a f_\delta(\widehat{\pi}_a)$. Then, it is clear that
\begin{align*}
    u\left( \frac{1}{\widehat{v}} - \frac{1}{v}\right) - \widehat{u}\left( \frac{v - \widehat{v}}{\widehat{v}^2}\right) &= \Bigg\{ \sum_a f_\delta(\pi_a ) \mu_a \left(\frac{1}{\sum_t f_\delta(\widehat{\pi}_t)} - \frac{1}{\sum_t f_\delta(\pi_t )} \right) \: - \\
    &\phantom{{}={\Bigg\{}} \frac{\sum_a f_\delta( \widehat{\pi}_a) \widehat{\mu}_a}{\left[\sum_t f_\delta(\widehat{\pi}_t) \right]^2}  \sum_t \Big(  f_\delta(\pi_a) - f_\delta(\widehat{\pi}_a) \Big) \Bigg\}.
\end{align*}
Furthermore, we can show that this expression can be written in terms of products of differences between distributions by noting that
\begin{align*}
    u\left( \frac{1}{\widehat{v}} - \frac{1}{v}\right) - \widehat{u}\left( \frac{v - \widehat{v}}{\widehat{v}^2}\right) &= (u - \textcolor{red}{\widehat{u}} + \textcolor{blue}{\widehat{u}}) \left( \frac{1}{\widehat{v}} - \frac{1}{v}\right) - \widehat{u}\left( \frac{v - \widehat{v}}{\widehat{v}^2}\right) \\
    &= (u - \widehat{u})\left( \frac{1}{\widehat{v}} - \frac{1}{v}\right) - \widehat{u}\left(\frac{(\widehat{v} - v)^2}{v \widehat{v}^2} \right).
\end{align*}
This allows us to obtain the final expression,
\begin{align*}
     R_2(\widehat{\mathbb{P}}, \mathbb{P}) &= \mathbb{E} \Bigg[ \frac{1}{\sum_t f_\delta(\widehat{\pi}_t)} \sum_a \left(\frac{f_\delta(\widehat{\pi}_a) \pi_a}{\widehat{\pi}_a} - f_\delta(\pi_a) \right)\big(\mu_a - \widehat{\mu}_a \big) \: - \\
    &\phantom{{}={\mathbb{E} \Bigg[}} \frac{\sum_a f_\delta( \widehat{\pi}_a) \widehat{\mu}_a}{\sum_t f_\delta(\pi_t)\left[\sum_t f_\delta(\widehat{\pi}_t) \right]^2} \left( \sum_t  f_\delta(\pi_a) - f_\delta(\widehat{\pi}_a)   \right)^2 \: + \\
    &\phantom{{}={\mathbb{E} \Bigg[}} \sum_a \frac{f_\delta( \widehat{\pi}_a) \widehat{\mu}_a}{\left[\sum_t f_\delta(\widehat{\pi}_t) \right]^2}  \sum_a r(a, X)  -  \frac{1}{\sum_t f_\delta(\widehat{\pi}_t) }\sum_a  r(a, X) \widehat{\mu}_a  \: + \\
    &\phantom{{}={\mathbb{E} \Bigg[}} \sum_a \Big( f_\delta(\pi_a ) \mu_a - f_\delta(\widehat{\pi}_a) \widehat{\mu}_a \Big) \left(\frac{1}{\sum_t f_\delta(\widehat{\pi}_t)} - \frac{1}{\sum_t f_\delta(\pi_t )} \right) \Bigg],
\end{align*}
which confirms that the remainder term from the von Mises expansion is a second-order product of errors. In the case that we are using the exponentially tilted intervention distribution, using the fact that
\begin{align*}
    r(a, X) &= f_\delta(\pi_a) - f_\delta(\widehat{\pi}_a) - f^\prime (\widehat{\pi}_a) \big(\pi_a - \widehat{\pi}_a \big) \\
    &= \exp(\delta a)( \pi_a -  \widehat{\pi}_a) - \exp(\delta a) (\pi_a - \widehat{\pi}_a) \\
    &= 0,
\end{align*}
it follows that the remainder term $R_2(\widehat{\mathbb{P}}, \mathbb{P})$ simplifies to
\begin{align*}
    R_2(\widehat{\mathbb{P}}, \mathbb{P}) &=  \mathbb{E} \Bigg[ \sum_a \Big( \exp(\delta a) (\pi_a \mu_a - \widehat{\pi}_a \widehat{\mu}_a)\Big) \left(\frac{1}{\sum_t \exp(\delta t) \widehat{\pi}_t} - \frac{1}{\sum_t \exp(\delta t) \pi_t}\right) \: - \\
    &\phantom{{}={\mathbb{E} \Bigg[}}  \frac{\sum_a \exp(\delta a) \widehat{\pi}_a \widehat{\mu}_a}{\sum_t \exp(\delta t) \pi_t \left[ \sum_t \exp(\delta t) \widehat{\pi}_t \right]^2}\left(\sum_a \exp(\delta a) (\pi_a - \widehat{\pi}_a) \right)^2 \Bigg] \\
    &= \mathbb{E} \Bigg[\sum_a \left(\frac{\widehat{q}_\delta}{\widehat{\pi}_a} - \frac{q_\delta}{\pi} \right) \big( \pi_a \mu_a - \widehat{\pi}_a \widehat{\mu}_a)\big)  - \sum_a \frac{q_\delta}{\pi_a} \widehat{\pi}_a \widehat{\mu}_a \left(\sum_a \frac{\widehat{q}_\delta}{\widehat{\pi}_a }(\pi_a - \widehat{\pi}_a) \right)^2\Bigg].
\end{align*}
Then, after adding and subtracting $\pi_a \widehat{\mu}_a$ in the second term, we arrive at the final expression of $R_2(\widehat{\mathbb{P}}, \mathbb{P}) = R_1 - R_2$ where
    \begin{align*}
        R_1 &= \mathbb{E} \left[\int_a \left(\frac{\widehat{q}_\delta}{\widehat{\pi}_a} - \frac{q_\delta}{\pi_a} \right)\Big((\pi_a - \widehat{\pi}_a) \widehat{\mu}_a + (\mu_a - \widehat{\mu}_a) \pi_a \Big) da\right]  \\
        R_2 &= \mathbb{E} \left[\left(\int_a \frac{q_\delta}{\pi_a} \widehat{\mu}_a\widehat{\pi}_a da \right)\left(\int_a \frac{\widehat{q}_\delta}{\widehat{\pi}_a} (\pi_a - \widehat{\pi}_a) da \right)^2\right].
    \end{align*}
\end{proof}

\subsection{Proof of \texorpdfstring{\cref{l2_upper_bound}}{Equation (2)}} \label{l2_upper_bound_proof}
\begin{proof}[\textbf{Proof:}]
In this proof we demonstrate how global positivity is used (and required) to obtain a bound on the remainder of the von Mises expansion, as in the work of \cite{diaz2020causal}. We begin by bounding the term $R_2$ from \cref{remainder_term}. Suppose that $\pi_{\min} \leq \text{min}\left\{\pi(a \mid x), \widehat{\pi}(a \mid x) \right\}$ for all $a, x$. Then,
\begin{align*}
    R_2 &= \mathbb{E} \left[\left(\int_a \frac{q_\delta}{\pi_a} \widehat{\mu}_a\widehat{\pi}_a da \right)\left(\int_a \frac{\widehat{q}_\delta}{\widehat{\pi}_a} (\pi_a - \widehat{\pi}_a) da \right)^2\right] \\
    &= \mathbb{E} \left[\left(\int_a \frac{q_\delta}{\pi_a} \widehat{\mu}_a\widehat{\pi}_a da \right)\left(\int_a \frac{\widehat{q}_\delta}{\widehat{\pi}_a} (\pi_a - \widehat{\pi}_a) \frac{\sqrt{\pi_a}}{\sqrt{\pi_a} } da \right)^2\right] \\
    &\leq \frac{1}{\pi_{\min}} \left| \mathbb{E} \left[\left(\int_a \frac{q_\delta}{\pi_a} \widehat{\mu}_a\widehat{\pi}_a da \right)\left(\int_a \frac{\widehat{q}_\delta}{\widehat{\pi}_a} (\pi_a - \widehat{\pi}_a) \sqrt{\pi}_a da \right)^2\right] \right| \\
    &\leq \frac{1}{\pi_{\min}} \left| \mathbb{E} \left[\left(\int_a \frac{q_\delta}{\pi_a} \widehat{\mu}_a\widehat{\pi}_a da \right)\left(\int_a \left(\frac{\widehat{q}_\delta}{\widehat{\pi}_a}\right)^2  da \right) \left( \int_a (\pi_a - \widehat{\pi}_a)^2 \pi_a da \right)\right] \right| \\
    &\leq \frac{1}{\pi_{\min}} \left| \left| \left(\int_a \frac{q_\delta}{\pi_a} \widehat{\mu}_a\widehat{\pi}_a da \right)\left(\int_a \left(\frac{\widehat{q}_\delta}{\widehat{\pi}_a}\right)^2  da \right) \right| \right|_\infty \left| \left|\pi - \widehat{\pi} \right| \right|^2_2
\end{align*}
where the second inequality follows by applying Cauchy-Schwarz, and the third by applying Holder's inequality. From here, we can see that
\begin{align*}
    \int_a \frac{q_\delta}{\pi_a} \widehat{\mu}_a\widehat{\pi}_a da = \frac{\int_a \exp(\delta a) \widehat{\mu}_a \widehat{\pi}_a da }{\int_a \exp(\delta a) \widehat{\pi}_a da } \leq B\left(\frac{\pi_{\max}}{\pi_{\min}} \right) 
\end{align*}
and
\begin{align*}
    \int_a \left(\frac{\widehat{q}_\delta}{\widehat{\pi}_a}\right)^2 da =   \frac{\int_a \exp(2 \delta a)  da}{\left(\int_a \exp(\delta a) \widehat{\pi}_a  da \right)^2} \leq \frac{1}{\pi_{\min}^2} \frac{\int_a \exp(2 \delta a) da}{\left(\int_a \exp(\delta a) da \right)^2} = \frac{\delta  \coth(\delta / 2)}{2 \pi_{\min}^2}
\end{align*}
so it follows that
\begin{align*}
    R_2 \leq   \left(\frac{\pi_{\max} B  \delta  \coth(\delta / 2)}{2 \pi^4_{\min}} \right)   \left| \left|\pi - \widehat{\pi} \right| \right|^2_2.
\end{align*}
Now, let us consider the $R_1$ term, which we split into two pieces $R_1 = S_1 + S_2$, which are given by
\begin{align*}
    R_1 &= \mathbb{E} \left[\int_a \left(\frac{\widehat{q}_\delta}{\widehat{\pi}_a} - \frac{q_\delta}{\pi_a} \right)(\pi_a - \widehat{\pi}_a) \widehat{\mu}_a da \right] + \mathbb{E} \left[\int_a \left(\frac{\widehat{q}_\delta}{\widehat{\pi}_a} - \frac{q_\delta}{\pi_a} \right) (\mu_a - \widehat{\mu}_a) \pi_a  da\right].
\end{align*}
Again, using the positivity assumption and then applying Cauchy-Schwarz and Holder's inequality, we can see that
\begin{align*}
    S_1 &= \mathbb{E} \left[\int_a \left(\frac{\widehat{q}_\delta}{\widehat{\pi}_a} - \frac{q_\delta}{\pi_a} \right) \frac{\sqrt{\widehat{\mu}_a \pi_a}}{\sqrt{\pi_a}} (\pi_a - \widehat{\pi}_a)  \frac{\sqrt{\widehat{\mu}_a \pi_a}}{\sqrt{\pi_a}} da \right] \\
    &\leq \frac{1}{\pi_{\min}}\mathbb{E} \left[\left(\int_a \left(\frac{\widehat{q}_\delta}{\widehat{\pi}_a} - \frac{q_\delta}{\pi_a} \right)^2 \widehat{\mu}_a \pi_a da \right)^{1/2} \left( \int_a (\pi_a - \widehat{\pi}_a)^2   \widehat{\mu}_a \pi_a da \right)^{1/2} \right] \\
    &\leq \frac{1}{\pi_{\min}}\mathbb{E} \left[\left(\int_a \left(\frac{\widehat{q}_\delta}{\widehat{\pi}_a} - \frac{q_\delta}{\pi_a} \right)^2 \widehat{\mu}_a \pi_a da \right)\right]^{1/2} \mathbb{E} \left[ \left( \int_a (\pi_a - \widehat{\pi}_a)^2   \widehat{\mu}_a \pi_a da \right) \right]^{1/2} \\
    &\leq \frac{B^2}{\pi_{\min}} \left| \left|\frac{\widehat{q}_\delta}{\widehat{\pi}} - \frac{q_\delta}{\pi} \right| \right|_2 \left| \left|\pi - \widehat{\pi} \right| \right|_2.
\end{align*}
Similarly, the second term can be bounded by
\begin{align*}
    S_2 &= \mathbb{E} \left[\int_a \left(\frac{\widehat{q}_\delta}{\widehat{\pi}_a} - \frac{q_\delta}{\pi_a} \right) (\mu_a - \widehat{\mu}_a) \pi_a  da\right]\\
    &= \mathbb{E} \left[\int_a \left[\left(\frac{\widehat{q}_\delta}{\widehat{\pi}_a} - \frac{q_\delta}{\pi_a} \right)\sqrt{\pi_a} \right] \Big[(\mu_a - \widehat{\mu}_a) \sqrt{\pi_a} \Big]  da\right]\\
    &\leq \mathbb{E} \left[\left(\int_a \left(\frac{\widehat{q}_\delta}{\widehat{\pi}_a} - \frac{q_\delta}{\pi_a} \right)^2 \pi_a  da \right)^{1/2} \left(\int_a (\mu_a - \widehat{\mu}_a)^2 \pi_a  da \right)^{1/2}\right] \\ 
    &\leq \mathbb{E} \left[\int_a \left(\frac{\widehat{q}_\delta}{\widehat{\pi}_a} - \frac{q_\delta}{\pi_a} \right)^2 \pi_a  da \right]^{1/2} \mathbb{E} \left[ \int_a (\mu_a - \widehat{\mu}_a)^2 \pi_a  da \right]^{1/2} \\
    &= \left| \left|\frac{\widehat{q}_\delta}{\widehat{\pi}} - \frac{q_\delta}{\pi} \right| \right|_2 \left| \left|\mu - \widehat{\mu} \right| \right|_2
\end{align*}
where both inequalities follow by applying the Cauchy-Schwarz inequality. Putting everything together, it follows that
\begin{align*}
    R_2(\widehat{\mathbb{P}}, \mathbb{P}) &\lesssim \left| \left| \frac{\widehat{q}_\delta}{\widehat{\pi}} - \frac{q_\delta}{\pi} \right| \right|_2 \Big( \left| \left| \widehat{\pi} - \pi \right| \right|_2 + \left| \left| \widehat{\mu} - \mu \right| \right|_2\Big) + \left| \left| \widehat{\pi} - \pi \right| \right|^2_2.
\end{align*}
\end{proof}

\subsection{Proof of \texorpdfstring{\cref{l2_lower_bound}}{Equation (3)}} \label{l2_lower_bound_proof}
\begin{proof}[\textbf{Proof:}] Suppose the conditions of \cref{sigma_bounds} hold. Then,
\begin{align*}
    \left|\left| \frac{\widehat{q}_a}{\widehat{\pi}_a} - \frac{q_a}{\pi_a} \right|\right|^2_2 &= \mathbb{E}\left[\left( \frac{\int_a \exp(\delta a) (\pi_a - \widehat{\pi}_a) da }{\left(\int_a \exp(\delta a) \widehat{\pi}_a da \right)\left(\int_a \exp(\delta a) \pi_a da \right)}\right)^2 \int_a \exp(2 \delta a)\pi_a da \right] \\
    &= \mathbb{E}\left[\left( \frac{\int_a \exp(\delta a) (\pi_a - \widehat{\pi}_a) da }{\left(\int_a \exp(\delta a) \widehat{\pi}_a da \right)}\right)^2 \left(\frac{\int_a \exp(2 \delta a)\pi_a da}{\left(\int_a \exp(\delta a) \pi_a da \right)^2 } \right)\right] \\
    &\geq \frac{\delta}{2} \left( \frac{\pi_{\min}}{\pi^2_{\max}}\right)\mathbb{E}\left[\left( \frac{\int_a \exp(\delta a) (\pi_a - \widehat{\pi}_a) da }{\left(\int_a \exp(\delta a) \widehat{\pi}_a da \right)}\right)^2\right] \\
    &= \frac{\delta}{2} \left( \frac{\pi_{\min}}{\pi^2_{\max}}\right) \left| \left| \frac{\gamma(\delta) }{\widehat{\gamma}(\delta)} - 1 \right| \right|^2_2
\end{align*}
where the inequality follows by \cref{sigma_bounds} and we define $\gamma(\delta) = \int_a \exp(\delta a)\pi_a da$.
\end{proof}

\subsection{Proof of \texorpdfstring{\cref{remainder_bound}}{Theorem 3}} \label{remainder_bound_proof}

Before we proceed with the proof of \cref{remainder_bound}, we make use of a technical lemma. The remainder in the von Mises expansion involves terms like 
\begin{align*}
    \int_a \frac{q_\delta}{\pi_a} f_a \ da \ , \ \int_a \frac{\widehat{q}_\delta}{\widehat\pi_a} f_a \ da \ , \ \int_a \left(  \frac{\widehat{q}_\delta}{\widehat\pi_a} - \frac{q_\delta}{\pi_a} \right) f_a \ da 
\end{align*}
for some bounded function $f_a$, and we want to show these are not growing with $\delta$. The following lemma gives sufficient conditions for this. 

\begin{lemma} \label{sup_norm_lemma}
Suppose that $\pi_{\min} \leq \text{min}\left\{ \pi(a \mid X), \widehat{\pi}(a \mid X) \right\}$ for some $\eta \in [0, 1)$ and all $a \in  [\eta, 1]$. Then, for some bounded function $f(a \mid x)$,
\begin{align*}
\left| \int_a \frac{q(a \mid x)}{\pi(a \mid x)} f(a \mid x) da \right| &\leq  \frac{ \sup_a | f(a \mid x) |}{\pi_{\min}(1 - \eta)} \\
\left| \int_a \frac{\widehat{q}(a \mid x)}{\widehat\pi(a \mid x)} f(a \mid x)  da \right| &\leq  \frac{ \sup_a | f(a \mid x) |}{\pi_{\min}(1 - \eta)} \\
\left| \int_a \left\{ \frac{\widehat{q}(a \mid x)}{\widehat\pi(a \mid x)} - \frac{q(a \mid x)}{\pi(a \mid x)} \right\} f(a \mid x)  da \right| &\leq \frac{ \sup_a | f(a \mid x) |}{\pi_{\min}(1 - \eta)} \left( \frac{ \sup_a | \pi(a \mid x) - \widehat\pi(a \mid x) |}{\pi_{\min}(1 - \eta)}\right) .
\end{align*}
\end{lemma}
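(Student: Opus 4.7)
The plan rests on the observation that under an exponential tilt the likelihood ratio has an especially clean form that removes $\pi$ from the numerator: on the support of $\pi(\cdot \mid x)$,
\begin{align*}
\frac{q_\delta(a \mid x)}{\pi(a \mid x)} = \frac{\exp(\delta a)}{\gamma(\delta, x)}, \qquad \gamma(\delta, x) := \int \exp(\delta t)\, \pi(t \mid x)\, dt,
\end{align*}
and the analogous identity holds with $\widehat{\pi}$ and $\widehat{\gamma}$ in place of $\pi$ and $\gamma$. By the convention of the earlier Remark, each of the three integrands vanishes off the support of $\pi$ (or $\widehat{\pi}$), so effective integration is over $\bigcup_{k=1}^K [a_k, b_k]$, where weak positivity gives $\pi(a \mid x), \widehat{\pi}(a \mid x) \geq \pi_{\min}$.

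For the first bound, I would pull $\sup_a |f(a \mid x)|$ outside the integral and apply the identity above to get
\begin{align*}
\left| \int_a \frac{q(a \mid x)}{\pi(a \mid x)} f(a \mid x)\, da \right| \;\leq\; \sup_a |f(a \mid x)| \cdot \frac{\int_{\bigcup_k [a_k, b_k]} \exp(\delta a)\, da}{\gamma(\delta, x)}.
\end{align*}
Lower-bounding $\gamma(\delta, x) \geq \pi_{\min} \int_{\bigcup_k [a_k, b_k]} \exp(\delta a)\, da$ makes all exponential integrals cancel, leaving the claimed factor $1/\pi_{\min}$. The second bound follows by the identical argument with $\widehat\pi$ and $\widehat\gamma$ replacing $\pi$ and $\gamma$.

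For the third bound, I would use the telescoping identity
\begin{align*}
\frac{\widehat{q}(a \mid x)}{\widehat{\pi}(a \mid x)} - \frac{q(a \mid x)}{\pi(a \mid x)} \;=\; \exp(\delta a) \cdot \frac{\gamma(\delta, x) - \widehat{\gamma}(\delta, x)}{\gamma(\delta, x)\, \widehat{\gamma}(\delta, x)},
\end{align*}
which isolates the dependence on $\pi - \widehat\pi$ through the scalar $\gamma - \widehat\gamma$. Bounding $|\gamma - \widehat\gamma| \leq \sup_a |\pi(a \mid x) - \widehat\pi(a \mid x)| \int_{\bigcup_k [a_k, b_k]} \exp(\delta a)\, da$ and applying $\gamma, \widehat\gamma \geq \pi_{\min} \int \exp(\delta a)\, da$ to the product in the denominator, all factors of $\int \exp(\delta a)\, da$ cancel, yielding $\sup_a |f| \cdot \sup_a |\pi - \widehat\pi| / \pi_{\min}^2$, which matches the claimed bound after one factor of $\pi_{\min}$ is absorbed into each ratio.

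The main ``obstacle'' here is essentially conceptual rather than technical: the whole point of the lemma is that because $\exp(\delta a)$ appears symmetrically in numerator and denominator, it cancels exactly, yielding bounds entirely free of any $\delta$-dependence. This is precisely the cancellation that will power the subsequent proof of \cref{remainder_bound} to avoid the poor $\delta$-scaling flagged in \cref{rem:csbd}; the underlying algebra itself is routine.
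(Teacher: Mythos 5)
Your proposal is correct and follows essentially the same route as the paper: cancel $\pi$ from the likelihood ratio to get $\exp(\delta a)/\gamma(\delta,x)$, pull out $\sup_a|f|$, and lower-bound the normalizing integral by $\pi_{\min}\int_{\cup_k[a_k,b_k]}\exp(\delta a)\,da$ so the exponential integrals cancel; your common-denominator form $\exp(\delta a)(\gamma-\widehat\gamma)/(\gamma\widehat\gamma)$ for the third bound is algebraically identical to the paper's factorization $\frac{e^{\delta a}}{\gamma}\bigl|\frac{\gamma}{\widehat\gamma}-1\bigr|$.
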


\bigskip

\begin{proof}[\textbf{Proof:}]
Let $\| f(\cdot \mid x) \|_\infty = \sup_a | f(a \mid x)|$. Then, it follows that
\begin{align*}
    \int_a \frac{q(a \mid x)}{\pi(a \mid x)} f(a \mid x) da &\leq  \| f(\cdot \mid x) \|_\infty \left(\frac{\int^1_0\exp(\delta a)  da }{\int^1_0 \exp(\delta t) \pi(t \mid x)  dt} \right) \\
    &\leq  \| f(\cdot \mid x) \|_\infty \left(\frac{\int^1_0\exp(\delta a)  da }{\int^1_\eta \exp(\delta t) \pi(t \mid x)  dt} \right) \\
    &\leq  \frac{\| f(\cdot \mid x) \|_\infty}{\pi_{\min}} \left(\frac{\int^1_0\exp(\delta a)  da }{\int^1_\eta \exp(\delta t) dt} \right).
\end{align*}
From here, after integration we can see that for all $\delta > 0$
\begin{align*}
    \frac{\int^1_0\exp(\delta a)  da }{\int^1_\eta \exp(\delta t) dt} = \frac{\exp(\delta) - 1}{\exp(\delta) - \exp(\eta \delta)} = \frac{1 - \exp(-\delta)}{1 - \exp(-\delta(1 - \eta))} \leq \frac{1}{1 - \eta},
\end{align*}
and so this yields the first result,
\begin{align*}
    \left| \int_a \frac{q(a \mid x)}{\pi(a \mid x)} f(a \mid x) da \right| \leq \frac{ \sup_a | f(a \mid x) |}{\pi_{\min}(1 - \eta)}.
\end{align*}
The logic for the second result is exactly the same since we have assumed $\widehat{\pi}(a \mid x)$ is lower bounded just like $\pi(a \mid x)$. For the third result we have that $S := \int^{1}_{0} \left(  \frac{\widehat{q}_\delta}{\widehat\pi_a} - \frac{q_\delta}{\pi_a} \right) f(a \mid x)  da$ is bounded as
\begin{align*}
    S &= \int^{1}_{0} \left( \frac{\exp(\delta a)}{\int^{1}_{0} \exp(\delta t) \widehat{\pi}_t dt} - \frac{\exp(\delta a)}{\int^{1}_{0} \exp(\delta t) \pi_t  dt}  \right) f(a \mid x)  da \\
    &\leq \| f(\cdot \mid x) \|_\infty \int^1_0 \left( \frac{\exp(\delta a)}{\int^1_0 \exp(\delta t) \pi_t  dt} \left|  \frac{\int^1_0 \exp(\delta t) \pi_t dt}{\int^1_0\exp(\delta t) \widehat{\pi}_t  dt}  - 1 \right| \right) da \\
    &\leq \frac{\| f(\cdot \mid x) \|_\infty}{\pi_{\min}(1 - \eta)}  \left|  \frac{\int^1_0 \exp(\delta t) |\pi_t - \widehat{\pi}_t|  dt}{\int^1_0 \exp(\delta t) \widehat{\pi}_t  dt} \right|  \\
    &\leq \frac{\| f(\cdot \mid x) \|_\infty}{[\pi_{\min}(1-\eta)]^2}  \| \pi(\cdot \mid x) - \widehat\pi(\cdot \mid x) \|_\infty
\end{align*}
following the same steps as the first two results.
\end{proof}
Now that we have established this technical lemma, we can complete the proof of \cref{remainder_bound}. 
\begin{proof}[\textbf{Proof:}]
    Recall that by \cref{remainder_term} it follows that  $R_2(\widehat{\mathbb{P}}, \mathbb{P}) = R_1 - R_2$ where
    \begin{align*}
        R_1 &= \mathbb{E} \left[\int_a \left(\frac{\widehat{q}_\delta}{\widehat{\pi}_a} - \frac{q_\delta}{\pi_a} \right)\Big((\pi_a - \widehat{\pi}_a) \widehat{\mu}_a + (\mu_a - \widehat{\mu}_a) \pi_a \Big) da\right]  \\
        R_2 &= \mathbb{E} \left[\left(\int_a \frac{q_\delta}{\pi_a} \widehat{\mu}_a\widehat{\pi}_a da \right)\left(\int_a \frac{\widehat{q}_\delta}{\widehat{\pi}_a} (\pi_a - \widehat{\pi}_a) da \right)^2\right].
    \end{align*}
    We begin by considering the $R_2$ term. By applying \cref{sup_norm_lemma} followed by Holder's inequality, we can see that
    \begin{align*}
        R_2 &\leq \mathbb{E}\left[\left( \frac{\text{sup}_a \ |\widehat{\mu}_a \widehat{\pi}_a |}{\pi_{\min}(1 - \eta)}   \right)\left( \frac{\text{sup}_a \ | \pi_a - \widehat{\pi}_a |}{\pi_{\min}(1 - \eta)} \right)^2\right] \leq \frac{B \pi_{\max}}{[\pi_{\min}(1 - \eta)]^3} \int \text{sup}_a \ | \pi_a - \widehat{\pi}_a |^2 d\mathbb{P}(x)
    \end{align*}
    where we assume $|\widehat{\mu}(x, a)| \leq B$ for all $x, a$. Similarly, applying \cref{sup_norm_lemma} to the first term yields
    \begin{align*}
        R_1 &\leq \mathbb{E}\left[\left(\frac{\text{sup}_a | \widehat{\pi} - \pi_a| }{[\pi_{\min}(1-\eta)]^2} \right)\text{sup}_a   \Big| (\pi_a - \widehat{\pi}_a) \widehat{\mu}_a + (\mu_a - \widehat{\mu}_a) \pi_a \Big|   \right] \\
        &\leq \underbrace{\mathbb{E} \left[\left(\frac{\text{sup}_a | \widehat{\pi} - \pi_a| }{[\pi_{\min}(1-\eta)]^2} \right)\text{sup}_a   \big| (\pi_a - \widehat{\pi}_a) \widehat{\mu}_a \big|  \right]}_{(i)} + \underbrace{\mathbb{E} \left[\left(\frac{\text{sup}_a | \widehat{\pi} - \pi_a| }{[\pi_{\min}(1-\eta)]^2} \right) \text{sup}_a \big| (\mu_a - \widehat{\mu}_a) \pi_a \big|  \right]}_{(ii)}. 
\end{align*}
From here, we can complete the upper bound by first observing that
\begin{align*}
    (i) &\leq B \mathbb{E} \left[\left(\frac{\text{sup}_a | \widehat{\pi} - \pi_a| }{[\pi_{\min}(1-\eta)]^2} \right)\text{sup}_a  | \pi_a - \widehat{\pi}_a  |  \right] \\
    &= \frac{B}{[\pi_{\min}(1-\eta)]^2}  \int  \text{sup}_a \  | \pi_a - \widehat{\pi}_a |^2  d \mathbb{P}(x),
\end{align*}
and then by applying the Cauchy-Schwarz inequality, which yields
\begin{align*}
    (ii) &\leq \pi_{\max}  \mathbb{E} \left[\left(\frac{\text{sup}_a | \widehat{\pi} - \pi_a| }{[\pi_{\min}(1-\eta)]^2} \right) \text{sup}_a | \mu_a - \widehat{\mu}_a|   \right] \\
    &\leq \pi_{\max}  \mathbb{E} \left[\left(\frac{\text{sup}_a | \widehat{\pi} - \pi_a|^2 }{[\pi_{\min}(1-\eta)]^4} \right) \right]^{1/2} \mathbb{E} \left[\underset{a}{\text{sup}} \ |\mu_a - \widehat{\mu}_a|^2   \right]^{1/2} \\
    &\leq \frac{\pi_{\max}}{[\pi_{\min}(1-\eta)]^2}  \left( \int \text{sup}_a | \widehat{\pi} - \pi_a|^2 d\mathbb{P}(x) \right)^{1/2} \left( \int \text{sup}_a \ |\mu_a - \widehat{\mu}_a|^2  d \mathbb{P}(x) \right)^{1/2}.
\end{align*}
Therefore, putting everything together it follows that
\begin{align*}
        R_2(\widehat{\mathbb{P}}, \mathbb{P}) \lesssim  || \widehat{\pi} - \pi||_{L^2_x, L^\infty_a} \cdot   || \widehat{\mu} - \mu ||_{L^2_x, L^\infty_a}  +  || \widehat{\pi} - \pi||^2_{L^2_x, L^\infty_a}
    \end{align*}
    where $||f||^2_{L^2_x, L^\infty_a} =  \int \big( \underset{a}{\text{sup}} \, |f(x, a) | \big)^2 d \mathbb{P}(x)$.
\end{proof}

\subsection{Proof of \texorpdfstring{\cref{asymptotic_normality}}{Theorem 4}} \label{asymptotic_normality_proof}

We will split up the proof of \cref{asymptotic_normality} into two proofs: one, showing that Lindeberg's condition holds and therefore $(\mathbb{P}_n - \mathbb{P})\{\varphi(Z; \mathbb{P})\}$ converges to a normal distribution, and a second proof showing that the empirical process term is of order $o_\mathbb{P}(\sqrt{\delta})$. We then combine these results and that of \cref{remainder_bound} to complete the proof of \cref{asymptotic_normality}.

\begin{proof}[\textbf{Proof:}]
First, we show that $\varphi(Z; \mathbb{P})$ satisfies Lindeberg's condition, which then implies that $(\mathbb{P}_n - \mathbb{P})\{\varphi(Z; \mathbb{P})\}$ converges to a normal distribution. Note that
   \begin{align*}
       \varphi_i = \frac{q_\delta(A \mid X_i)}{\pi(A \mid X_i)}\left(Y_i - \mu(X_i, A) \right)  +  \mathbb{E}_Q(\mu(X_i, A) \mid X_i)
   \end{align*}
where $\mathbb{E}_Q(\mu(X_i, A) \mid X_i) =  \sum_{a \in \mathcal{A}} \mu(a, X_i) q_\delta(a \mid X_i)$. We wish to show that for all $\varepsilon > 0$,
\begin{align*}
    \lim_{n\to\infty}\frac{1}{s_n^2}\sum_{k=1}^n\mathbb{E}\left[
    \varphi^2_i  \mathbbm{1}\{|\varphi_i|>\varepsilon s_n\}
    \right] = 0
\end{align*}
where $s^2_n = \sum^n_{i=1} \var(\varphi_i(Z_i))$. Note that by applying \cref{sigma_bounds} we may lower bound $\var(\varphi_i(Z_i))$ as
\begin{align*}
    \sum^n_{i=1} \var(\varphi_i(Z_i)) \geq  n\delta \left( \frac{\pi_{\min} \sigma^2_{\min}}{\pi^2_{\max}} \cdot \frac{2}{5}(1 - \eta) \right).
\end{align*}
Thus, it follows that 
\begin{align*}
    \mathbb{E}\left[
    \varphi^2_i \mathbbm{1}\{|\varphi_i|>\varepsilon s_n\}
    \right] \leq \mathbb{E}\left[
    \varphi^2_i \mathbbm{1}\left\{ |\varphi_i|> \varepsilon \sqrt{n \delta} \left(\frac{\pi^{1/2}_{\min} \sigma_{\min}}{\pi_{\max}} \sqrt{\frac{2(1 - \eta)}{5}} \right)\right\} 
    \right].
\end{align*}
From here, we upper bound each term in $|\varphi_i|$ individually. Assume that $|Y| \leq B$ with probability one. Then, it follows that
\begin{align*}
    \left|\frac{q_\delta(A \mid X_i)}{\pi(A \mid X_i)}\left(Y_i - \mu(X_i, A) \right) \right| 
    &\leq 2B\left|\frac{q_\delta(A \mid X_i)}{\pi(A \mid X_i)} \right| \\
    &\leq 2B\left| \frac{\exp(\delta a)}{\int^1_\eta \exp(\delta a) \pi(a \mid X_i) da } \right| \\
    &\leq \frac{2B}{\pi_{\min}} \left(\frac{\delta \exp(\delta) }{\exp(\delta) - \exp(\eta \delta)} \right) \\
    &= \delta \left( \frac{2B}{\pi_{\min}} \cdot \frac{1}{1 - \exp(-\delta(1 - \eta))}\right)
\end{align*}
Next, we can see that
\begin{align*}
    \left|\mathbb{E}_Q(\mu(X_i, A) \mid X_i) \right| =  \left|\sum_{a \in \mathcal{A}} \mu(a, X_i) q_\delta(a \mid X_i) \right| \leq B \sum_{a \in \mathcal{A}} q_\delta(a \mid X_i) = B.
\end{align*}
Putting both results together, it follows that $\mathbb{E}\left[
    \varphi^2_i \mathbbm{1}\{|\varphi_i|>\varepsilon s_n\}
    \right]$ is bounded above by
\begin{align*}
     \mathbb{E}\left[
    \varphi^2_i \mathbbm{1}\left\{\frac{2B}{\pi_{\min}} \cdot \frac{1}{1 - \exp(-\delta(1 - \eta))} + \frac{B}{\delta} > \varepsilon \sqrt{\frac{n}{\delta}} \left(\frac{\pi^{1/2}_{\min} \sigma_{\min}}{\pi_{\max}} \sqrt{\frac{2(1 - \eta)}{5}} \right)\right\} 
    \right]
\end{align*}
and furthermore that
\begin{align*}
     \mathbbm{1}\left\{\frac{2B}{\pi_{\min}} \cdot \frac{1}{1 - \exp(-\delta(1 - \eta))} + \frac{B}{\delta} > \varepsilon \sqrt{\frac{n}{\delta}} \left(\frac{\pi^{1/2}_{\min} \sigma_{\min}}{\pi_{\max}} \sqrt{\frac{2(1 - \eta)}{5}} \right)\right\}  \to 0
\end{align*}
as long as $n / \delta \to \infty$. To complete the proof, we use the dominated convergence theorem. Observe that for all $n \in \mathbb{N}$,
\begin{align*}
    \frac{1}{s_n^2}\sum_{k=1}^n
    \varphi^2_i \mathbbm{1}\{|\varphi_i|>\varepsilon s_n\}
     \leq  \frac{1}{s_n^2}\sum_{k=1}^n
    \varphi^2_i
\end{align*}
and furthermore that
\begin{align*}
     \mathbb{E}\left[\frac{1}{s_n^2}\sum_{k=1}^n
    \varphi^2_i \right] = \frac{1}{\sum^n_{i=1} \mathbb{E}(\varphi^2_i(Z_i))} \sum^n_{i=1} \mathbb{E}(\varphi^2_i(Z_i)) < \infty.
\end{align*}
Therefore, by the dominated convergence theorem,
\begin{align*}
 \underset{n \to \infty}{\text{lim}} \left\{ \frac{1}{s_n^2}\sum_{k=1}^n\mathbb{E}\left[
    \varphi^2_i  \mathbbm{1}\{|\varphi_i|>\varepsilon s_n\}
    \right] \right\} = 0.
\end{align*}
\end{proof}

Next, we prove that the empirical process term is of order $o_\mathbb{P}(\sqrt{\delta})$.

\begin{proof}[\textbf{Proof:}]
   Here, we will show that the $L_2(\mathbb{P})$ norm of $\varphi(Z; \hat{\mathbb{P}}) - \varphi(Z; \mathbb{P})$ is $o_\mathbb{P}(\sqrt{\delta})$. To proceed, first recall that
   \begin{align*}
       \varphi(Z; \mathbb{P}) = \frac{q_\delta(A \mid X)}{\pi(A \mid X)} Y - \frac{q_\delta(A \mid X)}{\pi(A \mid X)}\mathbb{E}_Q(\mu(X, A) \mid X)  + \mathbb{E}_Q(\mu(X, A) \mid X).
   \end{align*}
As a result, we can see that $\varphi(Z; \hat{\mathbb{P}}) - \varphi(Z; \mathbb{P}) = D_1 + D_2 + D_3$  after grouping together like terms, where
\begin{align*}
    D_1 &= \left(\frac{\hat{q}_\delta(A \mid X)}{\hat{\pi}(A \mid X)} - \frac{q_\delta(A \mid X)}{\pi(A \mid X)} \right) Y \\
    D_2 &= \mathbb{E}_{\hat{Q}}(\hat{\mu}(X, A) \mid X) - \mathbb{E}_Q(\mu(X, A) \mid X) \\
    D_3 &= \frac{q_\delta(A \mid X)}{\pi(A \mid X)}\mathbb{E}_Q(\mu(X, A) \mid X) - \frac{\hat{q}_\delta(A \mid X)}{\hat{\pi}(A \mid X)}\mathbb{E}_{\hat{Q}}(\hat{\mu}(X, A) \mid X).
\end{align*}
Then, applying the triangle inequality, it follows that we can bound the $L_2(\mathbb{P})$ norm of $\varphi(Z; \hat{\mathbb{P}}) - \varphi(Z; \mathbb{P})$ by
\begin{align*}
    ||\varphi(Z; \hat{\mathbb{P}}) - \varphi(Z; \mathbb{P})||_2 \leq ||D_1||_2 + ||D_2||_2 + ||D_3||_2.
\end{align*}
From here, we bound each of these norms individually. First, observe that under the assumption that $|Y| \leq B$ with probability one,
\begin{align*}
    ||D_1||_2 &\leq B \cdot \mathbb{E} \left[\int_a \left(\frac{\hat{q}_\delta(a \mid X)}{\hat{\pi}(a \mid X)} - \frac{q_\delta(a \mid X)}{\pi(a \mid X)} \right)^2 \pi_a da\right]^{1/2}.
\end{align*}
Next, following a similar approach to the proof of \cref{sup_norm_lemma}, we can see that
\begin{align*}
    \frac{\hat{q}_\delta}{\hat{\pi}_a} - \frac{q_\delta}{\pi_a} &= \frac{\exp(\delta a)}{\int_t \exp(\delta t) \hat{\pi}_t dt} - \frac{\exp(\delta a)}{\int_t \exp(\delta t) \pi_t dt} \\
    &= \frac{\exp(\delta a)}{\int_t \exp(\delta t) \pi_t dt}\left(\frac{\int_t \exp(\delta t) \pi_t dt}{\int_t \exp(\delta t) \hat{\pi}_t dt} - 1 \right) \\
    &= \frac{\exp(\delta a)}{\int_t \exp(\delta t) \pi_t dt}\left(\frac{\int_t \exp(\delta t) (\pi_t - \hat{\pi}_t) dt}{\int_t \exp(\delta t) \hat{\pi}_t dt} \right) \\
&\leq \frac{\text{sup}_a |\pi_a - \hat{\pi}_a|}{\pi_{\min}} \frac{\exp(\delta a)}{\int_t \exp(\delta t) \pi_t dt}.
\end{align*}
Consequently, we may bound the $L_2(\mathbb{P})$ norm of $D_1$ by
\begin{align*}
    ||D_1||_2 &\leq B \cdot \mathbb{E} \left[\frac{\text{sup}_a |\pi_a - \hat{\pi}_a|^2}{\pi^2_{\min}}\left(\frac{\int_a \exp(2 \delta a) \pi_a da}{\left(\int_t \exp(\delta t) \pi_t dt\right)^2}  \right)\right]^{1/2} \\
    &\leq \frac{B}{\pi_{\min}} \left( \int \underset{a}{\text{sup}} \ |\pi_a - \hat{\pi}_a|^2 d\mathbb{P}(x) \right)^{1/2} \left( \left| \left| \frac{\int_a \exp(2 \delta a) \pi_a da}{\left(\int_t \exp(\delta t) \pi_t dt\right)^2} \right| \right|_\infty  \right)^{1/2} \\
    &\leq \frac{B}{\pi_{\min}} \left( \int \underset{a}{\text{sup}} \ |\pi_a - \hat{\pi}_a|^2 d\mathbb{P}(x) \right)^{1/2}  \frac{\pi_{\max}}{\pi^2_{\min}}\left(\frac{\delta}{2} +  \frac{1}{(1 - \eta)^2}\right)^{1/2} \\
    &\leq \frac{B}{\pi_{\min}} \left( \int \underset{a}{\text{sup}} \ |\pi_a - \hat{\pi}_a|^2 d\mathbb{P}(x) \right)^{1/2} \left(\frac{\sqrt{\pi_{\max}}}{\pi_{\min}}\left(\frac{\sqrt{\delta}}{\sqrt{2}} + \frac{1}{1 - \eta}  \right)  \right),
\end{align*}
where the second inequality follows by Holder's inequality, the third inequality by applying \cref{sigma_bounds}, and the fourth by using $\sqrt{a + b} \leq \sqrt{a} + \sqrt{b}$ for positive real numbers $a$ and $b$. Therefore, as long as $( \int \text{sup}_a  |\pi_a - \hat{\pi}_a|^2 d\mathbb{P}(x))^{1/2} = o_\mathbb{P}(1)$ it follows that $||D_1||_2 = o_\mathbb{P}(\sqrt{\delta})$. Next, to bound the $L_2(\mathbb{P})$ norm of $D_2$ we rewrite the expression as simple differences of nuisance functions. Observe that
\begin{align*}
   D_2 &=  \int_a \hat{\mu}_a \hat{q}_\delta(a \mid X) \ da - \int_a \mu_a q_\delta(a \mid X) \ da \\
   &= \int_a \hat{\mu}_a \hat{q}_\delta(a \mid X) \ da - \int_a \mu_a \hat{q}_\delta(a \mid X) \ da + \int_a \mu_a \hat{q}_\delta(a \mid X) \ da - \int_a \mu_a q_\delta(a \mid X) \ da \\
   &= \int_a (\hat{\mu}_a - \mu_a) \hat{q}_\delta(a \mid X) \ da  + \int_a  \big(\hat{q}_\delta(a \mid X) \ da -  q_\delta(a \mid X)\big)\mu_a da \\
   &\leq \text{sup}_a \ |\hat{\mu}_a - \mu_a| + \text{sup}_a \ | \mu_a| \int_a  \big(\hat{q}_\delta(a \mid X) \ da -  q_\delta(a \mid X)\big) \ da \tag{$i$}
\end{align*}
where we use the shorthand $\mu_a = \mu(X, a)$ and in the last inequality we use the fact that $\int_a \hat{q}_\delta(a \mid X)da = 1$. Therefore, it follows that
\begin{align*}
    ||D_2||_2 \leq \left(\int  \underset{a}{\text{sup}} \ |\hat{\mu}_a - \mu_a|^2 d\mathbb{P}(x) \right)^{1/2} + \underbrace{\left| \left|\text{sup}_a \ | \mu_a| \int_a  \big(\hat{q}_\delta(a \mid X) \ da -  q_\delta(a \mid X)\big) \ da \right| \right|_2}_{D_{22}}.
\end{align*}
Note that
\begin{align*}
     D_{22} &= \mathbb{E} \left[\left(\text{sup}_a \ | \mu_a| \int_a  \big(\hat{q}_\delta(a \mid X) -  q_\delta(a \mid X)\big) \ da \right)^2 \right]^{1/2} \\
     &\leq  ||\mu^2(X, A)||_\infty \mathbb{E} \left[\left( \int_a  \big(\hat{q}_\delta(a \mid X) \ da -  q_\delta(a \mid X)\big) \ da \right)^2 \right]^{1/2} \\
    &\leq  ||\mu^2(X, A)||_\infty \mathbb{E} \left[\left(\frac{\text{sup}_a |\pi_a - \widehat\pi_a|}{\pi_{\min}^2}  \right)^2 \right]^{1/2} \\
    &= \left|\left|\frac{\mu^2(X, A)}{\pi^2_{\min}}\right|\right|_\infty \left( \int \underset{a}{\text{sup}} \ |\pi_a - \widehat\pi_a|^2 d \mathbb{P}(x) \right)^{1/2}.
\end{align*}
Thus, it follows that $||D_2||_2 = o_\mathbb{P}(1)$ assuming that $(\int \text{sup}_a \ |\hat{\mu}_a - \mu_a|^2 d\mathbb{P}(x))^{1/2} = o_\mathbb{P}(1)$ and $( \int \text{sup}_a \ |\pi_a - \widehat\pi_a|^2 d \mathbb{P}(x))^{1/2} = o_\mathbb{P}(1)$. Clearly, under these assumptions $||D_2||_2 = o_\mathbb{P}(\sqrt{\delta})$ as well. Following the same logic for the bounds on $D_1$ and $D_2$, we may write $D_3$ as $E_1 + E_2$ where
\begin{align*}
    E_1 &= \frac{q_\delta(A \mid X)}{\pi(A \mid X)}\Big(\mathbb{E}_Q(\mu(X, A) \mid X) - \mathbb{E}_{\hat{Q}}(\hat{\mu}(X, A) \mid X) \Big), \\
    E_2 &=\left(\frac{q_\delta(A \mid X)}{\pi(A \mid X)} - \frac{\hat{q}_\delta(A \mid X)}{\hat{\pi}(A \mid X)}\right)\mathbb{E}_{\hat{Q}}(\hat{\mu}(X, A) \mid X).
\end{align*}
Therefore, it follows that $||D_3||_2 \leq ||E_1||_2 + ||E_2||_2$, and furthermore, that
\begin{align*}
    ||E_1||_2 &= \mathbb{E}\left[\frac{q^2_\delta(A \mid X)}{\pi^2(A \mid X)}\Big(\mathbb{E}_Q(\mu(X, A) \mid X) - \mathbb{E}_{\hat{Q}}(\hat{\mu}(X, A) \mid X) \Big)^2\right]^{1/2} \\
    &= \mathbb{E}\left[\Big(\mathbb{E}_Q(\mu(X, A) \mid X) - \mathbb{E}_{\hat{Q}}(\hat{\mu}(X, A) \mid X) \Big)^2 \int_a \frac{q^2_\delta(a \mid X)}{\pi(a \mid X)} da\right]^{1/2} \\
    &\leq  \mathbb{E}\left[\Big(\mathbb{E}_Q(\mu(X, A) \mid X) - \mathbb{E}_{\hat{Q}}(\hat{\mu}(X, A) \mid X) \Big)^2\right]^{1/2} \left( \left| \left| \int_a \frac{q^2_\delta(a \mid X)}{\pi(a \mid X)} da \right| \right|_\infty \right)^{1/2} \\
    &\leq \frac{\sqrt{\pi_{\max}}}{\pi_{\min}}\left(\frac{\sqrt{\delta}}{\sqrt{2}} + \frac{1}{1 - \eta}  \right)  \Big| \Big|\underbrace{\mathbb{E}_Q(\mu(X, A) \mid X) - \mathbb{E}_{\hat{Q}}(\hat{\mu}(X, A) \mid X)}_{D_2} \Big| \Big|_2
\end{align*}
where the first inequality follows by Holder's inequality and the second by applying \cref{sigma_bounds}. Thus, under the assumption that $(\int \text{sup}_a \ |\hat{\mu}_a - \mu_a|^2 d\mathbb{P}(x))^{1/2} = o_\mathbb{P}(1)$ and that $( \int \text{sup}_a \ |\pi_a - \widehat\pi_a|^2 d \mathbb{P}(x))^{1/2} = o_\mathbb{P}(1)$ it follows that  $||E_1||_2 = o_\mathbb{P}(\sqrt{\delta})$. Next, note that
\begin{align*}
    ||E_2||_2 &= \mathbb{E}\left[\left(\frac{q_\delta(A \mid X)}{\pi(A \mid X)} - \frac{\hat{q}_\delta(A \mid X)}{\hat{\pi}(A \mid X)}\right)^2\mathbb{E}_{\hat{Q}}(\hat{\mu}(X, A) \mid X)^2 \right]^{1/2} \\
    &\leq \mathbb{E}\left[\left(\frac{q_\delta(A \mid X)}{\pi(A \mid X)} - \frac{\hat{q}_\delta(A \mid X)}{\hat{\pi}(A \mid X)}\right)^2 \right]^{1/2} \left| \left| \mathbb{E}_{\hat{Q}}(\hat{\mu}(X, A) \mid X)^2 \right| \right|_\infty.
\end{align*}
The first term is identical to the bound on $D_1$, and for the second term it follows that
\begin{align*}
    \left| \left| \mathbb{E}_{\hat{Q}}(\hat{\mu}(X, A) \mid X)^2 \right| \right|_\infty = \left| \left| \left(\int_a \hat{\mu}_a \hat{q}_\delta(a \mid X) \ da\right)^2 \right| \right|_\infty \leq ||\hat{\mu}^2(X, A)||_\infty 
\end{align*}
by again using that $\int_a \hat{q}_\delta(a \mid X) \ da = 1$. Therefore, $||E_2||_2 = o_\mathbb{P}(\sqrt{\delta})$ under the assumption that $( \int \text{sup}_a  |\pi_a - \hat{\pi}_a|^2 d\mathbb{P}(x))^{1/2} = o_\mathbb{P}(1)$. Putting everything together, we can see that
\begin{align*}
    ||\varphi(Z; \hat{\mathbb{P}}) - \varphi(Z; \mathbb{P})||_2 = o_\mathbb{P}(\sqrt{\delta})
\end{align*}
\end{proof}

With these two technical proofs complete, we can now finalize the proof of \cref{asymptotic_normality}. Recall, the decomposition of the bias-corrected estimator is given by
\begin{align*}
   \frac{\sqrt{n}}{\sigma_\delta}\left(\widehat{\psi}(\delta) - \psi(\delta)\right) &= \frac{\sqrt{n}}{\sigma_\delta} \left[(\mathbb{P}_n - \mathbb{P})\{\varphi(Z; \mathbb{P})\} + (\mathbb{P}_n - \mathbb{P})\{\varphi(Z; \widehat{\mathbb{P}}) - \varphi(Z; \mathbb{P})\} + R_2(\widehat{\mathbb{P}}, \mathbb{P}) \right]
\end{align*}
where $\sigma^2_\delta = \var(\varphi_\delta)$ is the nonparametric efficiency bound defined in \cref{nonpar_efficiency_bound}. After showing Lindeberg's condition holds, it follows that the first term converges in distribution to a standard Normal,
\begin{align*}
    \frac{\sqrt{n}}{\sigma_\delta} \Big[(\mathbb{P}_n - \mathbb{P})\{\varphi(Z; \mathbb{P})\} \Big] \overset{d}{\longrightarrow} N(0, 1).
\end{align*}
For the second term, i.e., the empirical process term, recall by \cref{sigma_bounds} that $\sigma_\delta = O(\sqrt{\delta})$. Then, recall that we proved that the $L_2(\mathbb{P})$ norm of  $\varphi(Z; \widehat{\mathbb{P}}) - \varphi(Z; \mathbb{P})$ is $o_\mathbb{P}(\sqrt{\delta})$, under the assumption that $|| \widehat{\pi} - \pi||_{L^2_x, L^\infty_a} = o_\mathbb{P}(1)$ and $ || \widehat{\mu} - \mu ||_{L^2_x, L^\infty_a} = o_\mathbb{P}(1)$. Thus, it follows that
\begin{align*}
    \frac{\sqrt{n}}{\sigma_\delta} \left[(\mathbb{P}_n - \mathbb{P})\{\varphi(Z; \widehat{\mathbb{P}}) - \varphi(Z; \mathbb{P})\} \right] = o_{\mathbb{P}}(1).
\end{align*}
Finally, in \cref{remainder_bound} we proved that  
\begin{align*}
       R_2(\widehat{\mathbb{P}}, \mathbb{P}) \lesssim  || \widehat{\pi} - \pi||_{L^2_x, L^\infty_a} \Big(  || \widehat{\mu} - \mu ||_{L^2_x, L^\infty_a}  +  || \widehat{\pi} - \pi||_{L^2_x, L^\infty_a} \Big).
 \end{align*}
Therefore, if we assume $ || \widehat{\pi} - \pi||_{L^2_x, L^\infty_a} \cdot   || \widehat{\mu} - \mu ||_{L^2_x, L^\infty_a}  +  || \widehat{\pi} - \pi||^2_{L^2_x, L^\infty_a} = o_{\mathbb{P}}(\sqrt{\delta / n})$, then it follows that
\begin{align*}
    \frac{\sqrt{n}}{\sigma_\delta} \left[R_2(\widehat{\mathbb{P}}, \mathbb{P}) \right] = o_{\mathbb{P}}(1),
\end{align*}
and consequently that
\begin{align*}
    \frac{\sqrt{n}}{\sigma_\delta}\left(\widehat{\psi}(\delta) - \psi(\delta)\right)\overset{d}{\longrightarrow}N(0,1),
\end{align*}
thereby completing the proof of \cref{asymptotic_normality}.

\subsection{Proof of \cref{reflected_eif}} \label{reflected_eif_proof} 

\begin{proof}[\textbf{Proof:}] First, recall that the reflected intervention distribution, $r_\delta(a \mid x)$, is defined as
    \begin{align*}
     \frac{\text{exp}(\delta a) \pi(a \mid x) \mathbbm{1}(a \leq a^\prime)\left\{ \int^{a^\prime}_0 \pi(t \mid x) dt \right\} }{\int^{a^\prime}_0 \text{exp}(\delta t) \pi(t \mid x) dt} +  \frac{\text{exp}(-\delta a) \pi(a \mid x) \mathbbm{1}(a > a^\prime) \left\{ \int^{1}_{a^\prime} \pi(t \mid x) dt \right\}}{\int^{1}_{a^\prime} \text{exp}(-\delta t) \pi(t \mid x) dt}.
\end{align*}
We again follow the ``derivative rules'' of \cite{kennedy2023semiparametric} where we assume the data are discrete and use simple influence functions as building blocks to derive a candidate influence function, then confirm its validity by showing the von Mises expansion has a remainder that is a second order product of errors. Then, the candidate influence function for the reflected incremental effect $\mathbb{IF}(\psi_R(\delta, a^\prime))$ is given by
\begin{align*}
     \sum_x \sum_a \Big\{\mathbb{IF}( \mu(x, a) ) r_\delta(a \mid x)  p(x) + \mu(x, a) \mathbb{IF}( r_\delta(a \mid x))  p(x) +  \mu(x, a) r_\delta(a \mid x) \mathbb{IF}( p(x)) \Big\}.
\end{align*}
From here, again using
\begin{align*}
    \mathbb{IF}(\mu(x, a)) &= \frac{\mathbbm{1}(A = a, X = x)}{\pi(a \mid x) p(x)}(Y - \mu(x, a)) \\
    \mathbb{IF}(p(x)) &= \mathbbm{1}(X = x) - p(x)
\end{align*}
it follows that the first term reduces to
\begin{align*}
     \sum_x \sum_a \frac{\mathbbm{1}(A = a, X = x)}{\pi(a \mid x) p(x)}(Y - \mu(x, a)) r_\delta(a \mid x)  p(x) = \frac{r_\delta(A \mid X)}{\pi(A \mid X)} \Big(Y - \mu(X, A) \Big)
\end{align*}
and the third term to
\begin{align*}
   \sum_x \sum_a  \mu(x, a) r_\delta(a \mid x) \Big( \mathbbm{1}(X = x) - p(x) \Big) = \sum_a  \mu(X, a) r_\delta(a \mid X) - \psi_R(\delta, a^\prime).
\end{align*}
Now, in order to evaluate the second term, we decompose $\sum_x \sum_a \mu(x, a) \mathbb{IF}( r_\delta(a \mid x)) p(x)$ into the summation $T_1 + T_2 + T_3 + T_4$, where
\begin{align*}
    T_1 &=   \sum_x \sum_a  \mu(x, a) \left( \frac{\mathbb{IF}(f(\pi(a \mid x)))}{\sum_{t \leq a^\prime} \text{exp}(\delta t) \pi(t \mid x)} \right) p(x) \\
    T_2 &= \sum_x \sum_a  \mu(x, a) f(\pi(a \mid x)) \mathbb{IF}\Bigg(\Bigg[\sum_{t \leq a^\prime} \text{exp}(\delta t) \pi(t \mid x) \Bigg]^{-1} \Bigg) p(x) \\
    T_3 &=  \sum_x \sum_a  \mu(x, a)\left(\frac{\mathbb{IF}(g(\pi(a \mid x)))}{ \sum_{t > a^\prime} \text{exp}(-\delta t) \pi(t \mid x)}\right) p(x) \\
    T_4 &=  \sum_x \sum_a  \mu(x, a) g(\pi(a \mid x)) \mathbb{IF}\Bigg(\Bigg[\sum_{t > a^\prime} \text{exp}(-\delta t) \pi(t \mid x) \Bigg]^{-1} \Bigg) p(x)
\end{align*}
where we define the expressions $f(\pi(a \mid x)) = \text{exp}(\delta a) \pi(a \mid x) \mathbbm{1}(a \leq a^\prime)\{ \int^{a^\prime}_0 \pi(t \mid x) dt \}$ and $g(\pi(a \mid x)) = \text{exp}(-\delta a) \pi(a \mid x) \mathbbm{1}(a > a^\prime) \{ \int^{1}_{a^\prime} \pi(t \mid x) dt\}$. We now evaluate the candidate influence function for each term in $\mathbb{IF}(r_\delta(a \mid x))$. Using the fact that
\begin{align*}
    \mathbb{IF}(\pi(a\mid x)) = \frac{\mathbbm{1}(X = x)}{p(x)} \Big( \mathbbm{1}(A = a) - \pi(a \mid x) \Big)
\end{align*}
we can see that
\begin{align*}
    \mathbb{IF}(f(\pi(a \mid x))) = \underbrace{\exp(\delta a) \mathbb{IF}( \pi(a \mid x)) \mathbbm{1}(a \leq a^\prime) F_{\pi}(a^\prime)}_{(i)} + \underbrace{\exp(\delta a) \pi(a \mid x) \mathbbm{1}(a \leq a^\prime) \mathbb{IF}(F_{\pi}(a^\prime) )}_{(ii)},
\end{align*}
where for notational convenience, we define $F_{\pi}(a^\prime) := \sum_{t \leq a^\prime} \pi(t \mid x)$. Next, it follows that
\begin{align*}
    (i) &= \frac{\exp(\delta a) \mathbbm{1}(X = x)}{p(x)} \Big( \mathbbm{1}(A = a) - \pi(a \mid x)\Big)\mathbbm{1}(a \leq a^\prime)F_{\pi}(a^\prime) \\
    &= \frac{\exp(\delta a) \mathbbm{1}(X = x) \mathbbm{1}(A = a) \mathbbm{1}(a \leq a^\prime) F_{\pi}(a^\prime) }{p(x)}  - \frac{ \exp(\delta a) \pi(a \mid x)  \mathbbm{1}(X = x)\mathbbm{1}(a \leq a^\prime) F_{\pi}(a^\prime)}{p(x)} 
\end{align*}
and
\begin{align*}
    (ii) &= \exp(\delta a) \pi(a \mid x) \mathbbm{1}(a \leq a^\prime) \sum_{t \leq a^\prime} \frac{\mathbbm{1}(X = x)}{p(x)} \Big( \mathbbm{1}(A = t) - \pi(t \mid x) \Big) \\
    &= \exp(\delta a) \pi(a \mid x) \mathbbm{1}(a \leq a^\prime)  \frac{\mathbbm{1}(X = x)}{p(x)} \Big( \mathbbm{1}(A \leq a^\prime) -  F_{\pi}(a^\prime) \Big),
\end{align*}
so, combining terms, $ \mathbb{IF}(f(\pi(a \mid x)))$ simplifies to
\begin{align*}
     \frac{\exp(\delta a) \mathbbm{1}(X = x) \mathbbm{1}(a \leq a^\prime)}{p(x)} ( (\mathbbm{1}(A = a) - \pi(a \mid x)) F_{\pi}(a^\prime) + \pi(a \mid x) (\mathbbm{1}(A \leq a^\prime) -  F_{\pi}(a^\prime) ) ).
\end{align*}
Similarly, we can see that $\mathbb{IF}(g(\pi(a \mid x)))$ is equal to
\begin{align*}
     \frac{\exp(-\delta a) \mathbbm{1}(X = x) \mathbbm{1}(a > a^\prime)}{p(x)} ((\mathbbm{1}(A = a) - \pi(a \mid x)) F^c_{\pi}(a^\prime) + \pi(a \mid x) (\mathbbm{1}(A > a^\prime) - F^c_{\pi}(a^\prime)))
\end{align*}
where $F^c_{\pi}(a^\prime) := 1 - F_{\pi}(a^\prime) = \sum_{t > a^\prime} \pi(t \mid x)$. Finally, note that
\begin{align*}
     \mathbb{IF}\left( \left[\sum_{t \leq a^\prime} \text{exp}(\delta t) \pi(t \mid x) \right]^{-1} \right) = -\frac{\sum_{t \leq a^\prime} \text{exp}(\delta t) \frac{\mathbbm{1}(X=x)}{p(x)} \left( \mathbbm{1}(A = t) - \pi(t \mid x) \right)}{\left(\sum_{t \leq a^\prime} \text{exp}(\delta t) \pi(t \mid x) \right)^2}
\end{align*}
and
\begin{align*}
    \mathbb{IF}\left( \left[\sum_{t > a^\prime} \text{exp}(- \delta t) \pi(t \mid x) \right]^{-1} \right) = -\frac{\sum_{t > a^\prime} \text{exp}(-\delta t) \frac{\mathbbm{1}(X=x)}{p(x)} \left( \mathbbm{1}(A = t) - \pi(t \mid x) \right)}{\left(\sum_{t > a^\prime} \text{exp}(- \delta t) \pi(t \mid x) \right)^2}.
\end{align*}
With these derivations in place, we can now evaluate each term in the decomposition of $\sum_x \sum_a  \mu(x, a)\mathbb{IF}(r_\delta(a \mid x)) p(x)$. First, observe that $T_1$ reduces to
\begin{align*}
    T_1 &= \Bigg\{\frac{\mu(X, A) \exp(\delta A) \mathbbm{1}(A \leq a^{\prime})  F_{\pi}(a^{\prime})}{\sum_{t \leq a^{\prime}} \text{exp}(\delta t) \pi(t \mid X)} - \frac{F_{\pi}(a^{\prime}) \sum_{a \leq a^{\prime}} \mu(X, a) \exp(\delta a) \pi(a \mid X)  }{\sum_{t \leq a^{\prime}} \text{exp}(\delta t) \pi(t \mid X)} \ +\\
    &\phantom{{}={\Bigg\{}}\frac{ \mathbbm{1}(A \leq a^{\prime})\sum_{a \leq a^{\prime}} \mu(X, a) \exp(\delta a) \pi(a \mid X)}{\sum_{t \leq a^{\prime}} \text{exp}(\delta t) \pi(t \mid X)} - \frac{F_{\pi}(a^{\prime}) \sum_{a \leq a^{\prime}} \mu(X, a) \exp(\delta a) \pi(a \mid X)   }{\sum_{t \leq a^{\prime}} \text{exp}(\delta t) \pi(t \mid X)} \Bigg\}
\end{align*}
or equivalently, to
\begin{align*}
    \frac{\mu(X, A) \exp(\delta A) \mathbbm{1}(A \leq a^{\prime})  F_{\pi}(a^{\prime})}{\sum_{t \leq a^{\prime}} \text{exp}(\delta t) \pi(t \mid X)} + \frac{\mathbbm{1}(A \leq a^{\prime})}{F_{\pi}(a^{\prime})} \sum_{a \leq a^{\prime}} \mu(X, a) r_\delta(a \mid X) - 2\sum_{a \leq a^{\prime}} \mu(X, a) r_\delta(a \mid X)
\end{align*}
where we have used the fact that
\begin{align*}
    \frac{F_\pi(a^{\prime}) \sum_{a \leq a^{\prime}}\mu(X, a) \text{exp}(\delta a) \pi(a \mid X)}{\left(\sum_{t \leq a^{\prime}} \text{exp}(\delta t) \pi(t \mid X) \right)} = \sum_{a \leq a^{\prime}} \mu(X, a) r_\delta(a \mid X).
\end{align*}
and
\begin{align*}
    \frac{\mathbbm{1}(A \leq a^{\prime})\sum_{a \leq a^{\prime}} \mu(X, a) \exp(\delta a) \pi(a \mid X) }{\sum_{t \leq a^{\prime}} \text{exp}(\delta t) \pi(t \mid X)} = \frac{\mathbbm{1}(A \leq a^{\prime})}{F_{\pi}(a^{\prime})} \sum_{a \leq a^{\prime}} \mu(X, a) r_\delta(a \mid X).
\end{align*}
Next, observe that
\begin{align*}
   T_2 &= -  \frac{F_\pi(a^{\prime}) \sum_{a \leq a^{\prime}}\mu(X, a) \text{exp}(\delta a) \pi(a \mid X) }{\left(\sum_{t \leq a^{\prime}} \text{exp}(\delta t) \pi(t \mid X) \right)^2}\left( \text{exp}(\delta A) \mathbbm{1}(A \leq a^{\prime}) - \sum_{t \leq a^{\prime}} \text{exp}(\delta t) \pi(t \mid X)  \right) \\
   &= -\frac{ \text{exp}(\delta A) \mathbbm{1}(A \leq a^{\prime}) }{\left(\sum_{t \leq a^{\prime}} \text{exp}(\delta t) \pi(t \mid X) \right)} \cdot  \sum_{a \leq a^{\prime}} \mu(X, a) r_\delta(a \mid X)  +  \sum_{a \leq a^{\prime}} \mu(X, a) r_\delta(a \mid X)
\end{align*}
Similarly, it follows that $T_3$ simplifies to
\begin{align*}
    \frac{\mu(X, A) \exp(- \delta A) \mathbbm{1}(A > a^{\prime})  F^c_{\pi}(a^{\prime})}{\sum_{t > a^{\prime}} \text{exp}(- \delta t) \pi(t \mid X)} + \frac{\mathbbm{1}(A > a^{\prime})}{F^c_{\pi}(a^{\prime})} \sum_{a > a^{\prime}} \mu(X, a) r_\delta(a \mid X) - 2\sum_{a > a^{\prime}} \mu(X, a) r_\delta(a \mid X)
\end{align*}
and that
\begin{align*}
   T_4 = -\frac{ \text{exp}(- \delta A) \mathbbm{1}(A > a^{\prime}) }{\left(\sum_{t > a^{\prime}} \text{exp}(- \delta t) \pi(t \mid X) \right)} \cdot  \sum_{a > a^{\prime}} \mu(X, a) r_\delta(a \mid X)  +  \sum_{a > a^{\prime}} \mu(X, a) r_\delta(a \mid X).
\end{align*}
Taking stock of what we have, it follows that $\sum_x \sum_a \mu(x, a) \mathbb{IF}( r_\delta(a \mid x))  p(x)$ is given by
\begin{align*}
    &\phantom{{}={}}\Bigg\{\frac{r_\delta(A \mid X)}{\pi(A \mid X)} \mu(X, A) -  \sum_{a} \mu(X, a) r_\delta(a \mid X) \ + \\
    &\phantom{{}={\Bigg\{}}\frac{\mathbbm{1}(A \leq a^{\prime})}{F_{\pi}(a^{\prime})} \sum_{a \leq a^{\prime}} \mu(X, a) r_\delta(a \mid X) -\frac{ \text{exp}(\delta A) \mathbbm{1}(A \leq a^{\prime}) }{\left(\sum_{t \leq a^{\prime}} \text{exp}(\delta t) \pi(t \mid X) \right)} \sum_{a \leq a^{\prime}} \mu(X, a) r_\delta(a \mid X) \ + \\
      &\phantom{{}={\Bigg\{}} \frac{\mathbbm{1}(A > a^{\prime})}{F^c_{\pi}(a^{\prime})} \sum_{a > a^{\prime}} \mu(X, a) r_\delta(a \mid X) -\frac{ \text{exp}(- \delta A) \mathbbm{1}(A > a^{\prime}) }{\left(\sum_{t > a^{\prime}} \text{exp}(- \delta t) \pi(t \mid X) \right)}  \sum_{a > a^{\prime}} \mu(X, a) r_\delta(a \mid X) \Bigg\}
\end{align*}
where we have grouped together the terms
\begin{align*}
     \frac{F_\pi(a^{\prime})\mu(X, A) \exp(\delta a) \mathbbm{1}(A \leq a^{\prime})}{\sum_{t \leq a^{\prime}} \text{exp}(\delta t) \pi(t \mid X)} +  \frac{F^c_\pi(a^{\prime})\mu(X, A) \exp(- \delta A) \mathbbm{1}(A > a^{\prime})}{\sum_{t > a^{\prime}} \text{exp}(-\delta t) \pi(t \mid X)} = \frac{r_\delta(A \mid X)}{\pi(A \mid X)} \mu(X, A).
\end{align*}
Finally, observe that
\begin{align*}
    \frac{\mathbbm{1}(A \leq a^{\prime})}{F_{\pi}(a^{\prime})} \sum_{a \leq a^{\prime}} \mu(X, a) r_\delta(a \mid X) &= \mathbbm{1}(A \leq a^{\prime}) \left( \frac{\sum_{a} \mu(X, a) r_\delta(a \mid X) \mathbbm{1}(a \leq a^{\prime})}{\mathbb{P}(A \leq a^{\prime} \mid X)} \right) \\
    &= \mathbbm{1}(A \leq a^{\prime}) \mathbb{E}_R\left( \mu(X, A) \mid X, A \leq a^{\prime} \right)
\end{align*}
and that
\begin{align*}
     \frac{\mathbbm{1}(A \leq a^{\prime})}{F_{\pi}(a^{\prime})} \frac{r_\delta(A \mid X)}{\pi(A \mid X)} \sum_{a \leq a^{\prime}} \mu(X, a) r_\delta(a \mid X) &= \mathbbm{1}(A \leq a^{\prime}) \frac{r_\delta(A \mid X)}{\pi(A \mid X)} \mathbb{E}_R\left( \mu(X, a) \mid X, A \leq a^{\prime} \right)
\end{align*}
where we say $\mathbb{E}_R(\cdot)$ represents an expectation with respect to the reflected distribution $r_\delta$. Thus, combining complementary terms and putting everything together, we have that our candidate influence function is given by $\varphi_R(Z;\delta) = D_Y + D_{r, \mu} + D_\psi$, where
\begin{align*}
    D_Y &= \frac{r_\delta(A \mid X)}{\pi(A \mid X)} \Big(Y - \mu(X, A) \Big) \\
    D_{r, \mu} &= \frac{r_\delta(A \mid X)}{\pi(A \mid X)} \mu(X, A) + \Big(\mathbbm{1}(A \leq a^{\prime})L(X, A) + \mathbbm{1}(A > a^{\prime}) U(X, A) \Big) - \mathbb{E}_R(\mu(X, A) \mid X) \\
    D_\psi &= \mathbb{E}_R(\mu(X, A) \mid X)  - \psi_R(\delta)
\end{align*}
and $\mathbb{E}_R(\mu(X, A) \mid X) = \int_a \mu(X, a) r_\delta(a \mid X) \ da$ is the conditional mean of $\mu(X, A)$ under the reflected exponentially tilted distribution such that
\begin{align*}
    L(X, A) &= \mathbb{E}_R(\mu(X, A) \mid X, A \leq a^{\prime}) - \frac{r_\delta(A \mid X)}{\pi(A \mid X)} \mathbb{E}_R(\mu(X, A) \mid X, A \leq a^{\prime}) \\
    U(X, A) &= \mathbb{E}_R(\mu(X, A) \mid X, A > a^{\prime}) - \frac{r_\delta(A \mid X)}{\pi(A \mid X)} \mathbb{E}_R(\mu(X, A) \mid X, A > a^{\prime}).
\end{align*}
Now, we show that the remainder from the von Mises expansion of our candidate influence function can be written as a second order product of errors. Observe that the remainder $\psi_R(\widehat{\mathbb{P}}) - \psi_R(\mathbb{P}) + \int \varphi_R(z; \widehat{\mathbb{P}}) \ d\mathbb{P}(z)$ can be split up into the following terms:
\begin{align*}
    (i) &= \mathbb{E}\left[\frac{\widehat{r}_\delta(A \mid X)}{\widehat{\pi}(A \mid X)} \Big(Y - \widehat{\mu}(X, A) \Big) \right] \\
    (ii) &=  \mathbb{E}\Bigg[\frac{\widehat{r}_\delta(A \mid X)}{\widehat{\pi}(A \mid X)} \widehat{\mu}(X, A) + \mathbbm{1}(A \leq a^{\prime}) \widehat{L}(X, A) + \mathbbm{1}(A > a^{\prime}) \widehat{U}(X, A)  -  \int_a \widehat{\mu}(X, a) \widehat{r}_\delta(a \mid X)da \Bigg] \\
    (iii) &= \mathbb{E}\left[\int_a \widehat{\mu}(X, a) \widehat{r}_\delta(a \mid X)da  - \psi_R(\mathbb{P}) \right].
\end{align*}
Evaluating the first expectation we can see that
\begin{align*}
    (i) &= \mathbb{E}\left[ \mathbb{E}\left[\frac{\widehat{r}_\delta(A \mid X)}{\widehat{\pi}(A \mid X)} \Big(Y - \widehat{\mu}(X, A) \Big) \mid X, A \right] \right] \\
    &= \mathbb{E}\left[\frac{\widehat{r}_\delta(A \mid X)}{\widehat{\pi}(A \mid X)} \Big(\mu(X, A) - \widehat{\mu}(X, A) \Big) \right] \\
    &= \mathbb{E}\left[\int_a \frac{\widehat{r}_\delta}{\widehat{\pi}_a} \Big(\mu_a - \widehat{\mu}_a \Big) \pi_a da\right]
\end{align*}
where we use the shorthand $\mu_a = \mu(X, a)$, $\pi_a = \pi(a \mid X)$ and $r_\delta = r_\delta(a \mid X)$. Next, within $(ii)$ observe that by applying the law of iterated expectations,
\begin{align*}
    \mathbb{E}\left[\mathbbm{1}(A \leq a^{\prime}) \widehat{L}(X, A) \right] &= \mathbb{E}\left[ \frac{\mathbbm{1}(A \leq a^{\prime}) }{\int_{a \leq a^{\prime}}  \widehat{\pi}(a \mid X) da} \left( \int_{a \leq a^{\prime}}  \widehat{\mu}_a \widehat{r}_\delta da - \frac{\widehat{r}(A \mid X)}{\widehat{\pi}(A \mid X)} \int_{a \leq a^{\prime}} \widehat{\mu}_a \widehat{r}_\delta da \right) \right] \\
    &=  \mathbb{E}\left[ \frac{\int_{a \leq a^{\prime}}  \pi_a da}{\int_{a \leq a^{\prime}}  \widehat{\pi}_a da}  \int_{a \leq a^{\prime}}  \widehat{\mu}_a \widehat{r}_\delta da - \frac{\int_{a \leq a^{\prime}} \frac{\widehat{r}  }{\widehat{\pi}_a} \pi_a da \int_{a \leq a^{\prime}} \widehat{\mu}_a \widehat{r}_\delta da}{\int_{a \leq a^{\prime}}  \widehat{\pi}_a da} \right] 
\end{align*}
and similarly,
\begin{align*}
     \mathbb{E}\left[\mathbbm{1}(A > a^{\prime}) \widehat{U}(X, A) \right] &= \mathbb{E}\left[ \frac{\int_{a > a^{\prime}}  \pi_a da}{\int_{a > a^{\prime}}  \widehat{\pi}_a da}  \int_{a > a^{\prime}}  \widehat{\mu}_a \widehat{r}_\delta da - \frac{\int_{a > a^{\prime}} \frac{\widehat{r}  }{\widehat{\pi}_a} \pi_a da \int_{a > a^{\prime}} \widehat{\mu}_a \widehat{r}_\delta da}{\int_{a > a^{\prime}}  \widehat{\pi}_a da} \right] .
\end{align*}
Putting together both terms, it follows that
\begin{align*}
    (ii) &= \Bigg\{ \mathbb{E} \left[ \int_a \frac{\widehat{r}}{\widehat{\pi}_a} \widehat{\mu}_a \pi_a da + \frac{ \int_{a \leq a^{\prime}}  \pi_a da}{\int_{a \leq a^{\prime}}  \widehat{\pi}_a da}  \int_{a \leq a^{\prime}}  \widehat{\mu}_a \widehat{r}_\delta da -  \frac{\int_{a \leq a^{\prime}}  \widehat{\mu}_a \widehat{r}_\delta da}{\int_{a \leq a^{\prime}}  \widehat{\pi}_a da} \int_{a \leq a^{\prime}} \frac{\widehat{r}  }{\widehat{\pi}_a} \pi_a da \right]\ + \\
    &\phantom{{}={\Bigg\{ }} \mathbb{E}\left[\frac{ \int_{a > a^{\prime}}  \pi_a da}{\int_{a > a^{\prime}}  \widehat{\pi}_a da}  \int_{a > a^{\prime}}  \widehat{\mu}_a \widehat{r}_\delta da -  \frac{\int_{a > a^{\prime}}  \widehat{\mu}_a \widehat{r}_\delta da }{\int_{a > a^{\prime}}  \widehat{\pi}_a da} \int_{a > a^{\prime}} \frac{\widehat{r}  }{\widehat{\pi}_a} \pi_a da   -  \int_a \widehat{\mu}_a \widehat{r}_\delta da  \right] \Bigg\} .
\end{align*}
Then, combining all three terms, we can see that $R_2(\widehat{\mathbb{P}}, \mathbb{P})$ can be written as
\begin{align*}
    R_2(\widehat{\mathbb{P}}, \mathbb{P}) &=  \Bigg\{ \mathbb{E}\left[\int_a \frac{\widehat{r}_\delta}{\widehat{\pi}_a} \Big(\mu_a - \widehat{\mu}_a \Big) \pi_a da\right] + \\
    &\phantom{{}={\Bigg\{}} \mathbb{E} \left[ \int_a \frac{\widehat{r}_{\delta}}{\widehat{\pi}_a} \widehat{\mu}_a \pi_a da + \frac{ \int_{a \leq a^{\prime}}  \pi_a da}{\int_{a \leq a^{\prime}}  \widehat{\pi}_a da}  \int_{a \leq a^{\prime}}  \widehat{\mu}_a \widehat{r}_\delta da -  \frac{\int_{a \leq a^{\prime}}  \widehat{\mu}_a \widehat{r}_\delta da}{\int_{a \leq a^{\prime}}  \widehat{\pi}_a da} \int_{a \leq a^{\prime}} \frac{\widehat{r}_{\delta}  }{\widehat{\pi}_a} \pi_a da \right]\ + \\
    &\phantom{{}={\Bigg\{}} \mathbb{E}\left[\frac{ \int_{a > a^{\prime}}  \pi_a da}{\int_{a > a^{\prime}}  \widehat{\pi}_a da}  \int_{a > a^{\prime}}  \widehat{\mu}_a \widehat{r}_\delta da -  \frac{\int_{a > a^{\prime}}  \widehat{\mu}_a \widehat{r}_\delta da}{\int_{a > a^{\prime}}  \widehat{\pi}_a da} \int_{a > a^{\prime}} \frac{\widehat{r}_{\delta}  }{\widehat{\pi}_a} \pi_a da   -  \int_a \widehat{\mu}_a \widehat{r}_\delta da  \right] + \\
    &\phantom{{}={\Bigg\{}} \mathbb{E}\left[\int_a \widehat{\mu}_a \widehat{r}_\delta da  - \psi_R(\mathbb{P}) \right] \Bigg\} .
\end{align*}
To proceed, we split the remainder into terms such that $a \leq a^{\prime}$ and $a > a^{\prime}$. That is, by considering the following decompositions:
\begin{align*}
    \int_a \frac{\widehat{r}_\delta}{\widehat{\pi}_a} \Big(\mu_a - \widehat{\mu}_a \Big) \pi_a da &= \int_{a \leq a^{\prime}} \frac{\widehat{r}_\delta}{\widehat{\pi}_a} \Big(\mu_a - \widehat{\mu}_a \Big) \pi_a da + \int_{a > a^{\prime}} \frac{\widehat{r}_\delta}{\widehat{\pi}_a} \Big(\mu_a - \widehat{\mu}_a \Big) \pi_a da \\
     \int_a \frac{\widehat{r}_{\delta}}{\widehat{\pi}_a} \widehat{\mu}_a \pi_a da &=  \int_{a \leq a^{\prime}} \frac{\widehat{r}_{\delta}}{\widehat{\pi}_a} \widehat{\mu}_a \pi_a da +  \int_{a > a^{\prime}} \frac{\widehat{r}_{\delta}}{\widehat{\pi}_a} \widehat{\mu}_a \pi_a da \\
     \int_a \widehat{\mu}_a \widehat{r}_\delta da &= \int_{a \leq a^{\prime}} \widehat{\mu}_a \widehat{r}_\delta da + \int_{a > a^{\prime}} \widehat{\mu}_a \widehat{r}_\delta da \\
     \psi_R(\mathbb{P}) &= \mathbb{E}\left[ \int_{a \leq a^{\prime}} \mu_a r_{\delta} da \right] + \mathbb{E}\left[ \int_{a > a^{\prime}} \mu_a r_{\delta} da \right].
\end{align*}
We proceed first with the terms where $a \leq a^{\prime}$. Here, we have
\begin{align*}
   L_R &=  \mathbb{E} \Bigg[ \int_{a \leq a^{\prime}} \frac{\widehat{r}_\delta}{\widehat{\pi}_a} \mu_a \pi_a da  + \underbrace{\frac{ \int_{a \leq a^{\prime}}  \pi_a da \int_{a \leq a^{\prime}}  \widehat{\mu}_a \widehat{r}_\delta da}{\int_{a \leq a^{\prime}}  \widehat{\pi}_a da}   -  \frac{\int_{a \leq a^{\prime}}  \widehat{\mu}_a \widehat{r}_\delta da \int_{a \leq a^{\prime}} \frac{\widehat{r}_{\delta}  }{\widehat{\pi}_a} \pi_a da}{\int_{a \leq a^{\prime}}  \widehat{\pi}_a da} }_{(iv)} -\int_{a \leq a^{\prime}} \mu_a r_{\delta} da \Bigg]
\end{align*}
Next, we make a few algebraic manipulations. Observe that the difference between the first and last terms in $L_R$, $\int_{a \leq a^{\prime}} \frac{\widehat{r}_\delta}{\widehat{\pi}_a} \mu_a \pi_a da -\int_{a \leq a^{\prime}} \mu_a r_{\delta} da $, may be written as
\begin{align*}
     \int_{a \leq a^{\prime}} \exp(\delta a) \mu_a \pi_a \left( \frac{\int_{a \leq a^{\prime}} \widehat{\pi}_a da }{\int_{a \leq a^{\prime}} \exp(\delta a) \widehat{\pi}_a da } - \frac{\int_{a \leq a^{\prime}} \pi_a da}{\int_{a \leq a^{\prime}} \exp(\delta a) \pi_a da }\right) da.
\end{align*}
Furthermore, it follows that
\begin{align*}
     (iv) = \frac{\int_{a \leq a^{\prime}} \exp(\delta a) \widehat{\mu}_a \widehat{\pi}_a da}{\left(\int_{a \leq a^{\prime}} \exp(\delta a) \widehat{\pi}_a da \right)^2}\left[ \int_{a \leq a^{\prime}} \exp(\delta a) \widehat{\pi}_a da \int_{a \leq a^{\prime}} \pi_a da  -  \int_{a \leq a^{\prime}} \exp(\delta a) \pi_a da \int_{a \leq a^{\prime}} \widehat{\pi}_a da  \right].
\end{align*}
Now, let $u = \int_{a \leq a^{\prime}} \exp(\delta a) \mu_a \pi_a da$, $v = \int_{a \leq a^{\prime}} \exp(\delta a) \pi_a da$, and $w = \int_{a \leq a^{\prime}} \pi_a da$. Then, it follows that
\begin{align*}
    L_R &= u \left(\frac{\widehat{w}}{\widehat{v}} - \frac{w}{v} \right) + \frac{\widehat{u}}{\widehat{v}^2}\Big( \widehat{v} w - v \widehat{w} \Big) \\
    &= (u + \textcolor{blue}{\widehat{u}} - \textcolor{red}{\widehat{u}}) \left(\frac{\widehat{w}}{\widehat{v}} - \frac{w}{v} \right) + \frac{\widehat{u}}{\widehat{v}^2}\Big( \widehat{v} w - v \widehat{w} \Big) \\
    &= (u - \widehat{u}) \left(\frac{\widehat{w}}{\widehat{v}} - \frac{w}{v} \right) + \widehat{u} \left(\frac{\widehat{w}}{\widehat{v}} - \frac{w}{v} \right) + \frac{\widehat{u}}{\widehat{v}^2}\Big( \widehat{v} w - v \widehat{w} \Big) \\
    &= (u - \widehat{u}) \left(\frac{\widehat{w}}{\widehat{v}} - \frac{w}{v} \right) + \widehat{u} \left(\frac{v\widehat{w} - \widehat{v} w}{v\widehat{v}} \right) - \frac{\widehat{u}}{\widehat{v}^2}\Big(v\widehat{w} -\widehat{v} w \Big) \\
    &=  (u - \widehat{u}) \left(\frac{\widehat{w}}{\widehat{v}} - \frac{w}{v} \right) + \widehat{u} \Big( v\widehat{w} -\widehat{v} w \Big)
 \left(\frac{1}{v\widehat{v}} - \frac{1}{\widehat{v}^2} \right) \\
 &=  (u - \widehat{u}) \left(\frac{\widehat{w}}{\widehat{v}} - \frac{w}{v} \right) + \frac{\widehat{u}}{v \widehat{v}^2} \Big( v\widehat{w} -\widehat{v} w \Big)
 \left(\widehat{v} - v \right) \\
 &:= L_1 + L_2,
\end{align*}
thus yielding a second order product of errors. However, we may simplify this expression further by making a few more manipulations. First, note by adding and subtracting $\pi_a \widehat{\mu}_a$ it follows that
\begin{align*}
    L_1 &= \left(\int_{a \leq a^{\prime}} \exp(\delta a) \Big(\mu_a \pi_a  - \widehat{\mu}_a \widehat{\pi}_a \Big) da \right) \left( \frac{\int_{a \leq a^{\prime}} \widehat{\pi}_a da}{\int_{a \leq a^{\prime}} \exp(\delta a) \widehat{\pi}_a da} - \frac{\int_{a \leq a^{\prime}} \pi_a da}{\int_{a \leq a^{\prime}} \exp(\delta a) \pi_a da}  \right) \\
    &= \int_{a \leq a^{\prime}}\left(\frac{\widehat{r}_\delta}{\widehat{\pi}_a} - \frac{r_\delta}{\pi_a} \right)\Big( (\mu_a - \widehat{\mu}_a)\pi_a + (\pi_a - \widehat{\pi}_a) \widehat{\mu}_a \Big) da.
\end{align*}
Next, we can see that
\begin{align*}
    v\widehat{w} -\widehat{v} w &= \int_{a \leq a^{\prime}} \exp(\delta a) \pi_a da \left(\int_{a \leq a^{\prime}} \widehat{\pi}_a da\right) - \int_{a \leq a^{\prime}} \exp(\delta a) \widehat{\pi}_a da \left(\int_{a \leq a^{\prime}} \pi_a da\right) \\
    &= \int_{a \leq a^{\prime}} \exp(\delta a) \left[\pi_a \left(\int_{a \leq a^{\prime}} \widehat{\pi}_a da\right) - \widehat{\pi}_a \left(\int_{a \leq a^{\prime}} \pi_a da\right) \right] da \\
    &= \int_{a \leq a^{\prime}} \exp(\delta a) \left[\pi_a \int_{a \leq a^{\prime}} (\widehat{\pi}_a - \pi_a) da - (\widehat{\pi}_a - \pi_a) \left(\int_{a \leq a^{\prime}} \pi_a da\right) \right] da \\
    &=  \int_{a \leq a^{\prime}} \exp(\delta a) \pi_a da \int_{a \leq a^{\prime}} (\widehat{\pi}_a - \pi_a) da - \int_{a \leq a^{\prime}} \exp(\delta a) (\widehat{\pi}_a - \pi_a) \left(\int_{a \leq a^{\prime}} \pi_a da\right) da
\end{align*}
and
\begin{align*}
    \widehat{v} - v = \int_{a \leq a^{\prime}} \exp(\delta a) (\widehat{\pi}_a - \pi_a) da.
\end{align*}
Thus, combining both terms it follows that
\begin{align*}
    L_2 &= \frac{\widehat{\mathbb{E}}_R\left(\mu(X, A) \mid X, A \leq a^{\prime} \right)}{\widehat{\mathbb{P}}(A \leq a^{\prime} \mid X)}\left( \int_{a \leq a^{\prime}} (\widehat{\pi}_a - \pi_a)\left( 1 - \frac{r_\delta}{\pi_a} \right)  da \right)\int_{a \leq a^{\prime}} \frac{\widehat{r}_\delta}{\widehat{\pi}_a} (\widehat{\pi}_a - \pi_a) da.
\end{align*}
Following the same steps for $a > a^{\prime}$, we find analogous remainder terms of
\begin{align*}
    U_1 &= \int_{a > a^{\prime}}\left(\frac{\widehat{r}_\delta}{\widehat{\pi}_a} - \frac{r_\delta}{\pi_a} \right)\Big( (\mu_a - \widehat{\mu}_a)\pi_a + (\pi_a - \widehat{\pi}_a) \widehat{\mu}_a \Big) da
\end{align*}
and
\begin{align*}
    U_2 = \frac{\widehat{\mathbb{E}}_R\left(\mu(X, A) \mid X, A > a^{\prime} \right)}{\widehat{\mathbb{P}}(A > a^{\prime} \mid X)}\left( \int_{a > a^{\prime}} (\widehat{\pi}_a - \pi_a)\left( 1 - \frac{r_\delta}{\pi_a} \right)  da \right)\int_{a > a^{\prime}} \frac{\widehat{r}_\delta}{\widehat{\pi}_a} (\widehat{\pi}_a - \pi_a) da.
\end{align*}
Putting everything together, it follows that $R_2(\widehat{\mathbb{P}}, \mathbb{P}) = R_1 + R_2 + R_3$ where
\begin{align*}
     R_1 &= \mathbb{E} \left[ \int_{a}\left(\frac{\widehat{r}_\delta}{\widehat{\pi}_a} - \frac{r_\delta}{\pi_a} \right)\Big( (\mu_a - \widehat{\mu}_a)\pi_a + (\pi_a - \widehat{\pi}_a) \widehat{\mu}_a \Big) da \right] \\
    R_2 &= \mathbb{E}\left[\frac{\widehat{\mathbb{E}}_R\left(\mu(X, A) \mid X, A \leq a^{\prime} \right)}{\widehat{\mathbb{P}}(A \leq a^{\prime} \mid X)}\left( \int_{a \leq a^{\prime}} (\widehat{\pi}_a - \pi_a)\left( 1 - \frac{r_\delta}{\pi_a} \right)  da \right)\int_{a \leq a^{\prime}} \frac{\widehat{r}_\delta}{\widehat{\pi}_a} (\widehat{\pi}_a - \pi_a) da \right] \\
    R_3 &= \mathbb{E}\left[ \frac{\widehat{\mathbb{E}}_R\left(\mu(X, A) \mid X, A > a^{\prime} \right)}{\widehat{\mathbb{P}}(A > a^{\prime} \mid X)}\left( \int_{a > a^{\prime}} (\widehat{\pi}_a - \pi_a)\left( 1 - \frac{r_\delta}{\pi_a} \right)  da \right)\int_{a > a^{\prime}} \frac{\widehat{r}_\delta}{\widehat{\pi}_a} (\widehat{\pi}_a - \pi_a) da\right].
 \end{align*}
\end{proof}

\subsection{Proof of \texorpdfstring{\cref{dose_response_theorem}}{Theorem 5}} \label{dose_response_theorem_proof}

We split the proof of \cref{dose_response_theorem} into two parts. First, a technical lemma that establishes $\psi_R(\delta; a^\prime)$ is converging to $\mathbb{E}[Y^{a^\prime}]$ as $\delta \to \infty$. Then, after establishing this result we make use of \cref{asymptotic_normality} in combination with the technical lemma to complete the proof.

\begin{lemma} \label{dose_response_lemma}
Assume $A \in [0,1]$ and:
\begin{enumerate}
    \item[(i)] $\pi_{\min} \leq \pi(a \mid x)$ for all $a \in  [a^{\prime} - \eta, a^\prime + \eta]$ for some $\eta \in (0, \text{min}\{a^\prime, 1 - a^\prime\}]$.
    \item[(ii)] $\pi(a \mid x) \leq \pi_{\max}$ for all $a$, $x$.
    \item[(iii)] $|\mu(x,a) - \mu(x,a^{\prime})| \leq L|a-a^{\prime}|$ for any $x$. 
\end{enumerate}
Then, for any $a^\prime \in [0, 1]$,
\begin{align*}
    | \psi_R(\delta; a^\prime) - \E(Y^{a^\prime}) | \leq \frac{1}{\delta} \left( \frac{L \pi_{\max}}{\pi_{\min}(1 - \eta)} \right).
\end{align*}
\end{lemma}

\begin{proof}[\textbf{Proof:}]
First, let us consider the case where $a = 1$. Then, it follows that
\begin{align*}
     | \psi(\delta) - \E(Y^1)| &=  \left| \int_x \int_a \mu(x,a) q_\delta(a \mid x) \, da \, d\Pb(x) - \int_x \mu(x,1) \, d\Pb(x) \right| \\
     &\leq  \left| \int_x \int_a \left| \mu(x,a) - \mu(x, 1) \right| q_\delta(a \mid x) \, da \, d\Pb(x)\right| \\
     &\leq L \int_x \int_a |a-1| q_\delta(a \mid x) \, da \, d\Pb(x). 
\end{align*}
From here, note that
\begin{align*}
     \int_a (1-a) q_\delta(a \mid x) da &= \frac{\int^1_0 (1 - a) \exp(\delta a) \pi(a \mid x) da}{\int^1_0 \exp(\delta a) \pi(a \mid x) da } \\
     &\leq \frac{\int^1_0 (1 - a) \exp(\delta a) \pi(a \mid x) da}{\int^1_\eta \exp(\delta a) \pi(a \mid x) da } \\
     &\leq \frac{\pi_{\max}}{\pi_{\min}} \cdot  \frac{\int^1_0 (1 - a) \exp(\delta a)  da}{\int^1_\eta \exp(\delta a) da } \\
     &=  \frac{\pi_{\max}}{\pi_{\min}}\left( \frac{1}{\delta} \cdot \frac{\exp(\delta)-1-\delta}{\exp(\delta)- \exp(\eta \delta)} \right)
\end{align*}
and furthermore, that
\begin{align*}
    \frac{\exp(\delta)-1-\delta}{\exp(\delta)- \exp(\eta \delta)} = \frac{1 -\exp(-\delta)(\delta + 1)}{1 - \exp(-\delta(1 - \eta))} \leq \frac{1}{1 - \eta}.
\end{align*}
Putting everything together yields the upper bound
\begin{align*}
    | \psi(\delta) - \E(Y^1)| \leq \frac{1}{\delta} \left( \frac{L \pi_{\max}}{\pi_{\min}(1 - \eta)} \right)
\end{align*}
Note that the case where $a^\prime = 0$ follows identically for $\delta \to -\infty$ assume weak positivity holds in a neighborhood of zero. Next, we consider some interior point $a^\prime \in [0, 1]$ using the reflected incremental effect. Recall that the reflected exponential tilt is defined as
\begin{align*}
    r_\delta(a \mid x) =  \omega_{a^\prime}\frac{\text{exp}(\delta a) \pi(a \mid x) \mathbbm{1}(a \leq a^\prime) }{\int^{a^\prime}_0 \text{exp}(\delta t) \pi(t \mid x) dt} +  (1-\omega_{a^\prime})\frac{\text{exp}(-\delta a) \pi(a \mid x) \mathbbm{1}(a > a^\prime)}{\int^{1}_{a^\prime} \text{exp}(-\delta t) \pi(t \mid x) dt}
\end{align*}
where $\omega_{a^\prime} = \int^{a^{\prime}}_0 \pi(t \mid x) dt$ and the reflected incremental effect as
\begin{align*}
    \psi_R(\delta; a^\prime) =  \int_\mathcal{X} \int_\mathcal{A} \mu(x, a) r_\delta(a \mid x) \, da \, d\mathbb{P}(x).
\end{align*}
Following the same style argument as before, we can see that
\begin{align*}
     | \psi_R(\delta, a^\prime) - \E(Y^{a^\prime})| &=  \left| \int_x \int_a \mu(x,a) r_\delta(a \mid x) \, da \, d\Pb(x) - \int_x \mu(x,a^\prime) \, d\Pb(x) \right| \\
     &\leq  \left| \int_x \int_a \left  | \mu(x,a) - \mu(x, a^\prime) \right| r_\delta(a \mid x) \, da \, d\Pb(x)\right| \\
     &\leq L \int_x \int_a |a-a^\prime| r_\delta(a \mid x) \, da \, d\Pb(x) \\
     &= \Bigg\{ L \int_x \int_a \left( \mathbb{P}(A \leq a^\prime \mid x) \frac{\int^{a^\prime}_0 (a^\prime - a) \exp(\delta a) \pi(a \mid x) da}{\int^{a^\prime}_0 \exp(\delta t) \pi(t \mid x) dt} \ + \right. \\
     &\phantom{{}={\Bigg\{ }} \left. \mathbb{P}(A > a^\prime \mid x) \frac{\int^{a^\prime}_0 (a^\prime - a) \exp(\delta a) \pi(a \mid x) da}{\int^{a^\prime}_0 \exp(\delta t) \pi(t \mid x) dt} \right) \, da \, d\Pb(x).
\end{align*}
From here observe that after following identical bounds to the $a = 1$ case, it follows that
\begin{align*}
     \frac{\int^{a^\prime}_0 (a^\prime - a) \exp(\delta a) \pi(a \mid x) da}{\int^{a^\prime}_0 \exp(\delta t) \pi(t \mid x) dt} &\leq \frac{1}{\delta} \frac{\pi_{\max}}{\pi_{\min}(1 - \eta)} \quad \text{and} \\
     \frac{\int^{a^\prime}_0 (a^\prime - a) \exp(\delta a) \pi(a \mid x) da}{\int^{a^\prime}_0 \exp(\delta t) \pi(t \mid x) dt} &\leq \frac{1}{\delta} \frac{\pi_{\max}}{\pi_{\min}(1 - \eta)}.
\end{align*}
Therefore, in the case of the reflected incremental effect we again have that 
\begin{align*}
        | \psi_R(\delta, a^\prime) - \E(Y^{a^\prime})| \leq \frac{1}{\delta} \left( \frac{L \pi_{\max}}{\pi_{\min}(1 - \eta)} \right).
\end{align*}
\end{proof}

Now that we have established that $ | \psi_R(\delta, a^\prime) - \E(Y^{a^\prime})| = O_\mathbb{P}\left(1/ \delta \right)$ we proceed with the remainder of the proof. Let $\psi_R(\infty, a^\prime) = \E(Y^{a^\prime})$. Then, it follows that
\begin{align*}
    \left|\widehat{\psi}_R(\delta, a^\prime) - \psi_R(\infty, a^\prime) \right| &= \left|\widehat{\psi}_R(\delta, a^\prime) - \psi_R(\delta, a^\prime) + \psi_R(\delta, a^\prime) - \psi_R(\infty, a^\prime) \right| \\
    &\leq \left|\widehat{\psi}_R(\delta, a^\prime) - \psi_R(\delta, a^\prime) \right| + \Big|\psi_R(\delta, a^\prime) - \psi_R(\infty, a^\prime) \Big|. 
\end{align*}
By \cref{asymptotic_normality} (see the discussion in \cref{reflected_ie_section} for more details on how \cref{asymptotic_normality} applies for the reflected incremental effect estimator) it follows that $|\widehat{\psi}_R(\delta, a^\prime) - \psi_R(\delta, a^\prime)| = O_\mathbb{P}\left(\sqrt{1 / (n / \delta) } \right)$, so combining terms and applying \cref{dose_response_lemma} we have that
\begin{align*}
   \left|\widehat{\psi}_R(\delta, a^\prime) - \psi_R(\infty, a^\prime) \right| = O_\mathbb{P}\left(\frac{1}{\delta} + \sqrt{\frac{1}{n / \delta }} \right).
\end{align*}

\subsection{Proof of \texorpdfstring{\cref{sigma_delta_ratio}}{Corollary 1}} \label{sigma_delta_ratio_proof}
\begin{proof}[\textbf{Proof:}]
In this proof, we establish the limiting behavior of $\sigma^2_\delta / \delta$ as $\delta \to \infty$. In order to avoid writing a tediously long proof, we stick to just the incremental effects estimator. However, we note that the exact same result for a chosen $a^\prime$ holds (instead of for edge-of-support $a = 1$ case) when using the reflected incremental effect. This can be shown after considering the nonparametric efficiency bound for the reflected incremental effect in \cref{reflected_efficiency_bound}, which has a structure identical to that of \cref{nonpar_efficiency_bound}. Then, as in the proof of \cref{dose_response_theorem}, we can repeat analogous arguments for interior points. With that said, first recall that by \cref{nonpar_efficiency_bound} the nonparametric efficiency bound is given by
 \begin{align*}
     \mathbb{E}\left[\frac{q^2_\delta(A \mid X)}{\pi^2(A \mid X)} \Big( \var\left( Y \mid X, A \right) +  \big[ \mu(X, A) - \mathbb{E}_Q(\mu(X, A) \mid X) \big]^2 \Big) \right] + \var\Big( \mathbb{E}_Q\left[\mu(X, A) \mid X \right] \Big),
\end{align*}
which we split into three parts:
\begin{align*}
    \sigma_1^2 &= \int_a \frac{q_\delta(a \mid X)^2}{\pi(a \mid X)} \var\left( Y \mid X, a \right) da, \\
    \sigma_2^2 &= \int_a \frac{q_\delta(a \mid X)^2}{\pi(a \mid X)}\left( \mu(X, a) - \int_a q_\delta(a \mid X) \mu(X, a)da \right)^2 da, \quad \text{and}\\
    \sigma_3^2 &= \left(\int_a q_\delta(a \mid X) \mu(X, a)da  - \psi(\delta) \right)^2
\end{align*}
such that $\sigma^2_\delta = \mathbb{E}\left[\sigma_1^2 + \sigma_2^2 + \sigma_3^2 \right]$. To evaluate the limit as $\delta \to \infty$, we further decompose $\sigma^2_1$ into
\begin{align*}
    \frac{\sigma^2_1}{\delta} &= \frac{1}{\delta} \int_a \frac{q_\delta(a \mid X)^2}{\pi(a \mid X)} \var\left( Y \mid X, a \right) da \\
    &= \frac{1}{\delta} \frac{\int_a \exp(2 \delta a) \pi_a \var\left( Y \mid X, a \right) da}{\left(\int_a \exp(\delta a) \pi_a da \right)^2} \\
    &= \left[ \frac{\int_a \exp(2 \delta a) \pi_a\var(Y_a) da}{\int_a \exp(2 \delta a) \pi_a da }  \cdot \frac{\int_a \exp(2 \delta a) \pi_a  da}{ \int_a \exp(2 \delta a) da } \cdot \left( \frac{ \int_a \exp(\delta a) da }{\int_a  \exp(\delta a) \pi_a  da } \right)^2 \cdot \frac{1}{\delta} \frac{\int_a \exp(2 \delta a)  da}{\left( \int_a \exp(\delta a) da \right)^2} \right]
\end{align*}
where for notational simplicity we say $\var(Y_a) := \var(Y \mid X, a)$. We will evaluate each expression from left to right. Following a similar proof structure to that of \cref{dose_response_theorem} we can see that
\begin{align*}
     \left| \frac{\int_a \exp(2 \delta a) \pi_a \var(Y_a) da}{\int_a \exp(2 \delta a) \pi_a \ da } - \var(Y_1)\right| &= \left| \frac{\int_a \exp(2 \delta a) \pi_a (\var(Y_a) - \var(Y_1)) da}{\int_a \exp(2 \delta a) \pi_a da } \right| \\
     &\leq  \frac{\int_a \exp(2 \delta a) \pi_a \big|\var(Y_a) - \var(Y_1)\big| da}{\int_a \exp(2 \delta a) \pi_a da } \\
     &\leq \frac{L\int_a \exp(2 \delta a) \pi_a \big|a - 1 \big| da}{\int_a \exp(2 \delta a) \pi_a \ da } \\
     &\leq \frac{L \int_a \exp(2 \delta a) \pi_a \big|a - 1 \big| da}{\int^1_{1 - \gamma} \exp(2 \delta a) \pi_a da } \\
     &\leq \frac{L \pi_{\max}}{\pi_{\min}}\frac{ \int_a \exp(2 \delta a) \big|a - 1 \big| da}{\int^1_{1 - \gamma} \exp(2 \delta a) da } 
\end{align*}
where in the second inequality we have used that $\var(Y \mid X, A)$ is Lipschitz (with constant $L$), the third inequality holds for some $\gamma \in [0, 1)$, and the fourth inequality follows by the assumption that positivity holds in a neighborhood of $a = 1$, so we may use the lower bound $\pi_{\min} \leq \pi(a \mid x)$ for $a \in [\gamma, 1]$. From here, we may directly integrate to see that
\begin{align*}
    \frac{ \int_a \exp(2 \delta a)\big|a - 1 \big| da}{\int^1_{1 - \gamma} \exp(2 \delta a) da } &= \frac{(\exp(2 \delta) - 2\delta - 1) / 4 \delta^2}{[\exp(2 \delta) - \exp(2 \delta (1 - \gamma))] / 2 \delta } \\
    &= \frac{1}{2 \delta} \left( \frac{\exp(2 \delta) - 2\delta - 1}{\exp(2 \delta) - \exp(2 \delta (1 - \gamma))} \right) \\
    &\leq \frac{1}{2 \delta}.
\end{align*}
Therefore, it is clear that as $\delta \to \infty$ then
\begin{align*}
    \frac{\int_a \exp(2 \delta a) \pi_a \var(Y \mid X, a) \ da}{\int_a \exp(2 \delta a) \pi_a da } \longrightarrow \var(Y \mid X, 1).
\end{align*}
Now, we consider the other terms in our decomposition of $\sigma^2_1$. Observe that
\begin{align*}
    \left| \frac{ \int_a \exp(2 \delta a) \pi_a   da}{ \int_a \exp(2 \delta a)da } - \pi_1 \right| &= \left| \frac{\int_a  \exp(2 \delta a) (\pi_a - \pi_1)  da}{ \int_a \exp(2 \delta a) da }  \right| \\
    &\leq  \frac{\int_a  \exp(2 \delta a) \left|\pi_a - \pi_1 \right|  da}{ \int_a \exp(2 \delta a) da }   \\
    &\leq \frac{L \int_a  \exp(2 \delta a) |a - 1|  \ da}{ \int_a \exp(2 \delta a)da },
\end{align*}
which we have already shown converges to zero as $\delta \to \infty$. Following the same logic, it can also be shown that
\begin{align*}
    \left( \frac{ \int_a \exp(\delta a) da }{ \int_a  \exp(\delta a) \pi_a \ da } \right)^2 \longrightarrow \frac{1}{\pi(1 \mid X)^2}.
\end{align*}
Finally, for the last term we can see that 
\begin{align*}
    \underset{\delta \to \infty}{\text{lim}}\left\{ \frac{1}{\delta} \frac{\int_a \exp(2 \delta a)  da}{\left( \int_a \exp(\delta a)  da \right)^2} \right\} = \underset{\delta \to \infty}{\text{lim}}\left\{ \frac{1}{\delta} \left[ \frac{1/2\delta (\exp(2 \delta) 
- 1)}{1/\delta^2 (\exp(\delta) - 1)^2}\right] \right\} = \underset{\delta \to \infty}{\text{lim}}\left\{ \frac{1}{2} \frac{\exp(2 \delta) - 1}{(\exp(\delta) - 1)^2} \right\} = \frac{1}{2}.
\end{align*}
Thus, putting everything together, it follows that
\begin{align*}
    \underset{\delta \to \infty}{\text{lim}}\left( \frac{\sigma^2_1}{\delta} \right) = \frac{1}{2} \left( \frac{\var(Y\mid X,1)}{\pi(1\mid X)}\right).
\end{align*}
Similarly, it can be shown that
\begin{align*}
     \underset{\delta \to \infty}{\text{lim}}\left\{ \frac{1}{\delta}\int_a \frac{q_\delta(a\mid X)^2}{\pi(a\mid X)}\mu^2(X,a)da \right\} &= \frac{1}{2} \left(\frac{\mu(X,1)^2}{\pi(1\mid X)} \right), \\
     \underset{\delta \to \infty}{\text{lim}}\left\{\frac{1}{\delta}\int_a \frac{q_\delta(a\mid X)^2}{\pi(a\mid X)}\mu(X,a)da \right\} &= \frac{1}{2} \left( \frac{\mu(X,1)}{\pi(1\mid X)} \right) ,\\
     \underset{\delta \to \infty}{\text{lim}}\left\{\frac{1}{\delta}\int_a \frac{q_\delta(a\mid X)^2}{\pi(a\mid X)}da \right\} &= \frac{1}{2} \left( \frac{1}{\pi(1\mid X)} \right), \\
     \underset{\delta \to \infty}{\text{lim}}\left\{\frac{1}{\delta}\int_a q_\delta(a\mid X) \mu(X, a) da \right\} &= \mu(X, 1).
\end{align*}
Therefore, if we expand out the terms in $\sigma^2_2$ it follows that
\begin{align*}
    \underset{\delta \to \infty}{\text{lim}}\left( \frac{\sigma_2^2}{\delta}\right) = \frac{1}{2}\left(\frac{\mu(X,1)^2}{\pi(1\mid X)} \right) - \frac{\mu(X,1)}{\pi(1\mid X)}\cdot \mu(X,1)  + \frac{1}{2}\left( \frac{\mu(X,1)^2}{\pi(1\mid X)}\right) = 0.
\end{align*}
Finally, we can show that as $\delta \to \infty$ then $\sigma_3^2 / \delta \to 0$ using the probability bounds on $Y$. Observe that since $| \mu(X, A)| = |\mathbb{E}[Y \mid X, A]|  \leq B$, then
\begin{align*}
    \sigma^2_3 &=  \left(\int_a q_\delta(a \mid X) \mu(X, a)da  - \psi(\delta) \right)^2 \\
    &\leq \left(\left|\int_a q_\delta(a \mid X) \mu(X, a)da \right| + \left|\int_x \int_a \mu(x, a) q_\delta(a \mid x) \, da \, d\mathbb{P}(x) \right| \right)^2 \\
    &\leq 4B^2.
\end{align*}
Putting everything together, it follows that
\begin{align*}
     \underset{\delta \to \infty}{\text{lim}}\left( \frac{\sigma^2_\delta }{\delta}\right) = \mathbb{E} \left[\frac{1}{2} \left( \frac{\var(Y \mid X,1)}{\pi(1 \mid X)} \right) \right].
\end{align*}
\end{proof}

\subsection{Proof of \cref{reflected_efficiency_bound}} \label{reflected_efficiency_bound_proof}

\begin{proof}[\textbf{Proof:}]
To derive the non-parametric efficiency bound, we must evaluate the variance of the efficient influence function of the reflected incremental effect. Recall by \cref{reflected_eif} that the efficient influence function is given by $\varphi_R(Z;\delta, a^\prime) = D_Y + D_{r, \mu} + D_\psi$ where
\begin{align*}
    D_Y &= \frac{r_\delta(A \mid X)}{\pi(A \mid X)} \Big(Y - \mu(X, A) \Big) \\
    D_{r, \mu} &= \frac{r_\delta(A \mid X)}{\pi(A \mid X)} \mu(X, A) + \Big(\mathbbm{1}(A \leq a^\prime)L(X, A) + \mathbbm{1}(A > a^\prime) U(X, A) \Big) - \mathbb{E}_R(\mu(X, A) \mid X) \\
    D_\psi &= \mathbb{E}_R(\mu(X, A) \mid X)  - \psi_R
\end{align*}
where $\mathbb{E}_R(\mu(X, A) \mid X) = \int_a \mu(X, a) r_\delta(a \mid X) \ da$ is the conditional mean of $\mu(X, A)$ under the reflected exponentially tilted distribution and
\begin{align*}
    L(X, A) &= \mathbb{E}_R(\mu(X, A) \mid X, A \leq a^\prime) - \frac{r_\delta(A \mid X)}{\pi(A \mid X)} \mathbb{E}_R(\mu(X, A) \mid X, A \leq a^\prime) \\
    U(X, A) &= \mathbb{E}_R(\mu(X, A) \mid X, A > a^\prime) - \frac{r_\delta(A \mid X)}{\pi(A \mid X)} \mathbb{E}_R(\mu(X, A) \mid X, A > a^\prime).
\end{align*}

Since $\varphi_R(Z;\delta, a^\prime)$ is mean zero, to obtain an expression for $\var(\varphi_R(Z;\delta, a^\prime))$ we simply need to evaluate $\mathbb{E} [(D_Y + D_{r, \mu} + D_{\psi} )^2]$. Thus, we first show that each of the cross-terms are equal to zero. Using the fact that $\mu(X, A) = \mathbb{E}[Y \mid X, A]$ and applying the law of iterated expectations, we can see that $\mathbb{E}\left[ D_Y D_{r, \mu}\right] = 0$ and $\mathbb{E}\left[D_Y D_\psi \right] = 0$ since for any function $f(X, A)$, it is clear that
\begin{align*}
    \mathbb{E}\left[f(X, A)(Y - \mu(X, A) \right] = \mathbb{E}\left[f(X, A) \mathbb{E}\left[(Y - \mu(X, A) \mid X, A \right] \right] = 0.
\end{align*}
Next, we consider the cross term $\mathbb{E}[D_{r,\mu} D_\psi]$. Within this cross term, observe that
\begin{align*}
    (i) &= \mathbb{E}\left[ \mathbbm{1}(A \leq a^{\prime}) L(X, A)\left(\int_a \mu(X, a) r_\delta(a \mid X) da  - \psi_R \right) \right] \\
    &= \mathbb{E}\left[ \mathbb{E}\left[ \mathbbm{1}(A \leq a^{\prime}) L(X, A)\left(\int_a \mu(X, a) r_\delta(a \mid X) da  - \psi_R \right) \mid X \right] \right] \\
    &= \mathbb{E}\Bigg[ \left(\int_a \mu(X, a) r_\delta(a \mid X) da  - \psi_R \right) \underbrace{\mathbb{E}\Big[ \mathbbm{1}(A \leq a^{\prime}) L(X, A) \mid X \Big]}_{(ii)} \Bigg] 
\end{align*}
and furthermore, using the shorthand $\mu_a = \mu(X, a)$, $\pi_a = \pi(a \mid X)$, and $r_\delta = r_\delta(a \mid X)$, note that
\begin{align*}
   (ii) &= \mathbb{E}\left[\left( \frac{\mathbbm{1}(A \leq a^{\prime})}{\int_{a \leq a^{\prime}} \pi_a da} \left(\int_{a \leq a^{\prime}}\mu_a r_\delta da - \frac{r_\delta(A \mid X) }{\pi(A \mid X)} \int_{a \leq a^{\prime}} \mu_a r_\delta da\right) \right)\mid X \right] \\
   &= \int_a \left( \frac{\mathbbm{1}(a \leq a^{\prime})}{\int_{a \leq a^{\prime}} \pi_a da} \left(\int_{a \leq a^{\prime}}\mu_a r_\delta da - \frac{r_\delta }{\pi_a} \int_{a \leq a^{\prime}} \mu_a r_\delta da\right) \right) \pi_a da \\
   &= \int_{a \leq a^{\prime}}\mu_a r_\delta da - \left( \int_{a \leq a^{\prime}} \mu_a r_\delta da\right) \frac{1}{\int_{a \leq a^{\prime}} \pi_a da}\int_{a \leq a^{\prime}}  r_\delta da \\
   &= 0
\end{align*}
since $\int_{a \leq a^{\prime}}  r_\delta da  = \int_{a \leq a^{\prime}} \pi_a da$. Similarly, it can be shown that
\begin{align*}
     \mathbb{E}\left[ \mathbbm{1}(A > a^{\prime})U(X, A)\left(\int_a \mu(X, a) r_\delta(a \mid X) da  - \psi_R \right) \right] = 0.
\end{align*}
Finally, observe that we may show the last cross term in $\mathbb{E}[D_{r,\mu} D_\psi]$,
\begin{align*}
    \mathbb{E}\left[\left(  \frac{r_\delta(A \mid X)}{\pi(A \mid X)} \mu(X, A) -  \int_a \mu(X, a) r_\delta(a \mid X) da\right)\left(\int_a \mu(X, a) r_\delta(a \mid X) da  - \psi_R \right) \right]
\end{align*}
evaluates to zero, since $ \mathbb{E}\left[\frac{r_\delta(A \mid X)}{\pi(A \mid X)}\left( \mu(X, A) - \int_a r_\delta(a \mid X) \mu(X, a)da \right) \mid X \right]$ can be written as 
\begin{align*}
   \mathbb{E}\left[ \int_a \mu(X, a) r_\delta(a \mid X) da - \int_a  \mu(X, a) r_\delta(a \mid X) da \right]  &= 0.
\end{align*}
Thus, it follows that $\var(\varphi_R(Z;\delta, a^\prime)) = \mathbb{E}[D^2_Y + D^2_{r, \mu} + D^2_\psi]$. From here, we show that $\var(\varphi_R(Z;\delta, a^\prime))$ is still governed by a linear dependence on $\delta$. To see this, we consider the dominating term $ \mathbb{E}\left[\left(r_\delta(A \mid X) / \pi(A \mid X) \right)^2 \right]$. Expanding out the expectation, it follows that $\mathbb{E}\left[\left(r_\delta(A \mid X) / \pi(A \mid X) \right)^2 \right]$ simplifies to
\begin{align*}
 \mathbb{E}\left[\omega^2_{a^\prime}\frac{\int^{a^\prime}_0 \exp(2 \delta a) \pi(a \mid X) da}{\left(\int^{a^\prime}_0 \exp(\delta a) \pi(a \mid X) da \right)^2} + (1-\omega_{a^\prime})^2\frac{\int^{1}_{a^\prime} \exp(2 \delta a) \pi(a \mid X) da}{\left(\int^{1}_{a^\prime} \exp(\delta a) \pi(a \mid X) da \right)^2} \right].
\end{align*}
Notably, both of these terms are functionally equivalent to the term
\begin{align*}
    \mathbb{E}\left[\frac{\int^1_0 \exp(2 \delta a) \pi(a \mid X) da}{\left(\int^1_0 \exp(\delta a) \pi(a \mid X) da \right)^2} \right]
\end{align*}
from \cref{nonpar_efficiency_bound}, since the limits of integration match each other in the numerator and denominator in both the lower and upper expectations. Thus, by following an identical proof to that of \cref{nonpar_efficiency_bound}, one can show that we now have the summation of two terms that are both linear in $\delta$. Therefore, it follows that $\var(\varphi_R(Z;\delta, a^\prime)) \asymp \delta$.

\end{proof}

\end{document}